\newcommand{\absenceSpectrum}{Red}
\newcommand{\existingSpectrum}{Black}
\newcommand{\possibleSpectrum}{Black}
\newlength\runit
\theoremstyle{plain}
\newtheorem{theorem}{Theorem}[section]
\newtheorem{lemma}[theorem]{Lemma}
\newtheorem{corollary}[theorem]{Corollary}
\newtheorem{proposition}[theorem]{Proposition}
\theoremstyle{definition}
\newtheorem{remark}[theorem]{Remark}
\newtheorem{assumption}[theorem]{Assumption}
\renewcommand{\Re}{\operatorname{Re}}
\renewcommand{\Im}{\operatorname{Im}}
\renewcommand{\leq}{\leqslant}
\renewcommand{\geq}{\geqslant}
\newcommand{\N}{\mathbb{N}} 
\newcommand{\C}{\mathbb{C}} 
\newcommand{\Z}{\mathbb{Z}} 
\newcommand{\R}{\mathbb{R}} 
\newcommand{\cF}{\mathcal{F}}
\newcommand{\cH}{\mathcal{H}}
\newcommand{\abs}[1]{\left\vert #1 \right\vert}
\newcommand{\pscal}[1]{\ensuremath{\left\langle #1 \right\rangle}} 
\newcommand{\pscalns}[1]{\ensuremath{\langle #1 \rangle}} 
\newcommand{\norm}[1]{\left\vert\kern-0.25ex\left\vert #1 \right\vert\kern-0.25ex\right\vert}
\newcommand{\normt}[1]{\left\vert\kern-0.25ex\left\vert\kern-0.25ex\left\vert #1 \right\vert\kern-0.25ex\right\vert\kern-0.25ex\right\vert}
\newcommand*{\transp}{^{\mkern-1.5mu\mathsf{T}}}
\newcommand{\prim}[1]{ #1^\prime} 
\newcommand{\di}{\,\mathrm{d}}
\DeclareMathOperator{\Id}{\mathds{1}}
\DeclareMathOperator{\sgn}{sign}
\DeclareMathOperator{\arctanh}{arctanh}
\DeclareMathOperator{\vect}{span}
\DeclareMathOperator{\ran}{ran}
\title[On spectral stability of Soler standing waves]{Results on the spectral stability of standing wave solutions of the Soler model in 1-D}
\author[D. Aldunate]{Danko Aldunate}
\address{Instituto de F\'isica, Pontificia Universidad Cat\'olica de Chile, Vicu\~na Mackenna 4860, Santiago 7820436, Chile.}
\email{dmaldunate@uc.cl}
\author[J. Ricaud]{Julien Ricaud}
\address{Department of Mathematics, LMU Munich, Theresienstrasse 39, 80333 Munich, and Munich Center for Quantum Science and Technology, Schellingstr. 4, 80799 Munich, Germany}
\email{julien.ricaud@polytechnique.edu}
\author[E. Stockmeyer]{Edgardo Stockmeyer}
\address{Instituto de F\'isica, Pontificia Universidad Cat\'olica de Chile, Vicu\~na Mackenna 4860, Santiago 7820436, Chile.}
\email{stock@fis.puc.cl}
\author[H. Van Den Bosch]{Hanne Van Den Bosch}
\address{Departamento de Ingenier\'ia Matem\'atica and Centro
de Modelamiento Matem\'atico (CNRS IRL 2807), Universidad de Chile, Beauchef 851, Santiago, Chile}
\email{hvdbosch@dim.uchile.cl}
\date{\today}
\begin{document}

\begin{abstract}
    We study the spectral stability of the nonlinear Dirac operator in dimension~$1+1$, restricting our attention to nonlinearities of the form $f(\pscal{\psi,\beta \psi}_{\C^2}) \beta$. We obtain bounds on eigenvalues for the linearized operator around standing wave solutions of the form $e^{-i\omega t} \phi_0$. For the case of power nonlinearities $f(s)= s |s|^{p-1}$, $p>0$, we obtain a range of frequencies $\omega$ such that the linearized operator has no \emph{unstable eigenvalues} on the axes of the complex plane. As a crucial part of the proofs, we obtain a detailed description of the spectra of the self-adjoint blocks in the linearized operator. In particular, we show that the condition $\pscal{\phi_0,\beta \phi_0}_{\C^2} > 0$ characterizes \emph{groundstates} analogously to the Schr{\"o}\-dinger case. 
\end{abstract}

\maketitle
\tableofcontents

\section{Introduction}
\noindent
We consider the nonlinear Dirac equation in dimensions~$1+1$ with Soler-type nonlinearity $f:\R\to \R$ and  initial data $\phi \in H^1(\R, \C^2)$, given by
\begin{equation}\label{eq:time-dep}
    \begin{cases}
        i\partial_t \psi =  D_m \psi - f\!\left(\pscal{\psi,\beta \psi}_{\C^2}\right) \beta \psi\,, \\
        \psi(\cdot,0) = \phi\,,
    \end{cases} 
\end{equation} 
where $\psi = (\psi_1, \psi_2)\transp : \R\times \R \to \C^2$ and $D_m := -i\alpha \partial_x + m \beta$ is the one-dimensional Dirac operator with mass $m>0$.
In this paper, we chose the convention $(\alpha,\beta)= (- \sigma_2,\sigma_3)$, where the $\sigma_j$'s are the standard Pauli matrices
\[
    \sigma_1=
    \begin{pmatrix}
        0&1\\
        1&0
    \end{pmatrix},
    \qquad
    \sigma_2=
    \begin{pmatrix}
        0&-i\\
        i&0
    \end{pmatrix}
    \quad \textrm{ and } \quad
    \sigma_3=
    \begin{pmatrix}
        1&0\\
        0&-1
    \end{pmatrix}.
\]
Other choices for $\alpha\neq\beta$ among~$\pm\sigma_1$, $\pm\sigma_2$, and~$\pm\sigma_3$ in this equation lead to unitary equivalent linearized operators (Dirac--Pauli theorem, see, e.g., \cite{Dirac-28a,Pauli-36,Thaller-92,ComGuaGus-14}). The advantage of the present choice is that no complex coefficients appear in the r.h.s.\ of~\eqref{eq:time-dep}.

This type of models describes the  dynamics of spinors 
self-interacting through a mass-like (also called Lorentz-scalar) potential. They were introduced in~\cite{Ivanenko-38} and further developed in~\cite{Soler-70}. The case $f(s)=s$ corresponds to the massive Gross--Neveu model~\cite{GroNev-74}. 

We assume that the nonlinearity $f$ satisfies
\begin{assumption}\label{Assumption_general_nonlinearity}
    $f\in \mathcal{C}^1(\R\setminus\{0\}, \R) \cap \mathcal{C}^0(\R, \R)$ with $f(0)=0$, $\lim_{+\infty} f\geq m$, and $f'>0$ on~$(0,+\infty)$. Moreover,~$\lim_{s\to0^+} s f'(s) = 0$.
\end{assumption}

The first part of this assumption ensures that equation~\eqref{eq:time-dep} admits standing wave solutions of the form
\begin{equation}\label{eq:solitary-wave-def}
    \psi(x,t) = e^{-i\omega t} \phi_0(x)
\end{equation}
for all $\omega \in (0,m)$, where the initial condition $\phi_0\equiv(v, u)\transp \in H^1(\R,\C^2)$ solves
\begin{equation}\label{eq:phi_0}
    \left(D_m -\omega \Id\right)\phi_0 - f(\pscal{\phi_0, \sigma_3 \phi_0}_{\C^2})\sigma_3 \phi_0 = 0\,,
\end{equation}
decays exponentially with rate $\sqrt{m^2-\omega^2}$, and can be chosen real-valued with $v$ even, $v(0)>0$, and $u$ odd.
See~\cite[Lemma 3.2]{BerCom-12} and~\cite{CazVaz-86} for higher dimensions.
In the following, we always assume that $\phi_0$ is this solution. As we will prove in Proposition~\ref{prop:basic_groundstate_properties}, $\pscal{\phi_0,\sigma_3 \phi_0}_{\C^2} = v^2 - u^2 > 0$ and $\phi_0 \in \mathcal{C}^2(\R)$.
Notice that $\phi_0 \equiv \phi_0(\omega)$ depends on~$\omega$, which is in $(0,m)$ throughout this paper. If $f$ is an even function, then $\sigma_1 \phi_0(\omega)$ solves~\eqref{eq:phi_0} for $-\omega$ and we can treat negative frequencies as well in this case. 

In this paper we study the spectral stability of the solitary wave $\phi_0$ and obtain new spectral properties of the linearized operator of equation~\eqref{eq:time-dep} around $\phi_0$. We work in the Hilbert space~$L^2(\R,\C^2)$ with scalar-product
\[
	\pscal{f,g} = \int_\R \pscal{f(x),g(x)}_{\C^2} \di x\,.
\]
In order to set up the problem, let us define the family  of operators in~$L^2(\R,\C^2)$
\begin{equation}\label{Def_L_mu}
    L_\mu\equiv L_\mu (\omega) := D_m -\omega\Id - f\!\left(\pscal{\phi_0,\sigma_3\phi_0}_{\C^2}\right)\sigma_3 -\mu Q, \quad \textrm{ with domain } H^1(\R,\C^2)\,,
\end{equation}
parametrized by $\mu \in \R$, where $Q$ acts as the matrix-valued multiplication operator
\begin{equation}\label{Def_of_Q}
    Q := f'(\pscal{\phi_0,\sigma_3\phi_0}_{\C^2})  \left(\sigma_3\phi_0\right) \left(\sigma_3\phi_0\right)\transp.
\end{equation}
When needed for clarity, we will highlight the dependency of $Q$ in $\omega$ by writing $Q_\omega$.
We also define the family of operators
\begin{equation}\label{eq:def_H}
    H_\mu \equiv H_\mu (\omega) := \begin{pmatrix} 0 & L_0 \\ L_\mu & 0 \end{pmatrix}, \qquad \textrm{ with domain } H^1(\R, \C^4)\,.
\end{equation}
For any $\mu\in\R$, $H_\mu$ is a well-defined closed operator, see e.g.~\cite{kato}, since $H_0$ is self-adjoint and the operator $H_\mu - H_0$ is bounded.

Actually, only the operators $L_0$ and~$L_2$ appear in the linearization of the nonlinear equation~\eqref{eq:time-dep}. Indeed, looking for solutions to~\eqref{eq:time-dep} of the form 
\[
	\psi(x,t)= e^{-i\omega t}\left(\phi_0(x) + \rho(x,t)\right),
\]
the formal linearization of~\eqref{eq:time-dep} around the solitary wave solution $\phi_0$ takes the form
\[
    i\partial_t \rho = L_0 \rho - 2 f'(\pscal{\phi_0,\sigma_3\phi_0}_{\C^2}) \Re\left(\pscal{\phi_0,\sigma_3\rho}_{\C^2}\right)\sigma_3\phi_0\,.
\]
Using that $\pscal{\alpha,\beta}\alpha=\alpha\overline\alpha\transp\beta$ for any $\alpha,\beta\in\C^2$, and that $\phi_0$ is real-valued, this equation can be written as $i\partial_t \rho = i L_0 \Im \rho + L_2 \Re \rho$ or, equivalently,
\begin{equation}\label{Linearization_H}
    i\partial_t
    \begin{pmatrix}
        \Re \rho \\
        i \Im \rho
    \end{pmatrix}
    = H_2
    \begin{pmatrix}
        \Re \rho \\
        i \Im \rho
    \end{pmatrix}.
\end{equation}

A novelty in our approach towards spectral stability is to extend the study to the spectral properties of the analytic operator families $\mu\mapsto L_\mu$ and $\mu\mapsto H_\mu$ for $\mu\geq0$.

Note that in the linearized equation~\eqref{Linearization_H}, we take the convention to keep the complex number $i$ multiplying the left hand side,
which seems at odds with the most usual convention in the PDE literature. As a consequence, $H_0$ is self-adjoint, and $H_\mu$ has essential spectrum on the \emph{real} axis for all $\mu\in\C$. With this convention, \emph{spectral stability} corresponds to the absence of eigenvalues of~$H_2$ with \emph{positive imaginary part} (compare, for instance, with the definition given in~\cite{BouCom-16}). Since, as we will see, the spectrum of~$H_2$ is symmetric with respect to the real axis, spectral stability amounts to all eigenvalues of~$H_2$ being real.

In the case of the nonlinear Schr{\"o}\-dinger and Klein--Gordon operators, the spectral and orbital stability of solitary waves are well-understood since major breakthroughs in the~`80~\cite{GriShaStr-87,Weinstein-86}. However, this is not the case for the Dirac analogues.
The main difficulties are related to the lack of positivity of the Dirac operator. A notable exception is~\cite{PelShi-14}, where the authors prove orbital stability in the one-dimensional massive Thirring model, which is completely integrable.

In the Schr{\"o}\-dinger case, spectral stability is crucial to characterize the orbital stability and, together with some additional assumptions, also implies asymptotic stability (see for instance~\cite{Cuccagna-11}). 
For the Dirac equation, this connection is not clear. Nevertheless, asymptotic stability of small amplitude solitary waves (that is, $\omega$ close to~$m$) in dimension~$3$ is shown in~\cite{BouCuc-12} to follow from spectral stability, under several technical assumptions.

Spectral stability in the Soler model with nonlinearity $f(s)= s |s|^{p-1}$ for dimensions~$1$, $2$, and $3$ is studied in~\cite{BouCom-19, ComGuaGus-14}. In these works, results are obtained in the \emph{non-relativistic limit} $\omega$ going to $m$, using the convergence to the corresponding nonlinear Schr{\"o}\-dinger equation.  In~\cite{BouCom-19}, an interval in $\omega$ of spectral stability is shown to exist for powers $1< p \leq 2$. With similar methods, the model is shown in~\cite{ComGuaGus-14} to be spectrally unstable in the non-relativistic limit for $p>2$ in dimension~$1$.
We are not aware of analytical results for the case $p \leq 1$, but spectral stability of solitary waves for the one-dimensional Soler model with $p=1$ (massive Gross--Neveu model) has been studied numerically in~\cite{BerCom-12,Lakoba-18}.
The general conjecture seems to be that this model is spectrally stable, although there has been some controversy~\cite{Lakoba-18} for the case of small frequencies~$\omega$. 

Still in dimension~$1$, but when translation invariance is broken by a potential added to the Dirac operator, spectral and asymptotic stability for large $\omega$ are proved in~\cite{PelSte-12}. On the other hand, when translation invariance is broken by switching off the nonlinearity away from the origin, a recent paper~\cite{BouCacCarComNojPos-20} shows spectral stability and instability by explicit computations. Interestingly, as long as the nonlinearity preserves a parity symmetry, all eigenvalues are real or purely imaginary.
For a more complete account on the spectral stability of Dirac equations, we refer the reader to the recent monograph~\cite{Book-BouCom-19} by Boussa\"{i}d and Comech, and the references therein.  

Before describing our results, we recall some well-known properties of the operators~$L_\mu$ and~$H_\mu$, see e.g.~\cite{Book-BouCom-19}. For the convenience of the reader, we give a proof of those in Section~\ref{Section_prelim}. See also Figures~\ref{fig:spectra} and~\ref{fig:spectrum_H_2}, where the spectra are sketched.

\begin{proposition}\label{WellKnownProperties}
    Let $f$ satisfy Assumption~\ref{Assumption_general_nonlinearity}, $\omega\in(0,m)$, $\mu\in\R$, $L_\mu$ be defined in~\eqref{Def_L_mu}, and $H_\mu$ be defined in~\eqref{eq:def_H}. Then,
    \begin{enumerate}[label=(\roman*)] 
        \item\label{WellKnownProperties_Lmu_ess_spec} $\sigma_{\rm ess}(L_\mu) = (-\infty, -m-\omega] \cup [m-\omega, +\infty)$. 
        \item\label{WellKnownProperties_L0_EVs} The spectrum of~$L_0$ is symmetric with respect to $-\omega$. $-2\omega$ and~$0$ are simple eigenvalues of~$L_0$ with respective eigenfunctions $\phi_{-2\omega} := \sigma_1 \phi_0$ and~$\phi_0$.
        \item\label{WellKnownProperties_L2_EVs} $-2\omega$ and~$0$ are simple eigenvalues of~$L_2$ with eigenfunctions $\phi_{-2\omega}$ and~$\partial_x \phi_0$.
        \item\label{WellKnownProperties_Hmu_ess_spec} $\sigma_{\rm ess}(H_\mu) =(-\infty, -m +\omega] \cup [m-\omega, +\infty)$.
        \item\label{WellKnownProperties_Hmu_symm_spect_axis} The spectrum of~$H_\mu$ is symmetric with respect to the real and imaginary axes.
        \item\label{WellKnownProperties_H2_EVs} $\pm 2\omega$ are eigenvalues of~$H_2$ and~$0$ is a double eigenvalue. 
    \end{enumerate}
\end{proposition}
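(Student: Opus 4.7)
The plan is to combine three tools: Weyl's theorem for the essential spectra, a discrete $\sigma_1$-conjugation symmetry for $L_0$, and the block structure of $H_\mu$ reducing its spectral analysis to properties of $L_0$ and $L_\mu$.

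For \ref{WellKnownProperties_Lmu_ess_spec} and \ref{WellKnownProperties_Hmu_ess_spec}, since $\phi_0$ decays exponentially the matrix-valued multiplications $f(\pscal{\phi_0,\sigma_3\phi_0}_{\C^2})\sigma_3$ and $Q$ are $D_m$-compact. Weyl's theorem (in its relatively compact, possibly non-self-adjoint version) identifies $\sigma_{\rm ess}(L_\mu)$ with $\sigma(D_m-\omega\Id) = (-\infty,-m-\omega]\cup[m-\omega,+\infty)$, computed by Fourier from the symbol $-\xi\sigma_2+m\sigma_3$ of $D_m$, whose eigenvalues are $\pm\sqrt{\xi^2+m^2}$. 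The same reasoning applied to $H_\mu$ as a perturbation of the free block $H_0^{\rm free}:=\left(\begin{smallmatrix}0 & D_m-\omega\Id \\ D_m-\omega\Id & 0\end{smallmatrix}\right)$ gives \ref{WellKnownProperties_Hmu_ess_spec}, after computing that $\sigma(H_0^{\rm free})$ consists of $\pm$ the eigenvalues of the symbol of $D_m-\omega\Id$, i.e., $(-\infty,-m+\omega]\cup[m-\omega,+\infty)$.

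For \ref{WellKnownProperties_L0_EVs}, the matrix $\sigma_1$ anticommutes with both $\alpha=-\sigma_2$ and $\beta=\sigma_3$, so $\sigma_1 L_0 \sigma_1 = -L_0 - 2\omega\Id$, yielding the symmetry of $\sigma(L_0)$ about $-\omega$. Since \eqref{eq:phi_0} reads $L_0\phi_0=0$, conjugating by $\sigma_1$ gives $L_0\phi_{-2\omega}=-2\omega\phi_{-2\omega}$ with $\phi_{-2\omega}=\sigma_1\phi_0$. For \ref{WellKnownProperties_L2_EVs}, differentiating \eqref{eq:phi_0} in $x$ and using $\partial_x\pscal{\phi_0,\sigma_3\phi_0}_{\C^2}=2\phi_0\transp\sigma_3\partial_x\phi_0$ (as $\phi_0$ is real-valued) yields $L_0\partial_x\phi_0=2Q\partial_x\phi_0$, i.e., $L_2\partial_x\phi_0=0$; and writing $\phi_0=(v,u)\transp$, the identity $(\sigma_3\phi_0)\transp\sigma_1\phi_0 = vu-uv=0$ gives $Q\phi_{-2\omega}=0$, so $L_2\phi_{-2\omega}=L_0\phi_{-2\omega}=-2\omega\phi_{-2\omega}$. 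Simplicity of the four eigenvalues follows from an ODE argument: in the spectral gap $(-m-\omega, m-\omega)$, the asymptotic system $(D_m-\omega-\lambda)\psi=0$ at $\pm\infty$ has characteristic exponents $\pm\sqrt{m^2-(\omega+\lambda)^2}\in\R\setminus\{0\}$, so the subspace of solutions decaying on each side is one-dimensional, hence the $H^1$-eigenspace is at most one-dimensional.

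For \ref{WellKnownProperties_Hmu_symm_spect_axis}, $L_\mu$ has real coefficients in the chosen basis, so $H_\mu$ commutes with complex conjugation, giving real-axis symmetry; and the involution $J:=\mathrm{diag}(\Id,-\Id)$ on $L^2(\R,\C^4)$ satisfies $JH_\mu J = -H_\mu$, yielding $\lambda\mapsto-\lambda$ symmetry. For \ref{WellKnownProperties_H2_EVs}, the block action $H_2\left(\begin{smallmatrix}\alpha\\\beta\end{smallmatrix}\right)=\left(\begin{smallmatrix}L_0\beta\\ L_2\alpha\end{smallmatrix}\right)$ gives $\ker H_2 = \ker L_2\oplus\ker L_0$, which is $2$-dimensional by \ref{WellKnownProperties_L0_EVs}--\ref{WellKnownProperties_L2_EVs} with basis $(\partial_x\phi_0,0)\transp$ and $(0,\phi_0)\transp$; and $\Psi_\pm:=(\phi_{-2\omega},\mp\phi_{-2\omega})\transp$ satisfy $H_2\Psi_\pm=\pm 2\omega\Psi_\pm$ by a direct check using $L_0\phi_{-2\omega}=L_2\phi_{-2\omega}=-2\omega\phi_{-2\omega}$. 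The main subtlety of the whole proposition is the simplicity claim in \ref{WellKnownProperties_L0_EVs}--\ref{WellKnownProperties_L2_EVs}, which rests on the asymptotic ODE analysis above; everything else is either a Weyl-type or symmetry argument, or a direct verification.
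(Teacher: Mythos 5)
Your proof is correct, and for most items (essential spectra via relative compactness, the $\sigma_1$-conjugation giving symmetry about $-\omega$, the computation of eigenfunctions for $L_0$, $L_2$, $H_2$, and the two involutions giving the symmetries of $\sigma(H_\mu)$) it follows the same route as the paper. The one genuine difference is the simplicity argument in \ref{WellKnownProperties_L0_EVs}--\ref{WellKnownProperties_L2_EVs}. You argue via the asymptotic constant-coefficient system: in the gap the characteristic exponents $\pm\sqrt{m^2-(\omega+\lambda)^2}$ are real and non-zero, so after an exponentially decaying perturbation the space of solutions decaying at $+\infty$ remains one-dimensional (a Levinson/exponential-dichotomy result), and the $L^2$-eigenspace is contained in it. The paper instead runs a Wronskian argument (Lemma~\ref{L_mu_eigenvalues_simple_and_analytic_in_mu}): writing the eigenvalue equation as a first-order system whose coefficient matrix is a linear combination of $\sigma_1,\sigma_2,\sigma_3$, the Pauli identity $\det(\sigma_k\phi_1\,|\,\phi_2)=-\det(\phi_1\,|\,\sigma_k\phi_2)$ forces $W'=\bigl(\det(\phi_1\,|\,\phi_2)\bigr)'\equiv 0$, hence $W\equiv 0$ for $L^2$ eigenfunctions, and uniqueness for the Cauchy problem gives proportionality. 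Your route is more generic and transfers to systems without this algebraic structure; the paper's is self-contained and avoids invoking an asymptotic dichotomy theorem. Either is acceptable, though if you use the dichotomy argument you should cite or sketch the perturbation result you are relying on, since it is a theorem and not a triviality.
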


\subsection{Main results}\label{Section_Main_results}
We start with a qualitative description of our results. Note that the operator family $\mu \mapsto H_\mu$, $\mu\in[0,2]$, has two types of eigenvalue branches $\mu \mapsto z(\mu)$: (a)~those that are on the axes of the complex plane, i.e, $z(\mu)^2 \in \R$ for all values of $\mu$ where the branch is well-defined, and (b)~those that leave the axes for some values of $\mu$. Remarkably, in the Schr{\"o}\-dinger case, the latter are excluded as a consequence of the semi-boundedness of the blocks in the corresponding  linearization operator. Still in the Schr{\"o}\-dinger case, linear stability is characterized by the~\emph{Vakhitov--Kolokolov criterion}~\cite{Kolokolov-73,VakKol-73}. We emphasize that the possible existence of eigenvalues off the axes is one of the main differences between Dirac and Schr{\"o}\-dinger cases.

In this paper, we provide
\begin{enumerate}[label=\arabic*.]
	\item detailed information on the spectra of $L_0$ and $L_2$;
	\item a generalization of the Vakhitov--Kolokolov criterion to the Dirac context, as a condition for eigenvalue branches of type~(a) to stay on the real axis;
	\item bounds on eigenvalue branches of type~(b) implying that they can only occur close to the outer~\emph{thresholds}~$\pm (m+\omega)$ when $\omega$ is sufficiently close to~$m$.
\end{enumerate}
The missing step for a proof of spectral stability is to completely exclude eigenvalue branches of type~(b). Still, we believe that this work is an important step forward. Indeed, for the one-dimensional case, eigenvalue branches of type~(b) have not been observed in the numerical results available in the literature.\footnote{According to numerical work in~\cite{CueBouComLanKevSax-18}, this type of eigenvalue branches do occur in higher dimensions for non-radial perturbations.}

As an application of these results, we obtain precise conditions on $\phi_0$ such that the corresponding $H_2$ has no eigenvalues on the imaginary axis.

We now give precise statements of the main theorems.
For point 1., we start by showing in Section~\ref{Section_prelim} that the solitary wave solutions are \emph{groundstates} in the sense that they correspond to the smallest eigenvalue (in absolute value) of the operator $L_0 + \omega\Id$.
\begin{theorem}\label{groundstate}
    Let $f$ satisfy Assumption~\ref{Assumption_general_nonlinearity}, $\omega\in(0,m)$, and $L_0$ be as in~\eqref{Def_L_mu}. Then, $L_0$ has no eigenvalues in~$(-2\omega, 0)$. 
\end{theorem}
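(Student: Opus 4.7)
The approach is to exploit a hidden supersymmetric factorization of $L_0$. By the symmetry of $\sigma(L_0)$ about $-\omega$ (part~\ref{WellKnownProperties_L0_EVs} of Proposition~\ref{WellKnownProperties}), it suffices to show that
\[
    M := L_0 + \omega\Id = i\sigma_2 \partial_x + (m-g)\sigma_3, \qquad g := f(\pscal{\phi_0, \sigma_3 \phi_0}_{\C^2}),
\]
has no eigenvalues in $(-\omega, \omega)$. A direct computation using $\{\sigma_1,\sigma_2\} = \{\sigma_1,\sigma_3\} = 0$ gives $\{M, \sigma_1\} = 0$, so conjugating by the self-adjoint unitary $U := (\sigma_1 + \sigma_3)/\sqrt{2}$ (which satisfies $U^2 = \Id$ and diagonalizes $\sigma_1$) produces the off-diagonal form
\[
    U M U = \begin{pmatrix} 0 & A^* \\ A & 0 \end{pmatrix}, \qquad A := \partial_x + (m-g),\ A^* = -\partial_x + (m-g),
\]
and hence $U M^2 U = \operatorname{diag}(A^*A, AA^*)$ with $A^*A = -\partial_x^2 + (m-g)^2 + g'$ and $AA^* = -\partial_x^2 + (m-g)^2 - g'$, two scalar Schr\"odinger operators on $L^2(\R)$ whose potentials tend to $m^2$ at infinity.

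I would then pull the known eigenfunction $\phi_0 = (v, u)^\transp$ of $M$ (at eigenvalue $\omega$, since $L_0 \phi_0 = 0$) through this factorization. Its image under $U$ is $(\xi, \eta)^\transp$ with $\xi = (v+u)/\sqrt 2$ and $\eta = (v-u)/\sqrt 2$, and a short check gives $A^*A\,\xi = \omega^2 \xi$ and $AA^*\,\eta = \omega^2 \eta$. By Proposition~\ref{prop:basic_groundstate_properties}, $v^2 - u^2 > 0$ pointwise, and combined with $v > 0$ this yields $|u| < v$; hence $\xi, \eta > 0$ everywhere on $\R$. It is classical that a one-dimensional Schr\"odinger operator with smooth bounded potential admitting a strictly positive $L^2$ eigenfunction has that eigenfunction as its unique ground state (standard Perron--Frobenius / no-nodes argument). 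Thus $\omega^2 = \min \sigma(A^*A) = \min \sigma(AA^*)$, whence $M^2 \geq \omega^2$, and the spectral mapping theorem gives $\sigma(M) \cap (-\omega, \omega) = \emptyset$. Translating back, $\sigma(L_0) \cap (-2\omega, 0) = \emptyset$, which is stronger than the claimed absence of eigenvalues.

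The only conceptually substantial ingredient here is the pointwise positivity $v^2 - u^2 > 0$ --- precisely the ``$\pscal{\phi_0,\sigma_3 \phi_0}_{\C^2} > 0$'' characterization of the groundstate emphasized in the paper --- which is isolated in Proposition~\ref{prop:basic_groundstate_properties}. Once that is granted, the argument above reduces to checking a purely algebraic SUSY factorization and invoking the Perron--Frobenius-type characterization for scalar Schr\"odinger operators. I expect the main obstacle to sit outside this proof: the real work goes into Proposition~\ref{prop:basic_groundstate_properties}, while the passage from ``Dirac positivity'' to a Schr\"odinger ground-state characterization is clean once one notices the anticommutation $\{M, \sigma_1\} = 0$.
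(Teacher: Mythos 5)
Your proof is correct and follows essentially the same route as the paper: the same conjugation by $U = (\sigma_1 + \sigma_3)/\sqrt{2}$, the same supersymmetric factorization of $(L_0 + \omega\Id)^2$ into the scalar Schr\"odinger operators $-\partial_x^2 + M^2 \mp M'$, and the same use of the pointwise positivity $v^2 - u^2 > 0$ from Proposition~\ref{prop:basic_groundstate_properties} to identify $v \pm u$ as nodeless groundstates. The only cosmetic differences are that the paper cites Sturm's oscillation theorem where you invoke the Perron--Frobenius no-nodes characterization (the same fact in one dimension), and the paper only needs that $v\pm u$ do not change sign rather than strict positivity.
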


Through a perturbation argument (see Lemma~\ref{A_priori_bounds_on_first_ev_of_Lmu}), this result implies that $L_2$ has at least one eigenvalue in $(-2\omega, 0)$. See Figure~\ref{fig:spectra} for an illustration.
\begin{figure}[ht]
	\centering
	\begin{tikzpicture}[scale=1.5,
	    eje/.style = {-latex },
	    known/.style = {ultra thick,\existingSpectrum},
	new/.style = { thick, violet}]

	    \shade[right color=\existingSpectrum!60,left color=\existingSpectrum!10] (-1,0) -- (-2,1) --(-2.2, 1) -- (-2.2,0)--cycle;
	    \shade[left color=\existingSpectrum!60,right color=\existingSpectrum!10] (1,0) -- (0,1) --(1.9, 1) -- (1.9,0)--cycle;
	    \fill[color=\absenceSpectrum!15] (0,0) -- (-2,1) --(0,1)--cycle; 
	
	    \draw[eje](-2.25, 0)--(2,0) node [below] {\footnotesize $\sigma(L_0(\omega))$};
	    \draw[eje](0,-0.2) --(0,1.2) node[right]{\footnotesize $\omega$} ;
	    
	    \draw[known](0,0) --(0,1) ;
	    \draw[known](0,0) --(-2,1) ;
	    \draw[shift={(1,0)}] (0pt,0pt) -- (0pt,-2pt) node[below] {\tiny$+m$};
	    \draw[shift={(-1,0)}] (0pt,0pt) -- (0pt,-2pt) node[below] {\tiny$-m$};
	    \draw[shift={(-4pt,-4pt)}] (0pt,0pt) -- (0pt,-0pt) node[] {\tiny$0$};
	    \draw[shift={(0,1)}] (2pt,0) -- (-2pt,0) node[xshift=-4pt, yshift=4pt] {\tiny$m$};  
	\end{tikzpicture} 
	%
	\begin{tikzpicture}[scale=1.5,
	    eje/.style = {-latex },
	    known/.style = {ultra thick,\existingSpectrum},
	new/.style = { thick, violet}]
	
	    \shade[right color=\existingSpectrum!60,left color=\existingSpectrum!10] (-1,0) -- (-2,1) --(-2.2, 1) -- (-2.2,0)--cycle;
	    \shade[left color=\existingSpectrum!60,right color=\existingSpectrum!10] (1,0) -- (0,1) --(1.9, 1) -- (1.9,0)--cycle;
	    \fill[color=\absenceSpectrum!15] (0,1)..controls(-3/8,0.66)..(0,0) -- (-2,1) --cycle;   
	    \draw[eje](-2.25, 0)--(2,0) node [below] {\footnotesize $\sigma(L_2(\omega))$};
	    \draw[eje](0,-0.2) --(0,1.2) node[right]{\footnotesize $\omega$} ;
	    \draw[known](0,0) --(0,1) ;
	    \draw[known](0,0) --(-2,1) ;
	    \draw[known] (0,1)..controls(-3/8,0.66)..(0,0);
	    \draw[shift={(1,0)}] (0pt,0pt) -- (0pt,-2pt) node[below] {\tiny$+m$};
	    \draw[shift={(-1,0)}] (0pt,0pt) -- (0pt,-2pt) node[below] {\tiny$-m$};
	    \draw[shift={(-4pt,-4pt)}] (0pt,0pt) -- (0pt,-0pt) node[] {\tiny$0$};
	    \draw[shift={(0,1)}] (2pt,0) -- (-2pt,0) node[xshift=-4pt, yshift=4pt] {\tiny$m$};  
\end{tikzpicture}

	\captionsetup{width=.9\textwidth}
	\caption{Spectra of the families of self-adjoint operators~$\{L_0(\omega)\}_\omega$ and~$\{L_2(\omega)\}_\omega$. The spectrum of one operator is the intersection of the sketch with an horizontal line. Shading gray: essential spectrum; solid black: known eigenvalues~$\{-2\omega, 0\}$ and, for~$L_2$, the eigenvalue in~$(-2\omega,0)$ we prove the existence~of; salmon: region where we prove the absence of eigenvalues.}
		\label{fig:spectra}
\end{figure}
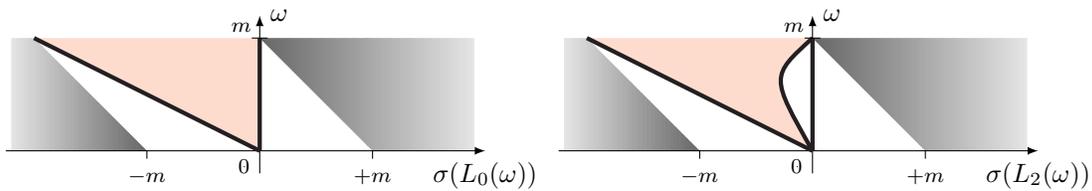
For power nonlinearities, we show in Section~\ref{Section_minmax_principle_and_Lmu} that $L_2$ has exactly one eigenvalue in $(-2\omega, 0)$ through a novel implementation of the minmax  principle for operators with a gap in the essential spectrum from~\cite{DolEstSer-00,GriSei-99, DolEstSer-06,EstLewSer-19,SchSolTok-19}. This minmax principle allows us to give the precise asymptotic behaviour of the eigenvalues of $L_\mu(\omega)$ non-relativistic limit $\omega$ going to $m$, for any power $p>0$. See Theorem~\ref{negative_spectrum_of_Lmu_nonrelativistic_limit} and the remarks below.

Point 2. crucially depends on the exact number of eigenvalues of $L_2$ in~$(-2\omega, 0)$.
\begin{theorem}\label{thm_VK_intro}
    Let $f$ satisfy Assumption~\ref{Assumption_general_nonlinearity}. Assume further that $\omega \in (0,m)$ and~$f$ are such that 
    \begin{enumerate}[label=\roman*),leftmargin=5\parindent]
        \item \label{it_L_2_hyp} $L_2$ has a single eigenvalue in~$(-2\omega, 0)$\,,
        \item\label{it_VK_cond} $\partial_\omega \norm{\phi_0(\omega)}^2_{L^2} \leq 0 $\,.
    \end{enumerate}
    Let $\mu \in (0,2)$ and~$H_\mu$ be the associated linearized operator defined in~\eqref{eq:def_H}.   
    Then, the algebraic multiplicity of zero, as an eigenvalue of~$H_\mu$, equals $2$.
\end{theorem}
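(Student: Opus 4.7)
The plan is to use the block structure $H_\mu=\begin{pmatrix} 0 & L_0\\ L_\mu & 0\end{pmatrix}$ to reduce the algebraic-multiplicity question to a scalar condition on the reduced resolvent of $L_\mu$ applied to~$\phi_0$, and then verify this condition by a monotonicity argument. Fix $\mu\in(0,2)$ throughout.

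\emph{Invertibility of $L_\mu$ and reduction.} The first step is to establish that $0\notin\sigma(L_\mu)$. By Theorem~\ref{groundstate}, the pointwise positivity $Q\geq 0$, and the gap min-max principle of~\cite{DolEstSer-00,DolEstSer-06,EstLewSer-19,SchSolTok-19}, the eigenvalue branches of $L_\mu$ inside the spectral gap $(-m-\omega,\,m-\omega)$ are analytic and strictly decreasing in~$\mu$. The branch $\lambda(\mu)$ starting at $\lambda(0)=0$ has initial slope $-\|\phi_0\|^{-2}\langle\phi_0,Q\phi_0\rangle<0$ and, by hypothesis~\ref{it_L_2_hyp}, must coincide at $\mu=2$ with the unique eigenvalue of $L_2$ in $(-2\omega,0)$; combined with eigenvalue counting in the gap, this also excludes any further branch crossing zero on $(0,2)$. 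Once $L_\mu$ is invertible, the identity $H_\mu^2=\operatorname{diag}(L_0 L_\mu,\,L_\mu L_0)$ gives $\ker(L_\mu L_0)=\operatorname{span}(\phi_0)$ and $\ker(L_0 L_\mu)=\operatorname{span}(L_\mu^{-1}\phi_0)$; a direct iteration shows these kernels stabilize at dimension one, so the algebraic multiplicity of~$0$ for $H_\mu$ equals~$2$, if and only if
\[
    F(\mu):=\langle\phi_0,\,L_\mu^{-1}\phi_0\rangle\neq 0.
\]

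\emph{Analysis of $F$.} The identity $\partial_\mu L_\mu^{-1}=L_\mu^{-1} Q L_\mu^{-1}$ gives $F'(\mu)=\langle L_\mu^{-1}\phi_0,\,Q L_\mu^{-1}\phi_0\rangle\geq 0$, so $F$ is real-analytic and non-decreasing on $(0,2)$. Using the spectral decomposition $L_\mu=\lambda(\mu)\,|\phi_\mu\rangle\langle\phi_\mu|+\widetilde L_\mu$, with $\lambda(\mu)\to 0^-$ and $\phi_\mu\to\phi_0/\|\phi_0\|$ as $\mu\to 0^+$, one obtains $F(\mu)\sim \|\phi_0\|^2\lambda(\mu)^{-1}\to-\infty$, so $F$ is non-constant and hence strictly increasing by real-analyticity. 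At $\mu=2$, differentiating~\eqref{eq:phi_0} with respect to~$\omega$ yields $L_2(\partial_\omega\phi_0)=\phi_0$; since $\phi_0\perp\ker L_2=\operatorname{span}(\partial_x\phi_0)$ by parity, the vector $L_2^{-1}\phi_0$ is well-defined modulo $\ker L_2$ and
\[
    F(2)=\langle\phi_0,\partial_\omega\phi_0\rangle=\tfrac{1}{2}\partial_\omega\norm{\phi_0}_{L^2}^2\leq 0
\]
by hypothesis~\ref{it_VK_cond}. Combining the two facts, $F(\mu)<F(2)\leq 0$ on $(0,2)$, so $F(\mu)\neq 0$ and the algebraic multiplicity of~$0$ for $H_\mu$ equals~$2$.

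The main difficulty lies in the invertibility step: while the monotonicity of gap eigenvalues under the non-negative perturbation~$\mu Q$ is natural, rigorously excluding any additional branch of $L_\mu$ touching zero on $(0,2)$ requires carefully combining the gap min-max principle with the sharp eigenvalue-counting hypothesis~\ref{it_L_2_hyp}, without which higher-multiplicity scenarios at zero would be a priori possible.
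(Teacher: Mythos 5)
Your proof follows the same overall strategy as the paper: reduce via $H_\mu^2=\operatorname{diag}(L_0L_\mu,L_\mu L_0)$ to the scalar condition $\langle\phi_0,L_\mu^{-1}\phi_0\rangle\neq0$, establish invertibility of $L_\mu$ on $(0,2)$ from hypothesis~\emph{\ref{it_L_2_hyp}} via Feynman--Hellmann monotonicity (this is precisely the paper's Lemma~\ref{Lemma_equiv_number_evs_L2_and_Lmu}), show $F$ is non-decreasing, and identify $F(2)=\tfrac12\partial_\omega\|\phi_0\|^2$ using Proposition~\ref{prop_differentiability}. The one genuine local difference is how you rule out the borderline case $\partial_\omega\|\phi_0\|^2=0$: the paper argues that $F\equiv0$ on $(0,2]$ would force $m_a(0,H_\mu)\geq3$ throughout, contradicted by stability of algebraic multiplicity of $H_\mu$ near $\mu=0$ (using $m_a(0,H_0)=2$); you instead show $F(\mu)\to-\infty$ as $\mu\to0^+$ by isolating the contribution of the branch $\lambda_1(\mu)\to0^-$ in the spectral decomposition of $L_\mu$, so $F$ is non-constant, hence strictly increasing by analyticity, hence $F(\mu)<F(2)\le0$. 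Your variant is slightly more elementary (no Riesz-projector/multiplicity-stability input) and gives the sharper conclusion $F<0$ on $(0,2)$; the paper's variant avoids having to control the blow-up of the Rayleigh quotient. Two small cautions: the eigenvalue branches of $L_\mu$ are a priori only non-increasing (not strictly decreasing, since $Q$ has a pointwise kernel), so the clean statement you want is that an analytic non-increasing branch either crosses zero at a single $\mu$ or is identically zero, the latter being excluded for the branch through the origin by the strict inequality $\partial_\mu\lambda_1(0)<0$; and when you write $F(2)=\langle\phi_0,L_2^{-1}\phi_0\rangle$ you should, as the paper does, view $L_2$ as an invertible operator on the even-parity subspace so the expression is unambiguous.
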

We recall that the algebraic multiplicity of an eigenvalue is the dimension of the generalized null space. See Section~\ref{Section_bif_from_origin} for details. Note that the differentiability of~$\norm{\phi_0(\omega)}^2_{L^2}$ is shown in Proposition~\ref{prop_differentiability}.

Conditions~\emph{\ref{it_L_2_hyp}} and~\emph{\ref{it_VK_cond}} are precisely the generalizations to the Dirac case of the classical hypotheses in~\cite{GriShaStr-87, Weinstein-86}. The results there apply to groundstates of the corresponding  Schr{\"o}\-dinger operator and to many other models with positive-definite kinetic term.

Condition~\emph{\ref{it_VK_cond}} is known as the Vakhitov--Kolokolov criterion. In~\cite{BerComSuk-15}, the authors study the meaning of the equality case in the Dirac context. They show that eigenvalue branches can \emph{pass through zero} as $\omega$ varies when equality holds in~\emph{\ref{it_VK_cond}}. To the best of our knowledge, our result is the first generalization of the \emph{inequality} condition to nonlinear models of Dirac type.

In Section~\ref{Section_Lower_Bound_Re_z2}, we address our point 3. We give bounds of the form $\Re z^2 \geq E^2$ for eigenvalue branches of type~(b). See Theorems~\ref{Bound_on_Re_z2_and_Im_z_general_statement} and~\ref{Thm_ineq_Re_z2} for precise statements. For $\omega$ not too far from $m$, a bound of the form $\Re (z^2) \ge E^2 >0$ holds, as sketched (red dashed lines) in Figure~\ref{fig:spectrum_H_2}.
\begin{figure}[ht]
    \centering

	\begin{tikzpicture}[scale=2.5,
	    eje/.style = {-latex },
	    known/.style = {very thick,\existingSpectrum},
	new/.style = { thick, violet}]
	
	    \fill[\absenceSpectrum!15, shift={(0, 0)}] (-1.5,1)..controls(-0.9,0.6) and (-0.9, 0.2) ..(-0.9,0.02)--(-0.02,0.02)--(-0.02,1);
	    \fill[\absenceSpectrum!15] (-2.5,1) -- (-2.5,0.8) --(2.5,0.8)--(2.5,1)--cycle;
	    \draw[\absenceSpectrum!50, thick, dashed, shift={(0, 0)}] (-1.5,1)..controls(-0.9,0.6) and (-0.9, 0.2) ..(-0.9,0.02);
	\begin{scope}[xscale=-1, yscale= 1]
	    \fill[\absenceSpectrum!15, shift={(0, 0)}] (-1.5,1)..controls(-0.9,0.6) and (-0.9, 0.2) ..(-0.9,0.02)--(-0.02,0.02)--(-0.02,1);
	    \draw[\absenceSpectrum!50, thick, dashed, shift={(0, 0)}] (-1.5,1)..controls(-0.9,0.6) and (-0.9, 0.2) ..(-0.9,0.02);
	\end{scope}
	\begin{scope}[xscale=1, yscale= -1]
	    \fill[\absenceSpectrum!15, shift={(0, 0)}] (-1.5,1)..controls(-0.9,0.6) and (-0.9, 0.2) ..(-0.9,0.02)--(-0.02,0.02)--(-0.02,1);
	    \fill[\absenceSpectrum!15] (-2.5,1) -- (-2.5,0.8) --(2.5,0.8)--(2.5,1)--cycle;
	    \draw[\absenceSpectrum!50, thick, dashed, shift={(0, 0)}] (-1.5,1)..controls(-0.9,0.6) and (-0.9, 0.2) ..(-0.9,0.02);
	\end{scope}
	\begin{scope}[xscale=-1, yscale= -1]
	    \fill[\absenceSpectrum!15, shift={(0, 0)}] (-1.5,1)..controls(-0.9,0.6) and (-0.9, 0.2) ..(-0.9,0.02)--(-0.02,0.02)--(-0.02,1);
	    \draw[\absenceSpectrum!50, thick, dashed, shift={(0, 0)}] (-1.5,1)..controls(-0.9,0.6) and (-0.9, 0.2) ..(-0.9,0.02);
	\end{scope}
	
	    \shade[right color=\existingSpectrum!35,left color=\existingSpectrum!10, shift={(0, 0)}] (-1.6,-0.06) rectangle (-2.5, 0.06);
	    \shade[left color=\existingSpectrum!35,right color=\existingSpectrum!10, shift={(0, 0)}] (1.6,-0.06) rectangle (2.5, 0.06);
	    \shade[right color=\existingSpectrum!60,left color=\existingSpectrum!10, shift={(0, 0)}] (-0.4,-0.03) rectangle (-2.5, 0.03);
	    \shade[left color=\existingSpectrum!60,right color=\existingSpectrum!10, shift={(0, 0)}] (0.4,-0.03) rectangle (2.5, 0.03);
	    \draw[eje](-2.55, 0)--(2.55,0) node[] {};
	    \draw[eje](0,-1.05) --(0,1.05) node[above]{} ;
	    \fill[known] (-1.2,0) circle (1.4pt) ;
	    \fill[known] (1.2,0) circle (1.5pt) ;
	    \fill[known] (0.04,0) circle (1.5pt) ;
	    \fill[known] (-0.04,0) circle (1.5pt) ;
	    \draw[shift={(1.6,-1pt)}] node[below] {\tiny$m+\omega$};
	    \draw[shift={(-1.6,-1pt)}] node[below] {\tiny$-m-\omega\phantom{-}$};
	    \draw[shift={(0.4,-1pt)}] node[below] {\tiny$m-\omega$};
	    \draw[shift={(-0.4,-1pt)}] node[below] {\tiny$-m+\omega\phantom{-}$};
	    \draw[shift={(-1.2,-1pt)}] node[below] {\tiny$-2 \omega$};
	    \draw[shift={(-2pt,-1pt)}] node[below] {\tiny$0$};
	    \draw[shift={(1.2,-1pt)}] node[below] {\tiny$+2 \omega$};
	    \node at (2,0.5){ $\sigma(H_2) \subset \C$}; 
	\end{tikzpicture}
	
	\captionsetup{width=.9\textwidth}
	\caption{
		Sketch of the spectrum of $H_2$, at a fixed $\omega$, as a subset of $\C$. \\
		Shading gray: essential spectrum (real); black disks: known eigenvalues; salmon: regions where we prove absence of eigenvalues; dashed red lines: hyperbola $\Re z^2 = E^2$. \\
		The lower bound $\Re z^2 \geq E^2$ in Theorem~\ref{Bound_on_Re_z2_and_Im_z_general_statement} means that no eigenvalues occur off the axes in the curved region, while Proposition~\ref{Prop_bound_on_Im_z} bounds $|\!\Im(z)|$. \\
		Sketch drawn with $\omega/m \sim 0.6$ and $E/m \sim 0.9$.
		}
	\label{fig:spectrum_H_2}
\end{figure}

Combining Theorem~\ref{thm_VK_intro} with the bounds of Section~\ref{Section_Lower_Bound_Re_z2} allows to exclude eigenvalues of $H_2$ on the imaginary axis. From now on we denote by $\normt{Q}$ the operator norm of~$Q$.
\begin{corollary}\label{Corollary_no_imaginary_evs}
	Under the conditions of Theorem~\ref{thm_VK_intro}, $H_2$ has no non-zero eigenvalues on the imaginary axis if
	\begin{equation} \label{eq:simplified_corollary}
    		\normt{Q} \le \frac{3\sqrt{3}}{2} \omega\,.
	\end{equation}
\end{corollary}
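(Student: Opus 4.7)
The plan is to combine the rigidity of the zero eigenvalue from Theorem~\ref{thm_VK_intro} with the off-axes lower bound from Theorem~\ref{Bound_on_Re_z2_and_Im_z_general_statement}, via an intermediate-value argument along the analytic family $\mu\mapsto H_\mu$. Argue by contradiction: suppose $z_\star=iy_\star$, $y_\star>0$, is an eigenvalue of $H_2$ on the imaginary axis (the case $y_\star<0$ is symmetric by Proposition~\ref{WellKnownProperties}~\emph{\ref{WellKnownProperties_Hmu_symm_spect_axis}}). Since $H_\mu-H_0$ is bounded and depends linearly on $\mu$, analytic perturbation theory yields an eigenvalue branch $\mu\mapsto z(\mu)$ with $z(2)=z_\star$, defined and continuous on some maximal half-open subinterval $(\mu_\flat,2]$, with $\mu_\flat\in[0,2)$. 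Either $\mu_\flat=0$, and then $z(0)$ is an eigenvalue of the self-adjoint operator $H_0$, hence real; or the branch accumulates in the real essential spectrum $(-\infty,-(m-\omega)]\cup[m-\omega,+\infty)$ of $H_{\mu_\flat}$, so the limit $z(\mu_\flat)$ is again real.

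Consider the continuous function $g(\mu):=\Re\bigl(z(\mu)^2\bigr)$ on $[\mu_\flat,2]$. Since $z(\mu_\flat)\in\R$, we have $g(\mu_\flat)=z(\mu_\flat)^2\geq 0$, whereas $g(2)=-y_\star^2<0$. The intermediate value theorem therefore provides $\mu_\star\in(\mu_\flat,2)$ with $g(\mu_\star)=0$, equivalently $z(\mu_\star)^2\in i\R$.

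Two cases then close the argument. If $z(\mu_\star)=0$, the analytic branch $z(\cdot)$ is non-constant (it connects the real $z(\mu_\flat)$ to the purely imaginary $iy_\star$), so its zero at $\mu_\star$ is isolated and it cannot coincide with the persistent two-dimensional generalized kernel that Theorem~\ref{thm_VK_intro} produces at every $\mu\in(0,2)$; hence the branch contributes at least one extra generalized eigenvector at $0$, pushing the algebraic multiplicity of $0$ for $H_{\mu_\star}$ to at least $3$ and contradicting Theorem~\ref{thm_VK_intro}. Otherwise $z(\mu_\star)^2\in i\R\setminus\{0\}$, forcing $\arg z(\mu_\star)\in\{\pm\pi/4,\pm 3\pi/4\}$, so that $z(\mu_\star)$ lies off the coordinate axes; Theorem~\ref{Bound_on_Re_z2_and_Im_z_general_statement} then yields $\Re z(\mu_\star)^2\geq E^2$, and the quantitative hypothesis $\normt{Q}\leq\tfrac{3\sqrt{3}}{2}\omega$ is precisely the one tailored to ensure that $E^2>0$, which contradicts $g(\mu_\star)=0$. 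The main technical hurdle will be to select a genuinely single-valued continuation $\mu\mapsto z(\mu)$ across possible branch points of the analytic family and to justify that the limit at $\mu_\flat$ is real; once these points are secured, the sign-change argument for $g$ is robust and both cases above produce the contradiction.
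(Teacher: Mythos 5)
Your overall strategy is the same as the paper's: combine the cone bound $\Re z^2\geq 0$ for off-axis eigenvalues with the fact that the algebraic multiplicity of $0$ stays equal to $2$ on $(0,2)$, so that an eigenvalue branch emanating from the (real) spectrum of $H_0$ cannot reach a non-zero imaginary point. However, there are two genuine problems with the way you close the argument.

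First, the theorem you invoke in Case~2 is the wrong one. Theorem~\ref{Bound_on_Re_z2_and_Im_z_general_statement} is the qualitative ``for every $E$ there exists $\omega_E$'' statement; it does not convert the hypothesis $\normt{Q}\leq\tfrac{3\sqrt3}{2}\omega$ into an explicit lower bound on $\Re z^2$. The correct ingredient is Theorem~\ref{Thm_ineq_Re_z2} (or equivalently the final display of its proof), evaluated at $\alpha=1/2$. There, using $t>\omega$ and $\mu\leq 2$, one checks via the AM--GM inequality that the restriction~\eqref{Lemma_ineq_Re_z2_restriction} holds for all $\eta>0$ precisely under~\eqref{eq:simplified_corollary}, and the resulting conclusion is $\Re z^2\geq 0$.

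Second, and more substantively, that conclusion is $\Re z^2\geq 0$, not $\Re z^2>0$, so your Case~2 as written is not a contradiction: $g(\mu_\star)=0$ with $z(\mu_\star)$ on the diagonals $\arg z\in\{\pm\pi/4,\pm 3\pi/4\}$ is perfectly compatible with the bound. Your remark that the hypothesis is ``precisely the one tailored to ensure $E^2>0$'' is unjustified: the hypothesis was calibrated to give the non-strict bound, and obtaining strict positivity requires an extra observation (for instance that $t>\omega$ gives $\mu\normt{Q}/(t+\omega)<\normt{Q}/\omega\leq 3\sqrt{3}/2$ strictly, which propagates to a strict bound $\Re z^2>0$ whenever $\normt{Q}>0$, the case $\normt{Q}=0$ being trivial). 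The paper sidesteps this issue entirely with a connectedness argument: once all eigenvalues lie in the closed double cone $\{|\Re z|\geq|\Im z|\}$ together with $i\R$, removing the origin disconnects this set into the punctured cone and the two imaginary rays, so a continuous branch from the cone to a non-zero imaginary point must pass through $0$ — excluded by the multiplicity argument. If you want to keep the intermediate-value framing, you should instead take $\mu_1:=\inf\{\mu:z(\mu)\in i\R\}$ and show $z(\mu_1)=0$ by continuity (a non-zero imaginary $z(\mu_1)$ would have $\Re z(\mu_1)^2<0$, contradicting the limit of $\Re z(\mu)^2\geq 0$ along $\mu\nearrow\mu_1$ from the off-axis region), which does not require the strict inequality.
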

We emphasize that condition~\eqref{eq:simplified_corollary} is a simplified version of the more elaborate bounds available in Section~\ref{Section_Lower_Bound_Re_z2}. Moreover, it is always satisfied in the non-relativistic limit $\omega$ going to $m$ since $\normt{Q}$ goes to $0$ in that limit. See Proposition~\ref{prop:basic_groundstate_properties}.

In section~\ref{Section_Power_nonlinearity_generalities}, we specify to power-like nonlinearities $f(s)= s |s|^{p-1}$, $p>0$, and to the Gross--Neveu model $p=1$ in Section~\ref{Section:Gross_Neveu}. We show that, for these nonlinearities, the Vakhitov--Kolokolov criterion holds on one hand for all $\omega \in (0,m)$ when $0< p \leq 2$, and on the other hand for $0<\omega<\omega_c(p)< m$ when $p > 2$.
We also compute $\normt{Q}$ as a function of $(p, \omega)$ and find a region in the $(p,\omega)$-plane where \eqref{eq:simplified_corollary} is satisfied. In the intersection of these regions, plotted in Figure~\ref{Power_nonlin_admissible_ranges_Re_z2_and_VK}, there are no eigenvalues on the imaginary axis.
\begin{figure}[ht]
    \captionsetup[subfigure]{labelformat=empty}
    \begin{subfigure}{.45\columnwidth}
        \centering
        \includegraphics[width=\columnwidth]{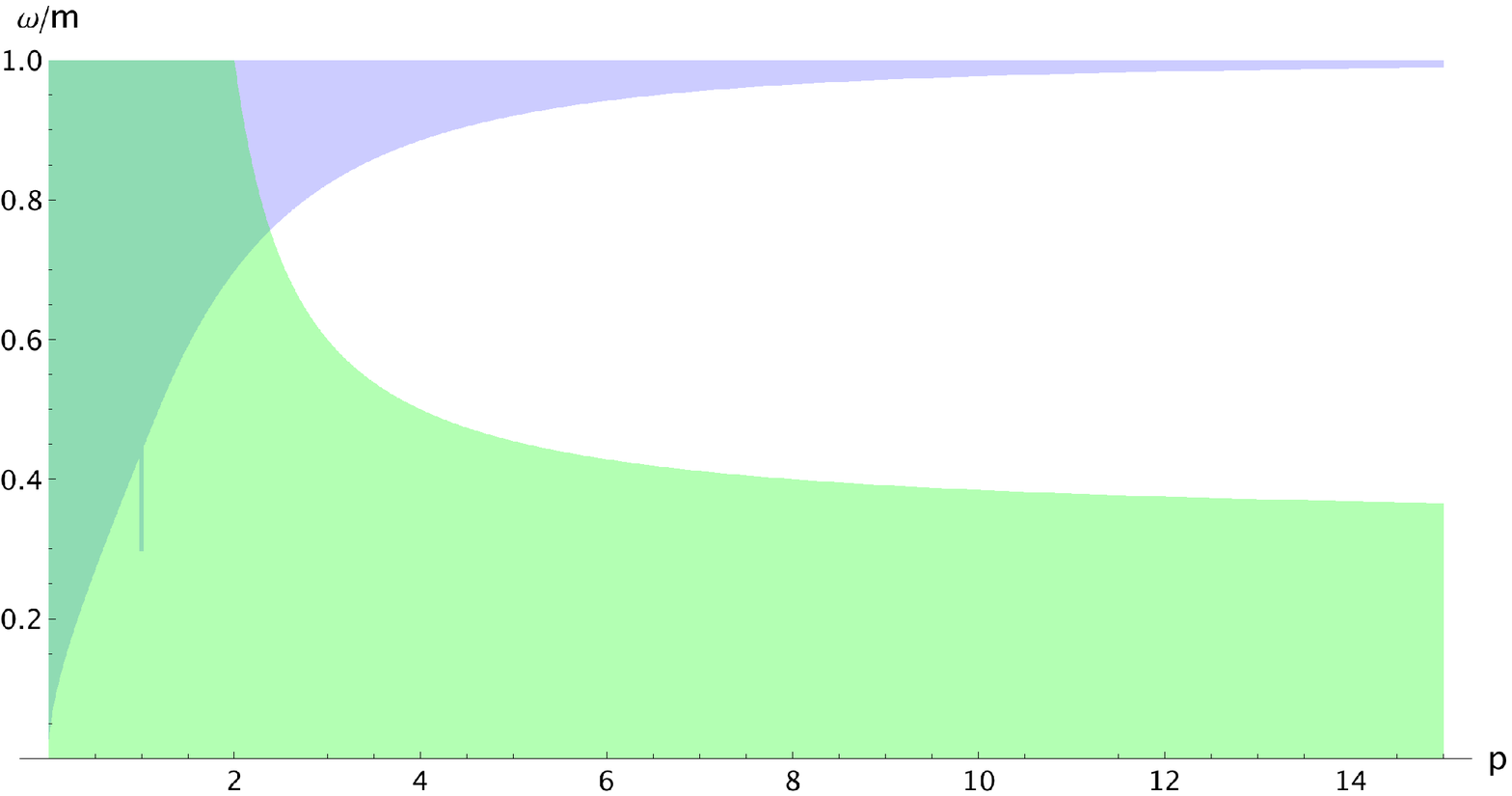}%
        \caption{$p \in (0,15)$}%
    \end{subfigure}%
    \begin{subfigure}{.45\columnwidth}
        \centering
        \includegraphics[width=\columnwidth]{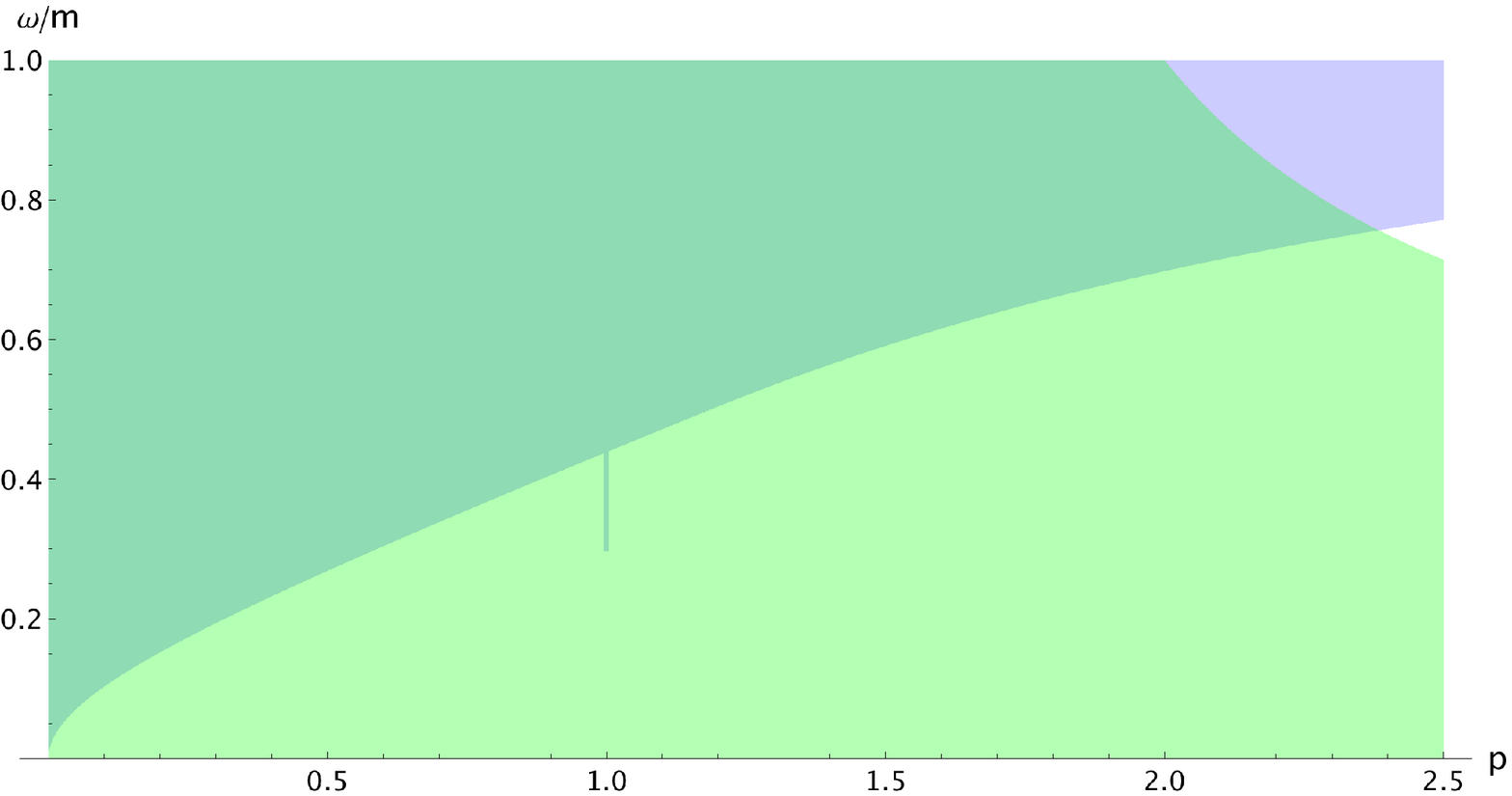}%
        \caption{Zoom onto~$p \in (0,2.5)$}%
    \end{subfigure}
    
	\captionsetup{width=.9\textwidth}
    \caption{
        Blue region (upper left corner): parameters $(p,\omega/m)$ for which we show that the $\Re z^2 \geq 0$ holds. Note that at $p=1$, we obtain an improved range in $\omega$. \\
        Green region (lower left corner): parameters for which we show that the Vakhitov--Kolokolov criterion holds. \\
        Intersection of these regions: $H_2$ has no non-zero purely imaginary eigenvalues.
	}
	\label{Power_nonlin_admissible_ranges_Re_z2_and_VK}
\end{figure}

\subsection{Outline of the paper} 
We start with basic properties of the groundstates and the operators in Section~\ref{Section_prelim}.
We then move on to the key ingredients of the paper, that hold for general nonlinearities. In Section~\ref{Section_bif_from_origin}, after recalling some basic facts about analytic perturbation theory for non-selfadjoint operators, we prove Theorem~\ref{thm_VK_intro} and show how it implies the absence of non-zero purely imaginary eigenvalues.
In Section~\ref{Section_Lower_Bound_Re_z2}, we obtain bounds depending on $\normt{Q}$ for eigenvalues lying off the axes of the complex plane and prove Corollary~\ref{Corollary_no_imaginary_evs}.

In the remainder of the paper, we specify to the power case $f(s) = s |s|^{p-1}$, $p>0$. In Section~\ref{Section_Power_nonlinearity_generalities}, we compute $\partial_\omega \norm{\phi_0(\omega)}_{L^2}^2$ to check the validity of the Vakhitov--Kolokolov criterion. We also compute $\normt{Q}$ and plug it into the bounds from the previous section.
We then specify to the case $p=1$ in Section~\ref{Section:Gross_Neveu}, where we show that $L_0$ has only two eigenvalues, and how this allows to exclude non-zero eigenvalues on the imaginary axis for all $\omega \in (0.297 m,m)$.

In the final section, we prove that $L_2$ has exactly one eigenvalue in $(-2 \omega, 0)$ for power nonlinearities $f(s) = s |s|^{p-1}$, $p>0$. 

\medskip

\noindent{\bf Acknowledgments.} We thank S\'ebastien Breteaux, J\'er\'emy Mougel, and Phan Th\`anh Nam for helpful discussions, and Matías Moreno for checking some of the computations. We thank Andrew Comech for his remarks on the preprint version. 
E.S. thanks Thomas S{\o}rensen and the Center for Advanced Studies at LMU for their hospitality during his stay in Munich where part of this work took place. We thank the referees for many suggestions that improved the manuscript.
The research visits leading to this work where partially funded by ANID (Chile) project REDI--170157.
J.R. received funding from the Deutsche Forschungsgemeinschaft (DFG, German Research Foundation) under Germany's Excellence Strategy (EXC-2111-390814868).
E.S. and H.VDB. have been partially funded by ANID (Chile) through Fondecyt  project
\#118--0355.
H.VDB. acknowledges support from ANID through Fondecyt projects \#318--0059 and \#11220194 and from CMM through ANID PIA AFB17000, \#ACE210010 and project France-Chile MathAmSud EEQUADDII 20-MATH-04.

\section{Preliminaries}\label{Section_prelim}

This section groups a number of results about~$\phi_0$ and the linearization operators. We start by establishing properties of~$\phi_0$. Then we identify the well-known eigenvalues and establish the symmetry properties of the spectra, both recalled in Proposition~\ref{WellKnownProperties}. After that, we move to the proof of Theorem~\ref{groundstate}, followed by the simplicity of the eigenvalues of $L_\mu$ that will be heavily used in the remainder of the paper, before completing the proof of Proposition~\ref{WellKnownProperties}. Next, we prove Proposition~\ref{Prop_bound_on_Im_z} that gives a simple bound on the imaginary part of eigenvalues of $H_\mu$. 
Finally, we use the spectral properties of $L_2$ to prove differentiability of $\omega \mapsto \norm{\phi_0(\omega)}^2$, necessary to state Theorem~\ref{thm_VK_intro}.

Throughout this section, we assume that $f$ satisfies Assumption~\ref{Assumption_general_nonlinearity} and we denote $\norm{\psi}_{\C^2}:=\sqrt{|\psi_1|^2+|\psi_2|^2}$ for any $\psi=(\psi_1,\psi_2)\transp$.

For~$\phi_0$, we have the following results.
\begin{proposition} \label{prop:basic_groundstate_properties}
    Let $f$ satisfy Assumption~\ref{Assumption_general_nonlinearity}.
    Then there exists a unique family $\{\phi_0(\omega)\}_{\omega \in (0,m)}$ of $H^1(\R,\R^2)$ solutions $\phi_0(\omega) = (v_\omega, u_\omega)\transp$ to~\eqref{eq:phi_0}, such that $v_\omega$ is even with $v_\omega(0) >0$ and $u_\omega$ is odd. These solutions $\phi_0(\omega,\cdot)$ are in $\mathcal{C}^2(\R)$ and satisfy
    \begin{enumerate}[label=(\roman*)]
        \item \label{it:positivity}$\pscal{\phi_0(\omega, \cdot), \sigma_3 \phi_0(\omega, \cdot) }_{\C^2}=v_\omega^2-u_\omega^2>0$ on~$\R$;
        \item \label{it:exponential_decay} $\norm{\phi_0(\omega, x)}_{\C^2}$ decays exponentially in $x$;
          \item \label{it:decay_of_Q} $\norm{Q(x)}_{\C^2\to \C^2}$ is bounded and decays exponentially in $x$;
         \item \label{it:decay_in_Nrel_limit} $\lim_{\omega \to m} \norm{\phi_0(\omega, \cdot)}_{L^\infty} = \lim_{\omega \to m}\normt{Q_\omega} = 0$.
    \end{enumerate}
\end{proposition}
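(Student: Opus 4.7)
The plan is to reduce~\eqref{eq:phi_0} to a planar first-order ODE, extract a conserved quantity that encodes the qualitative behavior of the orbit, and then read off the asymptotics in $x$ and in $\omega$. Writing $\phi_0 = (v,u)\transp$, $N := v^2 - u^2$, $\rho^2 := v^2 + u^2$, and $F(s) := \int_0^s f(t)\,\di t$, with $\alpha = -\sigma_2$ and $\beta = \sigma_3$ the profile equation~\eqref{eq:phi_0} is equivalent to the real system
\begin{equation*}
u' = (\omega - m + f(N))\,v\,, \qquad v' = -(\omega + m - f(N))\,u\,.
\end{equation*}
Existence and uniqueness of an $H^1$ solution with $v$ even, $v(0)>0$, and $u$ odd follow from the shooting analysis of~\cite[Lemma~3.2]{BerCom-12} (see also~\cite{CazVaz-86}); once part~\ref{it:positivity} is in hand, $\phi_0 \in \mathcal{C}^2(\R)$ follows by a direct bootstrap, since the right-hand side of the system is $\mathcal{C}^1$ along the orbit.

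The central computational step is to check that
\begin{equation*}
E(x) := \omega\,\rho^2(x) - m\,N(x) + F(N(x))
\end{equation*}
is conserved, via the identities $N' = -4\omega uv$ and $(\rho^2)' = -4uv\,(m-f(N))$. Since $\phi_0 \in H^1$ forces $E\to 0$ at $\pm\infty$, one obtains the pointwise relation $\omega\,\rho^2 = mN - F(N)$ on~$\R$. Part~\ref{it:positivity} then follows by contradiction: if $N(x_0)=0$ at some point, then $F(N(x_0))=0$, so the identity forces $\phi_0(x_0)=0$, and ODE uniqueness gives $\phi_0 \equiv 0$, contradicting $v(0)>0$. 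For part~\ref{it:exponential_decay}, because $N\to 0$ at infinity and $f(0)=0$, the coefficient $f(N)$ vanishes at infinity, so the system is an asymptotically constant-coefficient perturbation of the linear one whose eigenvalues are $\pm\sqrt{m^2-\omega^2}$; a stable-manifold or Levinson-type argument then gives $\norm{\phi_0(x)}_{\C^2} \leq C\,e^{-\sqrt{m^2-\omega^2}|x|}$.

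For part~\ref{it:decay_of_Q}, the matrix $(\sigma_3\phi_0)(\sigma_3\phi_0)\transp$ is rank-one with operator norm $v^2+u^2 = \rho^2$, so $\norm{Q(x)}_{\C^2\to\C^2} = f'(N)\,\rho^2 = \omega^{-1} f'(N)\bigl(mN - F(N)\bigr)$. Boundedness on~$\R$ follows from continuity of $f'$ on $(0,\infty)$ together with the hypotheses $\lim_{s\to 0^+} sf'(s) = 0$ and $F(s)/s \to 0$ as $s \to 0^+$; combined with the exponential decay of $N$ from~\ref{it:exponential_decay} this yields the claimed exponential decay of $\norm{Q}$ at infinity. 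For part~\ref{it:decay_in_Nrel_limit}, evaluating the conservation law at $x=0$ (where $u(0)=0$) gives $F(v(0)^2)/v(0)^2 = m-\omega$; since $F(s)/s = \int_0^1 f(ts)\,\di t$ is continuous, strictly increasing on $(0,\infty)$, vanishes at $0$, and has limit $\geq m$ at $+\infty$, this equation uniquely determines $v(0)$ and forces $v(0) \to 0$ as $\omega \to m$. Since $\rho^2 = \omega^{-1}(mN - F(N))$ is a monotone function of $N$ on $[0, v(0)^2]$ once $v(0)^2$ is small enough, $\rho$ is maximized at $x=0$, so $\norm{\phi_0}_{L^\infty} = v(0) \to 0$, and $\normt{Q_\omega} \leq m\omega^{-1} \sup_{0<s\leq v(0)^2} sf'(s) \to 0$ by the same hypothesis on $f$.

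The step I expect to be the most delicate is the precise exponential rate in part~\ref{it:exponential_decay} under the weak regularity of $f$ near $0$: a textbook stable-manifold application needs enough smoothness to treat $f(N)$ as a perturbation of the linearization, so a Duhamel-type bootstrap starting from the a priori $H^1$ decay may be needed to reach the sharp rate $\sqrt{m^2-\omega^2}$, and carrying this rate over into part~\ref{it:decay_of_Q} may require a quantitative modulus of continuity for $s\mapsto s f'(s)$ at $0$ beyond the bare limit assumption.
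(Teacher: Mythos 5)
Your proof is correct and rests on the same central mechanism as the paper: the conserved Hamiltonian $\omega\rho^2 - mN + F(N) \equiv 0$ (the paper's $h_\omega(u_\omega,v_\omega)=0$) together with the strict monotonicity of $\tilde F(s) := F(s)/s$, which determine $v(0)^2 = \tilde F^{-1}(m-\omega)$ and drive the $\omega\to m$ limit. You differ in execution at two points, each defensibly. For~\ref{it:positivity}, you argue by a Gronwall/ODE-uniqueness step once the conservation law forces $\phi_0(x_0)=0$ wherever $N(x_0)=0$; the paper instead first establishes $u>0$ on $(0,\infty)$ (via $u'(0)>0$ and the no-periodic-orbit argument) and then reads off $N>0$. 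Both work; yours is leaner. For~\ref{it:exponential_decay}, you invoke stable-manifold/Levinson theory, while the paper stays entirely inside the Hamiltonian and integrates the scalar inequality $(d_\omega)' = -2\bigl((md_\omega - F(d_\omega))^2 - \omega^2 d_\omega^2\bigr)^{1/2} \leq -\epsilon\, d_\omega$ for $d_\omega$ small. The latter is more elementary, imports no linear asymptotics machinery, and sidesteps the regularity concern you raise; it is also all that is needed, since the proposition claims only exponential decay, not the sharp rate $\sqrt{m^2-\omega^2}$.

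Two smaller points. In~\ref{it:decay_in_Nrel_limit}, concluding that $\rho$ is maximized at $x=0$ needs not only that $N\mapsto mN - F(N)$ is increasing near $0$ (which you note for $\omega$ close to $m$) but also that $N$ itself attains its maximum at $x=0$; this unstated step follows from the same conservation law via $\rho^2\geq N$, since then $\tilde F(N(x))\leq m-\omega=\tilde F(N(0))$ and strict monotonicity of $\tilde F$ give $N(x)\leq N(0)$ (the paper obtains it instead from $(d_\omega)'\leq 0$ on $[0,\infty)$). Finally, your observation that the bare hypothesis $\lim_{s\to 0^+} sf'(s)=0$ yields decay of $\norm{Q(x)}_{\C^2\to\C^2}$ to zero but not, absent a modulus of continuity, at an exponential rate, is a fair reading of~\ref{it:decay_of_Q}: the paper's argument in fact only delivers vanishing at infinity, which is what is subsequently used (relative compactness of $Q$ to identify $\sigma_{\rm ess}$).
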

Most of these results are well-known from the literature, though not necessarily formulated with our hypotheses in Assumption~\ref{Assumption_general_nonlinearity}. Also, for the reader only interested in power nonlinearities, the properties can be easily checked from the explicit expressions given in Section~\ref{Section_Power_nonlinearity_generalities}. We give references and proofs in Appendix~\ref{Appendix_ODE}.
 
The next statement groups known results about the eigenvalues of $L_\mu$ and $H_\mu$.

\begin{proposition} \label{Prop:eigenvalues_symmetry}
    Let $\mu\in\R$ and $\omega\in(0,m)$.
    \begin{enumerate}[label=(\roman*)]
        \item\label{it:sym_L_0} The spectrum of~$L_0$ is symmetric with respect to $-\omega$.
        \item\label{it:eigenvalues_L_0} $-2\omega$ and~$0$ are eigenvalues of~$L_0$ with eigenfunctions $\phi_{-2\omega} := \sigma_1 \phi_0$ and~$\phi_0$.
        \item\label{it:2omega} $Q\phi_{-2\omega}= 0$ and~$-2\omega$ is an eigenvalue of~$L_\mu$ with eigenfunction $\phi_{-2\omega}$.
        \item\label{it:eigenvalues_H_2} $-2\omega$ and~$0$ are eigenvalues of~$L_2$ with eigenfunctions $\phi_{-2\omega}$ and~$\partial_x \phi_0$. As a consequence, $0$ and~$\pm2\omega$ are eigenvalues of~$H_2$, with respective eigenfunctions
            \[
                \begin{pmatrix}
                    0  \\
                    \phi_0
                \end{pmatrix},\,
                \begin{pmatrix}
                    \partial_x \phi_0 \\
                    0 
                \end{pmatrix}
                \textrm{ and }
                \begin{pmatrix}
                    \phi_{-2\omega} \\
                    \mp \phi_{-2\omega}
                \end{pmatrix}.
            \]
        \item The spectrum of~$H_\mu$ is symmetric with respect to the real and imaginary axis.
    \end{enumerate}
\end{proposition}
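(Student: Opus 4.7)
The plan is to handle the five items essentially independently, exploiting two discrete symmetries of $L_\mu$ (an antiunitary one, and the reality of the matrix representation in the chosen convention $\alpha=-\sigma_2$, $\beta=\sigma_3$), together with explicit algebraic identities involving the Pauli matrices.

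For \emph{(i)} I would introduce the antiunitary operator $C:=\sigma_1 K$, where $K$ denotes complex conjugation on $L^2(\R,\C^2)$. Using that $K$ anticommutes with $-i\alpha\partial_x=i\sigma_2\partial_x$ (since $\sigma_2$ has purely imaginary entries), that $\sigma_1$ anticommutes with $\sigma_2$ and $\sigma_3$, and that $f(\pscal{\phi_0,\sigma_3\phi_0}_{\C^2})$ is a real multiplication operator, a short computation gives
\[
C(L_0+\omega\Id)C^{-1}=-(L_0+\omega\Id).
\]
Since $L_0$ is self-adjoint (all eigenvalues real), and $C$ is antiunitary, this identity propagates to the full spectrum and yields symmetry of $\sigma(L_0)$ about $-\omega$. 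For \emph{(ii)}, the fact that $L_0\phi_0=0$ is just the stationary equation~\eqref{eq:phi_0} rewritten, and applying the intertwining above to $\phi_0$ (which is real, so $C\phi_0=\sigma_1\phi_0$) produces the eigenfunction $\phi_{-2\omega}=\sigma_1\phi_0$ for the eigenvalue $-2\omega$.

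For \emph{(iii)}, the key algebraic identity is $\sigma_3\sigma_1=i\sigma_2$, together with the antisymmetry of $\sigma_2$: for any real $\phi_0\in\R^2$,
\[
(\sigma_3\phi_0)\transp \sigma_1\phi_0=\phi_0\transp\sigma_3\sigma_1\phi_0=i\,\phi_0\transp\sigma_2\phi_0=0.
\]
Hence $Q\phi_{-2\omega}=f'(\pscal{\phi_0,\sigma_3\phi_0}_{\C^2})(\sigma_3\phi_0)\big((\sigma_3\phi_0)\transp\sigma_1\phi_0\big)=0$, and so $L_\mu\phi_{-2\omega}=L_0\phi_{-2\omega}=-2\omega\phi_{-2\omega}$ for every $\mu$. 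For \emph{(iv)}, I would differentiate the equation $L_0\phi_0=0$ with respect to $x$. The only $x$-dependence in $L_0$ sits in the scalar $f(\pscal{\phi_0,\sigma_3\phi_0}_{\C^2})$; expanding $\partial_x$ of this scalar produces $2f'(\cdot)(\sigma_3\phi_0)\transp\partial_x\phi_0$, so one obtains
\[
L_0\partial_x\phi_0 = 2f'(\pscal{\phi_0,\sigma_3\phi_0}_{\C^2})(\sigma_3\phi_0)(\sigma_3\phi_0)\transp\partial_x\phi_0 = 2Q\,\partial_x\phi_0,
\]
i.e.\ $L_2\partial_x\phi_0=0$. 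The three claimed eigenvectors of $H_2$ then follow from a direct block computation using $L_0\phi_0=0$, $L_2\partial_x\phi_0=0$, and $L_0\phi_{-2\omega}=L_2\phi_{-2\omega}=-2\omega\phi_{-2\omega}$.

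For \emph{(v)}, I would use two involutions. First, with the choice $\alpha=-\sigma_2$, the operator $-i\alpha\partial_x=i\sigma_2\partial_x$ has \emph{real} matrix entries, and $\sigma_3$, $\omega\Id$, $Q$ are real as well; thus $L_\mu$, and a fortiori $H_\mu$, commutes with componentwise complex conjugation $K$. Antilinearity of $K$ then implies that $z\in\sigma(H_\mu)\Rightarrow\bar z\in\sigma(H_\mu)$, giving symmetry about the real axis. Second, on $L^2(\R,\C^4)$ introduce the block involution $J:=\mathrm{diag}(\Id,-\Id)$; the off-diagonal form~\eqref{eq:def_H} gives $JH_\mu J=-H_\mu$, whence $z\in\sigma(H_\mu)\Rightarrow -z\in\sigma(H_\mu)$. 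Combining these two symmetries yields symmetry about the imaginary axis as well. The main conceptual point (and the only step where one has to be slightly careful) is the interplay between antilinearity of $C$ and $K$ and the fact that $L_0$ is self-adjoint; everything else is a direct matrix calculation.
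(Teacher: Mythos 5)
Your proposal is essentially the paper's proof: the same algebraic identities drive (ii)--(iv) ($(\sigma_3\phi_0)\transp\sigma_1\phi_0 = 0$ for $Q\phi_{-2\omega}=0$; differentiating $L_0\phi_0=0$ in $x$ to get $L_2\partial_x\phi_0=0$), and item (v) uses exactly the two symmetries the paper invokes (reality of $H_\mu$, and conjugation by $\mathrm{diag}(\Id_{2\times2},-\Id_{2\times2})$). The only substantive difference is in (i): the paper uses the \emph{unitary} conjugation $\sigma_1(L_0+\omega\Id)\sigma_1=-(L_0+\omega\Id)$, coming from the anticommutator $\{L_0+\omega\Id,\sigma_1\}=0$, whereas you compose with complex conjugation $K$ to get the antiunitary $C=\sigma_1K$. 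Both work, but the $K$ factor is redundant here because in the representation $(\alpha,\beta)=(-\sigma_2,\sigma_3)$ all the matrix coefficients of $L_0$ are real, so $K$ \emph{commutes} with $L_0+\omega\Id$ rather than anticommuting with $i\sigma_2\partial_x$ as you asserted (note $\overline{i\sigma_2}=(-i)(-\sigma_2)=i\sigma_2$). Your intertwining identity $C(L_0+\omega\Id)C^{-1}=-(L_0+\omega\Id)$ is nevertheless correct, because the sign you attributed to $K$ cancels out; but the stated justification is off, and the paper's shorter route through $\sigma_1$ alone sidesteps this entirely and also delivers the eigenfunction $\sigma_1\phi_0$ at $-2\omega$ without needing to invoke that $\phi_0$ is real at that stage.
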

\begin{proof}
    The equation $L_0 \phi_0 = 0$ is just a rewriting of the nonlinear ODE~\eqref{eq:phi_0}. Next, we use the anticommutators
    \[
        \{L_0 + \omega\Id, \sigma_1  \} = \{D_m, \sigma_1  \} - f(\pscal{\phi_0,\sigma_3\phi_0}_{\C^2}) \{\sigma_3, \sigma_1  \} =0
   \]
   to conclude that $\sigma_1 \phi_0$ is an eigenfunction associated to $-2\omega$, which concludes \emph{\ref{it:eigenvalues_L_0}}, and that the spectrum of~$L_0 + \omega\Id$ is symmetric, which is \emph{\ref{it:sym_L_0}}.
   
    The definition of~$Q$ in~\eqref{Def_of_Q} gives~\emph{\ref{it:2omega}} because, by properties of Pauli matrices and since the components of~$\phi_0$ are real, we have
    \[
        \left(\sigma_3\phi_0\right)\transp \phi_{-2\omega} = \pscal{\sigma_1 \phi_0, \sigma_3\phi_0}_{\C^2} = 0\,.
    \]
   
   For~\emph{\ref{it:eigenvalues_H_2}}, we compute
   \[
       H_2
        \begin{pmatrix}
            0 \\
            \phi_0
        \end{pmatrix}
        =
        \begin{pmatrix}
            L_0\phi_0 \\
            0
        \end{pmatrix}
        =0 
        \quad \textrm{ and } \quad 
       H_2
        \begin{pmatrix}
            \phi_{-2\omega} \\
            \mp \phi_{-2\omega}
        \end{pmatrix}
        =
        \begin{pmatrix}
            \mp L_0\phi_{-2\omega} \\
            L_0 \phi_{-2\omega}
        \end{pmatrix}
        =\pm 2\omega
        \begin{pmatrix}
            \phi_{-2\omega} \\
            \mp \phi_{-2\omega}
        \end{pmatrix},
   \]
    where we used \emph{\ref{it:2omega}} in the latter.
    The second eigenfunction associated to $0$ is a consequence of~$\phi_0$ being real-valued and the definition of~$Q$ in~\eqref{Def_of_Q}. Indeed, note that
    \[
        \Re(\pscal{\partial_x\phi_0,\sigma_3\phi_0}_{\C^2}) = \pscal{\partial_x\phi_0,\sigma_3\phi_0}_{\C^2} = (\sigma_3\phi_0)\transp \partial_x\phi_0\,.
    \]
    So we obtain
    \[
        L_0\partial_x\phi_0 = \partial_x\left(L_0\phi_0\right) + 2 f'(\pscal{\phi_0,\sigma_3\phi_0}_{\C^2}) \Re(\pscal{\partial_x\phi_0,\sigma_3\phi_0}_{\C^2})\sigma_3\phi_0 = 2Q\partial_x\phi_0\,,
    \]
   and conclude that $\partial_x\phi_0$ is an eigenfunction of~$L_2$ associated to the eigenvalue $0$, i.e.,
   \[
        L_2\partial_x\phi_0 = (L_0 - 2Q)\partial_x\phi_0 = 0\,.
   \]
   By definition of $H_2$, we have
   \[
       H_2
        \begin{pmatrix}
            \partial_x \phi_0 \\
            0 
        \end{pmatrix}
        =
        \begin{pmatrix}
            0 \\
            L_2\partial_x \phi_0
        \end{pmatrix}
        =0\,.
   \]

    The last statement of this proposition is an immediate consequence of the fact that $H_\mu$~is equal to its complex conjugate operator and that
    \[
        \begin{pmatrix}
            \Id_{2\times2}&0\\
            0&-\Id_{2\times2}
        \end{pmatrix}
        H_\mu = - H_\mu
        \begin{pmatrix}
            \Id_{2\times2}&0\\
            0&-\Id_{2\times2}
        \end{pmatrix}. \qedhere
    \]
\end{proof}

Notice that the operator $L_0 + \omega\Id$ may be interpreted as a Dirac operator with the effective mass
\begin{equation}\label{Def_M}
    M := m - f\!\left(v^2 - u^2\right)\,.
\end{equation}
This Dirac operator is connected to the Schr{\"o}\-dinger operators $-\partial_x^2 + M^2 \mp\prim{M}$. We use this connection to identify $\phi_0$ and $\phi_{-2\omega}$ as \emph{groundstates}. That is, they correspond to the smallest eigenvalue (in absolute value) of the operator $L_0 + \omega\Id$.
\begin{proof}[Proof of Theorem~\ref{groundstate}]
	We already know (Proposition~\ref{Prop:eigenvalues_symmetry})  that $\{-2\omega, 0\}\subset\sigma(L_0)$, so we have to prove that $L_0$ has no eigenvalues in~$(-2\omega,0)$. Defining $A := L_0 + \omega\Id$, we need to show that $A$ has no spectrum in~$(-\omega, +\omega)$.
	
	For this, it is convenient to use the \emph{change of basis}
	\[
	    U = \frac{1}{\sqrt{2}}\begin{pmatrix} 1 & 1 \\ 1& -1 \end{pmatrix} = \frac{1}{\sqrt{2}}(\sigma_1 + \sigma_3)
	\]
	in order to obtain the unitary equivalence, using the properties of the Pauli matrices,
	\[
	    U A U = -i\partial_x\sigma_2+M\sigma_1=\begin{pmatrix} 0 & - \partial_x + M \\ \partial_x + M  & 0\end{pmatrix},
	\]
	where $M$ is defined in~\eqref{Def_M}. From this, we find that $A^2$, with domain $H^2(\R)$, is unitarily equivalent to the following block diagonal operator
	\begin{align*}
	    (U  A U)^2 &=
	        \begin{pmatrix}
                (- \partial_x + M )( \partial_x + M ) &0 \\
                0 &(\partial_x + M )(- \partial_x + M )
            \end{pmatrix}\\
        &=
            \begin{pmatrix}
                - \partial_x^2 + M^2 - \prim{M}& 0 \\
                0& -\partial_x^2 + M^2 +\prim{M}
            \end{pmatrix},
	\end{align*}
	with, on the diagonal, two Schr{\"o}\-dinger operators with essential spectrum $[m^2, + \infty)$. We compute $(UAU)^2U\phi_0 = UA^2\phi_0 =\omega^2 U \phi_0$, since $A\phi_0=\omega\phi_0$ and
	\[
	    \sqrt2 U\phi_0 = \begin{pmatrix} v+u \\ v-u\end{pmatrix}.
	\]
	Given that $(v+u)(v-u)=v^2-u^2>0$ by Proposition~\ref{prop:basic_groundstate_properties}, this means that the functions $v \pm u\in\mathcal{C}^0(\R)$ do not change sign and are eigenfunctions of~$- \partial_x^2 + M^2 \mp \prim{M}$, respectively, associated to the same eigenvalue $\omega^2$. By Sturm's oscillation theorem (see e.g.~\cite{BerShu-91}), they are therefore the respective groundstates and~$A^2$ has no eigenvalue below~$\omega^2$. We conclude that $A= L_0 + \omega\Id$ has no eigenvalues in~$(-\omega,+\omega)$.
\end{proof}

We now prove the simplicity of eigenvalues of~$L_\mu$, which will be needed several times in the paper.
\begin{lemma}\label{L_mu_eigenvalues_simple_and_analytic_in_mu}
    Let $f$ satisfy Assumption~\ref{Assumption_general_nonlinearity}, $\omega\in(0,m)$, and $\mu\in\R$. Then, the eigenvalues of~$L_\mu$ are simple.
\end{lemma}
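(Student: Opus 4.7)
The plan is to exploit the fact that $L_\mu \psi = \lambda \psi$ is a first-order linear ODE system on $\C^2$, and then use a conserved Wronskian to conclude that the $L^2$-eigenspace associated with any eigenvalue is at most one-dimensional.

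First, I would recast the eigenvalue equation as a first-order system. Using $D_m = i\sigma_2 \partial_x + m\sigma_3$ and $\sigma_2^2 = \Id$, the relation $L_\mu \psi = \lambda \psi$ for $\psi = (\psi_1,\psi_2)\transp \in H^1(\R,\C^2)$ is equivalent to
\[
\partial_x \psi = B_\lambda(x)\,\psi, \qquad B_\lambda(x) := -i\sigma_2 \, M_\lambda(x),
\]
where $M_\lambda(x) := (\lambda+\omega)\Id + \bigl(f(\pscal{\phi_0,\sigma_3\phi_0}_{\C^2}) - m\bigr)\sigma_3 + \mu Q(x)$. Since $\phi_0$ is real-valued, $\sigma_3$ and $Q = f'(\pscal{\phi_0,\sigma_3\phi_0}_{\C^2})(\sigma_3\phi_0)(\sigma_3\phi_0)\transp$ are real symmetric matrices, so $M_\lambda(x)$ is real symmetric at every $x$. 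A straightforward bootstrap from this ODE, using the continuity of $Q$ given by Proposition~\ref{prop:basic_groundstate_properties}, shows that any $H^1$-eigenfunction is actually in $\mathcal{C}^1(\R,\C^2)$.

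Second, for two classical solutions $\psi,\chi$ of the above system I would introduce the Wronskian $W(\psi,\chi)(x) := \psi_1(x)\chi_2(x) - \psi_2(x)\chi_1(x)$ and compute directly that $\partial_x W = \operatorname{tr}(B_\lambda)\,W$. Writing $-i\sigma_2 = \bigl(\begin{smallmatrix} 0 & -1 \\ 1 & 0\end{smallmatrix}\bigr)$, one checks that $-i\sigma_2$ times any real symmetric $2\times 2$ matrix has vanishing diagonal sum, so $\operatorname{tr}(B_\lambda) \equiv 0$ and $W$ is constant on $\R$. If $\psi,\chi$ are two $L^2$-eigenfunctions associated to the same $\lambda$, their continuity and square-integrability provide a common sequence $x_n \to +\infty$ along which $\psi(x_n)\to 0$ and $\chi(x_n)\to 0$; hence $W(\psi,\chi)(x_n)\to 0$, forcing $W\equiv 0$.

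Finally, I would conclude from $W \equiv 0$ that the eigenspace is one-dimensional. If $\psi \not\equiv 0$, pick $x_0 \in \R$ with $\psi(x_0) \neq 0$; then $W(\psi,\chi)(x_0) = 0$ means $\chi(x_0)$ is a scalar multiple $c\,\psi(x_0)$ in $\C^2$, after which the function $\chi - c\psi$ solves the linear first-order ODE with vanishing initial datum at $x_0$, so it vanishes identically by uniqueness. The only step requiring any care is the extraction of the common sequence $x_n$, which follows from a standard diagonal argument: a continuous $L^2$ function must admit a sequence along which it tends to $0$, and one passes to a subsequence to realize this simultaneously for $\psi$ and $\chi$.
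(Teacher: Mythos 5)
Your proof is correct and follows the same strategic line as the paper: write the eigenvalue equation as a first-order ODE system, introduce the Wronskian $W=\det(\psi\,|\,\chi)$, show it is constant, use decay to conclude $W\equiv 0$, and finish with uniqueness for the Cauchy problem. Where you diverge is in how you establish $W'\equiv 0$. The paper decomposes $\mu Q$ as a combination of $\Id$, $\sigma_1$, $\sigma_3$, writes out $W'$ explicitly as a sum of determinants, and uses the identity $\det(\sigma_k\phi_1\,|\,\phi_2)=-\det(\phi_1\,|\,\sigma_k\phi_2)$ term by term. You instead invoke Liouville's formula $W'=\operatorname{tr}(B_\lambda)\,W$ and observe that $-i\sigma_2 S$ is traceless for any real symmetric $S$, which disposes of the constancy of $W$ in one line. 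This is cleaner and more conceptual than the paper's computation, and it makes transparent the structural reason the argument works (the coefficient matrix is $-i\sigma_2$ times a real symmetric matrix, hence traceless). The paper's route has the mild advantage of not needing to package the coefficients as a symmetric matrix, but your version requires no such bookkeeping anyway.

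One small imprecision, worth fixing but not a genuine gap: you say you pick a sequence $x_n\to+\infty$ along which $\psi(x_n)\to 0$ and then ``pass to a subsequence'' to get $\chi(x_{n_k})\to 0$ as well. Nothing guarantees that a subsequence of a good sequence for $\psi$ is also good for $\chi$. The clean fix is to apply the ``continuous $L^2$ function has a vanishing sequence'' observation directly to $|\psi|^2+|\chi|^2$, which produces a single sequence handling both at once. Alternatively, since $\psi,\chi$ are eigenfunctions in a spectral gap, they decay exponentially and the issue evaporates. Either way the conclusion $W\equiv 0$ stands; the final passage from $W\equiv 0$ to linear dependence via Cauchy uniqueness matches the paper's and is correct.
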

\begin{proof}
    Let $\lambda\in\R$ be an eigenvalue of~$L_\mu$ and assume $\phi_1 = (f_1,g_1)\transp$ and~$\phi_2 = (f_2,g_2)\transp$ to be eigenfunctions associated to $\lambda$.
    The equation 
    $L_\mu \phi_j = \lambda \phi_j$ can be written as
    \begin{equation*}
         \partial_x \phi_j = -i\lambda \sigma_2 \phi_j - M(x) \sigma_1 \phi_j - \mu \sigma_2 Q(x) \phi_j,
    \end{equation*}
    with $M$ defined in~\eqref{Def_M}. Furthermore, we can decompose
    \[
        \mu Q(x) =  q_0(x) \Id_{\C^2} + q_1(x)\sigma_1 + q_2(x) \sigma_3\,,
    \]
    for some functions $q_0$, $q_1$ and~$q_3$ whose explicit expressions are not necessary to complete the proof.
    Using the identity 
    \[
        \sigma_m \sigma_k = i \sum_{l=1}^3\epsilon_{mkl} \sigma_l 
    \]
   where $\epsilon_{mkl}$ is the completely antisymmetrix tensor such that $\epsilon_{123} = 1$,
    we finally rewrite the eigenvalue equation as
    \[
        \partial_x \phi_j = (-i\lambda + \mu q_0) \sigma_2 \phi_j - (M(x) + q_2(x) ) \sigma_1 \phi_j - \mu  q_1(x)\sigma_3 \phi_j.
    \]
    
    Now, define the determinant $W(x) := \det \begin{pmatrix}\phi_1(x) | \phi_2(x)\end{pmatrix}$ and compute
    \begin{align*}
        W' &= 
        \det \begin{pmatrix}\phi_1' |\phi_2\end{pmatrix} + 
        \det \begin{pmatrix}\phi_1 |\phi_2'\end{pmatrix} \\
        &=
        \begin{multlined}[t]
            \left( -i\lambda + \mu q_0\right) \left( \det \begin{pmatrix}\sigma_2\phi_1 |\phi_2\end{pmatrix} +  \det \begin{pmatrix}\phi_1 |\sigma_2 \phi_2\end{pmatrix}  \right)\\
            + (M(x) + q_2(x) ) \left( \det \begin{pmatrix}\sigma_1\phi_1 |\phi_2\end{pmatrix} +  \det \begin{pmatrix}\phi_1 |\sigma_1 \phi_2\end{pmatrix}  \right) \\
            -\mu q_1(x)\left( \det \begin{pmatrix}\sigma_3\phi_1 |\phi_2\end{pmatrix} +  \det\begin{pmatrix}\phi_1 |\sigma_3 \phi_2\end{pmatrix}  \right).
        \end{multlined}
    \end{align*}
    We conclude that $W'\equiv0$ since, for $k\in\{1,2,3\}$, we have
    \[
        \det \begin{pmatrix}\sigma_k \phi_1 |\phi_2\end{pmatrix} 
        = \det \begin{pmatrix}\sigma_k\phi_1 |\sigma_k^2\phi_2\end{pmatrix}
        = \det (\sigma_k) \det \begin{pmatrix}\phi_1 |\sigma_k \phi_2\end{pmatrix}
        = -\det \begin{pmatrix}\phi_1 |\sigma_k \phi_2\end{pmatrix}.
    \]
    Because $\phi_i\in L^2(\R)$,
    this implies that $W\equiv0$.
    Then, for each $x \in \R$, $\phi_1(x)$ and $\phi_2(x)$ are colinear vectors in $\C^2$. If $\phi_2(0) \neq 0$, there exists $\alpha \in \C$ such that
    $\phi_1(0) = \alpha \phi_2(0)$. 
    By linearity and uniqueness of the solution to the Cauchy problem
    \[
        \left\{
            \begin{aligned}
                L_\mu\Psi &= \lambda\Psi\\
                \Psi(0) &= \phi_1(0)= \alpha(0)\phi_2(0),
            \end{aligned}
        \right.
    \]
    this implies that  $\phi_1(x) = \alpha \phi_2(x)$ for all $x \in \R$.
    If $\phi_2(0) = 0$, the same uniqueness result implies that $\phi_2 \equiv 0$, contradicting that $\phi_2$ is an eigenfunction.
\end{proof}

We can now prove the basic spectral properties from the introduction.
\begin{proof} [Proof of Proposition~\ref{WellKnownProperties}]
The presence of eigenvalues stated in~\emph{\ref{WellKnownProperties_L0_EVs}}, \emph{\ref{WellKnownProperties_L2_EVs}}, \emph{\ref{WellKnownProperties_Hmu_symm_spect_axis}}, and~\emph{\ref{WellKnownProperties_H2_EVs}} are proved in Proposition~\ref{Prop:eigenvalues_symmetry} and their simplicity is Lemma~\ref{L_mu_eigenvalues_simple_and_analytic_in_mu}. 
 We are left with points~\emph{\ref{WellKnownProperties_Lmu_ess_spec}} and~\emph{\ref{WellKnownProperties_Hmu_ess_spec}}, the identification of the essential spectra.  
 The free Dirac operator~$D_m - \omega \Id$ has essential spectrum $\R \setminus (-m-\omega, m-\omega)$, see e.g.~\cite{Thaller-92}.
 Next, $f(v^2 - u^2)\sigma_3$ and $Q$ are bounded and symmetric operators, hence $L_\mu$ is self-adjoint with domain $H^1(\R)$. Finally, $f(v^2 - u^2)$ and $Q$ decay at infinity, so $L_\mu$ is a relatively compact perturbation (see e.g. \cite[Section~XII.4]{ReeSim4}) of~$D_m - \omega \Id$ and
    \[
        \sigma_{\rm ess}(L_\mu) = (-\infty, -m-\omega] \cup [m-\omega, +\infty)\,.
    \]
    Moreover the discrete spectrum of~$L_\mu$ consists of eigenvalues of finite multiplicity in the \emph{gap} $(-m-\omega, m-\omega)$.
    
  Similarly, for \emph{(v)}, $H_\mu$ is a bounded relatively compact perturbation of
    \[
        \mathrm{D}:=\begin{pmatrix}
            0 & D_m - \omega\Id \\ D_m - \omega\Id &0
        \end{pmatrix}
    \]
    and its essential spectrum is 
    \[
        \sigma_{\rm ess}(H_\mu) = (-\infty, -m + \omega] \cup [m-\omega, +\infty)\,. \qedhere
    \]
\end{proof}

The next result is our first bit of new information on the spectrum of~$H_\mu$.
\begin{proposition} \label{Prop_bound_on_Im_z}
    Let $f$ satisfy Assumption~\ref{Assumption_general_nonlinearity}, $\omega\in(0,m)$, and~$\mu\geq0$. If~$z \in \C$ is an eigenvalue of~$H_\mu$, then
	\[
	    \left|\Im z \right| \leq \frac{\mu}{2} \normt{Q}.
	\]
\end{proposition}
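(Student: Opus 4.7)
The plan is to decompose $H_\mu$ into its self-adjoint and anti-self-adjoint parts and control the latter in operator norm by $\mu\normt{Q}$. Since $Q$ is bounded (Proposition~\ref{prop:basic_groundstate_properties}\emph{\ref{it:decay_of_Q}}), the non-self-adjointness of $H_\mu$ is itself a bounded perturbation, which is exactly what is needed.

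First, I would write $H_\mu = H_0 - \mu K$ where
\[
    K := \begin{pmatrix} 0 & 0 \\ Q & 0 \end{pmatrix},
\]
and recall that $H_0$ is self-adjoint on $H^1(\R,\C^4)$ by Proposition~\ref{WellKnownProperties}. Since $Q$ is self-adjoint and bounded, $K^* = \begin{pmatrix} 0 & Q \\ 0 & 0 \end{pmatrix}$ is bounded and $H_\mu^* = H_0 - \mu K^*$, so that
\[
    B := \frac{H_\mu - H_\mu^*}{2i} = \frac{\mu}{2i}\,(K^* - K) = \frac{\mu}{2i}\begin{pmatrix} 0 & Q \\ -Q & 0 \end{pmatrix}
\]
is a bounded self-adjoint operator. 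Computing $B^2 = \tfrac{\mu^2}{4}\begin{pmatrix} Q^2 & 0 \\ 0 & Q^2 \end{pmatrix}$ gives at once $\normt{B} = \tfrac{\mu}{2}\normt{Q}$.

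Next, if $\Psi \in H^1(\R,\C^4)$ with $\norm{\Psi}=1$ satisfies $H_\mu \Psi = z \Psi$, I take the inner product with $\Psi$ to obtain $z = \pscal{\Psi, H_\mu \Psi}$. Writing $H_\mu = A + iB$ with $A := (H_\mu + H_\mu^*)/2$ self-adjoint, both $\pscal{\Psi, A\Psi}$ and $\pscal{\Psi, B\Psi}$ are real, so
\[
    \Im z = \pscal{\Psi, B\Psi}.
\]
Applying the Cauchy--Schwarz inequality and the bound on $\normt{B}$ then yields
\[
    \abs{\Im z} = \abs{\pscal{\Psi, B\Psi}} \leq \normt{B} = \tfrac{\mu}{2}\normt{Q},
\]
which is the desired inequality.

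There is no real obstacle here, since $H_\mu - H_\mu^*$ involves only the bounded multiplication operator $Q$, and the self-adjoint part $A$ inherits the unboundedness of $H_0$ harmlessly. The only point that requires a bit of care is to check that $B^2$ is indeed block-diagonal with blocks $\tfrac{\mu^2}{4}Q^2$, so that its norm is computed correctly from the self-adjointness of $Q$.
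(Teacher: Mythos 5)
Your proof is correct and is essentially the paper's argument, packaged at the level of the $4\times 4$ operator rather than its $2\times 2$ blocks: the paper derives $\Im z\,\norm{\varphi}^2 = \mu\,\Im\pscal{Q\varphi_1,\varphi_2}$ by pairing the coupled eigenvalue equations~\eqref{eq:eigen_of_H_split_a}--\eqref{eq:eigen_of_H_split_b} with $\varphi_1$ and $\varphi_2$, and then applies Cauchy--Schwarz together with $\norm{\varphi_1}\norm{\varphi_2}\leq\tfrac12\norm{\varphi}^2$. Your identification of $B$ as the anti-self-adjoint part, and the computation $\normt{B}=\tfrac{\mu}{2}\normt{Q}$ via $B^2$, encodes exactly that same $\tfrac12$; the two derivations of the constant coincide.
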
 
\begin{proof}
    The eigenvalue equation for $z$, with associated eigenvector $\varphi = (\varphi_1, \varphi_2)\transp$, reads
    \begin{subequations}
        \begin{empheq}[left=\empheqlbrace]{align}
            L_0 \varphi_2 &= z \varphi_1\,, \label{eq:eigen_of_H_split_a} \\
            L_\mu \varphi_1 &= z \varphi_2\,. \label{eq:eigen_of_H_split_b}
        \end{empheq}
    \end{subequations}
    Summing the inner products of~\eqref{eq:eigen_of_H_split_a} and~\eqref{eq:eigen_of_H_split_b}, respectively with $\varphi_1$ and~$\varphi_2$, gives
    \[
        \pscal{\varphi_1,L_0 \varphi_2}_{L^2} + \pscal{\varphi_2,L_0 \varphi_1}_{L^2} - \pscal{\varphi_2,\mu Q \varphi_1}_{L^2} = z\norm{\varphi}_{L^2}^2.
    \]
    Since $L_0$ is self-adjoint, $\pscal{\varphi_1,L_0 \varphi_2}_{L^2} + \pscal{\varphi_2,L_0 \varphi_1}_{L^2} \in \R$ and we have
    \[
        \Im z \norm{\varphi}_{L^2}^2=\mu\Im\pscal{Q \varphi_1,\varphi_2}_{L^2}.
    \]
    By Cauchy--Schwarz inequality, it yields
    \[
        \left|\Im z \right| \leq \mu\normt{Q} \frac{\norm{\varphi_1}_{L^2}\norm{\varphi_2}_{L^2}}{\norm{\varphi_1}_{L^2}^2 + \norm{\varphi_2}_{L^2}^2} \leq \frac{\mu}{2}\normt{Q}. \qedhere
    \]
\end{proof}

In order to state the last result of this section, we need to separate $L^2(\R,\C^2)$ in \emph{even} and \emph{odd} subspaces.
Following the convention in~\cite{BerCom-12}, we say that a function $\psi:\R \mapsto \C^2$ is \emph{even}, if its first component is an even function and its second component is odd. These functions are eigenfunctions with eigenvalue $+1$ of the parity operator $\sigma_3 \mathcal{P}$, where $\mathcal{P} u(x) := u(-x)$. This operator commutes with $D_m$, and with $L_\mu$ for all $\mu$.
Since the eigenvalues of $L_2$ are simple and the eigenfunction $\partial_x \phi_0$ associated to $0$ is \emph{odd}, $L_2$ is an invertible operator from $H^{1, \rm even}(\R)$ to $L^{2, \rm even}(\R)$.
With this in place, we can state the following proposition, saying that the Vakhitov--Kolokolov criterion \emph{makes sense}.
\begin{proposition} \label{prop_differentiability}
    Let $f$ satisfy Assumption~\ref{Assumption_general_nonlinearity}, $\{\phi_0( \omega)\}_{\omega\in(0,m)}$ be the associated solitary wave solutions and $\{L_2( \omega)\}_{\omega\in(0,m)}$ be as in~\eqref{Def_L_mu}. Then, $L_2( \omega)$ restricted to the even subspace is invertible for all $\omega\in(0,m)$ and $\omega \mapsto \phi_0(\omega) \in\mathcal{C}^1\left((0,m), H^1(\R)\right)$ with
     \[
        \partial_\omega\phi_0(\omega) = \left(L_2( \omega)\right)^{-1} \phi_0(\omega) \quad \textrm{ and } \quad \partial_\omega\norm{\phi_0(\omega)}^2_{L^2} = 2 \pscal{\phi_0(\omega), \left(L_2( \omega)\right)^{-1} \phi_0(\omega)}.
     \]
\end{proposition}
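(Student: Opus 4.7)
The plan has two independent pieces that I would treat in sequence. First, I would establish that $L_2(\omega)$ restricted to the even subspace is boundedly invertible. By Lemma~\ref{L_mu_eigenvalues_simple_and_analytic_in_mu}, all eigenvalues of $L_2$ are simple, and by Proposition~\ref{WellKnownProperties}\emph{\ref{WellKnownProperties_L2_EVs}} the only eigenfunction for the eigenvalue~$0$ is $\partial_x \phi_0 = (\prim{v},\prim{u})\transp$. Since $v$ is even and $u$ is odd, $\prim{v}$ is odd and $\prim{u}$ is even, so $\partial_x \phi_0$ lies in the \emph{odd} subspace. Consequently $0$ is not an eigenvalue of $L_2|_{\rm even}$, and since $0\in(-m-\omega,m-\omega)$ belongs to the gap of the essential spectrum (Proposition~\ref{WellKnownProperties}\emph{\ref{WellKnownProperties_Lmu_ess_spec}}), the restriction is a bounded bijection from $H^{1,\rm even}(\R,\R^2)$ onto $L^{2,\rm even}(\R,\R^2)$.

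Next, I would recover the differentiability and the explicit formulas through the implicit function theorem applied to
\[
 F:(0,m)\times H^{1,\rm even}(\R,\R^2)\to L^{2,\rm even}(\R,\R^2), \quad F(\omega,\phi) := (D_m-\omega)\phi - f\!\left(\pscal{\phi,\sigma_3\phi}_{\C^2}\right)\sigma_3\phi,
\]
which satisfies $F(\omega,\phi_0(\omega))=0$ by~\eqref{eq:phi_0}. A direct computation, using that $\phi_0$ is real and that the derivative of $\phi\mapsto\pscal{\phi,\sigma_3\phi}_{\C^2}$ at $\phi_0$ in the direction~$\eta$ equals $2(\sigma_3\phi_0)\transp\eta$, gives
\[
 D_\phi F\big|_{(\omega,\phi_0)}\eta = (D_m-\omega)\eta - f\!\left(\pscal{\phi_0,\sigma_3\phi_0}_{\C^2}\right)\sigma_3\eta - 2Q\eta = L_2\eta,
\]
with $Q$ as in~\eqref{Def_of_Q}, while $\partial_\omega F(\omega,\phi)=-\phi$. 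Once step~1 is in hand, the implicit function theorem yields $\omega\mapsto\phi_0(\omega)\in\mathcal{C}^1((0,m),H^1(\R))$ with $\partial_\omega\phi_0(\omega) = \left(L_2(\omega)\right)^{-1}\phi_0(\omega)$, where the inverse is the one on the even subspace. Uniqueness of the groundstate (Proposition~\ref{prop:basic_groundstate_properties}) ensures that the local solution produced by the implicit function theorem coincides with the given family $\{\phi_0(\omega)\}_\omega$. The formula for $\partial_\omega\norm{\phi_0(\omega)}^2_{L^2}$ then follows from the chain rule and the self-adjointness of $L_2^{-1}$ on the even subspace.

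The principal technical obstacle is verifying that $F$ is genuinely $\mathcal{C}^1$ in~$\phi$ in a neighborhood of $\phi_0$, given that $f$ is only assumed $\mathcal{C}^1$ on $\R\setminus\{0\}$ while the scalar argument $\pscal{\phi_0,\sigma_3\phi_0}_{\C^2}=v^2-u^2$ decays exponentially to~$0$. For $\phi$ close to $\phi_0$ in $H^1$, and hence in $L^\infty$ by the one-dimensional Sobolev embedding, this argument can come arbitrarily close to~$0$ at large $|x|$, where $f'$ might blow up. To control this I would split the integration domain into $\{|x|\leq R\}$, where the argument is bounded uniformly below by a positive constant so that $f$ is smoothly controlled, and $\{|x|>R\}$, where the outer factor $\sigma_3\phi_0$ contributes an exponentially small weight. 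The assumption $\lim_{s\to 0^+} s\prim{f}(s)=0$ from Assumption~\ref{Assumption_general_nonlinearity}, together with the exponential decay of $\phi_0$ and~$Q$ from Proposition~\ref{prop:basic_groundstate_properties}, should then provide the uniform continuity estimates needed to verify that both $F$ and $D_\phi F$ are continuous at $(\omega,\phi_0)$.
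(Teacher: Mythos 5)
Your step~1 (invertibility of $L_2$ on the even subspace via simplicity of eigenvalues and parity of $\partial_x\phi_0$) is correct and is the same argument the paper uses. The gap is in step~2. The implicit function theorem applied to $F(\omega,\phi)=(D_m-\omega)\phi-f\!\left(\langle\phi,\sigma_3\phi\rangle_{\C^2}\right)\sigma_3\phi$ requires $F$ to be of class $\mathcal{C}^1$ on a full open $H^1$-neighborhood of $\phi_0$, not only at $\phi_0$ itself. The paper explicitly remarks (just before Proposition~\ref{prop_differentiability}) that this route works if one strengthens the hypothesis to $f\in\mathcal{C}^1(\R)$; under Assumption~\ref{Assumption_general_nonlinearity} alone (e.g. $f(s)=s|s|^{p-1}$ with $p<1$, so $f'$ blows up at~$0$) it does not, and that is the whole reason the appendix proof exists. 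You flag this obstacle correctly at the end, but the remedy you sketch does not close it. For a general $\phi$ in an $H^1$-ball around $\phi_0$, on $\{|x|>R\}$ the outer factor multiplying $f'(\phi_1^2-\phi_2^2)$ in $D_\phi F(\omega,\phi)$ is $\sigma_3\phi$, not $\sigma_3\phi_0$, and $\phi$ need not decay; worse, $\phi_1^2-\phi_2^2$ can be negative on a set of unbounded measure, a regime where Assumption~\ref{Assumption_general_nonlinearity} gives no control on $f'$ at all. The bound $\lim_{s\to0^+}sf'(s)=0$ rescues $\phi_0$ only because of the comparable-components inequality $\omega(v^2+u^2)<m(v^2-u^2)$, which ties the outer factor to the argument of $f'$; this relation fails for generic $H^1$-perturbations of $\phi_0$.

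The paper's proof (Appendix~\ref{Appendix_ODE}) sidesteps differentiating $F$ in $\phi$ altogether. It subtracts the exact groundstate equations at two nearby frequencies $\omega$ and $\alpha$ and applies the \emph{pointwise} mean value theorem to $f$ between $d_\omega(x)>0$ and $d_\alpha(x)>0$. The resulting intermediate value $\tilde d(x)$ automatically inherits positivity, the comparable-components inequality, and uniform exponential decay, so the induced operator $\tilde L_{2,\alpha}$ converges in $\mathcal{B}(H^1,L^2)$ to $L_2(\omega)$; invertibility on the even subspace then passes to $\tilde L_{2,\alpha}$ for $\alpha$ close to $\omega$, giving the difference quotient directly. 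Your IFT route has no substitute for this step. To make your approach work under Assumption~\ref{Assumption_general_nonlinearity} you would either have to impose $f\in\mathcal{C}^1(\R)$ (which the authors deliberately avoid) or set up the implicit function theorem in a much smaller Banach space encoding decay and pointwise positivity of $\phi_1^2-\phi_2^2$, which is considerably more delicate.
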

The differentiability can be checked directly for power nonlinearities, and obtained from the implicit function theorem if we assume $f\in\mathcal{C}^1(\R)$. We defer the proof, under only Assumption~\ref{Assumption_general_nonlinearity}, to Appendix~\ref{Appendix_ODE}.

\section{Bifurcations from the origin}\label{Section_bif_from_origin}
The goal of this section is to prove Theorem~\ref{thm_VK_intro}. We first recall some basic facts about perturbation theory for closed operators. Proofs can be found in standard references such as~\cite{HislopSigal,kato}.  
The algebraic multiplicity $m_a(\lambda)$ of an isolated point $\lambda$ of the spectrum of a closed operator $T$ is defined as the dimension of the range of the Riesz projector
\[
    P_\gamma (T) = \frac{1}{2\pi i} \int_\gamma (T - z)^{-1} \di z,
\]
where $\gamma$ is any closed, simple contour such that $\gamma \subset \rho(T)$ and~$\lambda$ is the only point of the spectrum of~$T$ inside $\gamma$.
This multiplicity coincides, see~\cite[Chapter 6]{HislopSigal}, with the dimension of the generalized eigenspace
\[
    \cup_{n\in N} \ker(T-\lambda)^n.
\]
The geometric multiplicity $m_g(\lambda)$ is the dimension of~$\ker(T-\lambda)$. From this, it follows that $m_a(\lambda) \geq m_g(\lambda)$ and, if $m_a(\lambda)> 0$, then $m_g(\lambda) > 0$. 
If $T$ is self-adjoint, the geometric and algebraic multiplicities of any eigenvalue coincide. The algebraic multiplicity is the \emph{correct} multiplicity for analytic perturbation theory for closed operators.

Recall that a family of operators $\{T_\mu\}_\mu$ is said to be a type-A analytic family in an open set $\mathcal{S}\subset \C$ if for all $\mu\in \mathcal{S}$ the operator $T_\mu$ is closed on a domain~$\mathcal{D}$ independent of~$\mu$, and if for each $\varphi\in \mathcal{D}$ the map $\mu \mapsto T_\mu	 \varphi$ is strongly analytic. For shortness, we call such $\{T_\mu\}_\mu$ an \emph{analytic family}.

If $\{T_\mu\}_\mu$ is an analytic family of operators and~$\gamma$ is a contour such that $\gamma \subset \rho(T_\mu)$ for all $\mu \in B(0,r)$, 
then the sum of algebraic multiplicities of eigenvalues in the interior of~$\gamma$ is constant for $\mu \in B(0,r)$, see~\cite[Chapter 15]{HislopSigal}.

In our particular case, the families~$\{L_\mu\}_{\mu\in\C}$ and~$\{H_\mu\}_{\mu\in\C}$ are linear in~$\mu$ and therefore analytic. They have isolated eigenvalues of finite algebraic multiplicity away from their essential spectra.
By analytic perturbation theory, the eigenvalue branches $\lambda(\mu)$ of~$L_\mu$ are analytic functions of~$\mu$, and those of~$H_\mu$ are continuous.\footnote{Analytic functions of~$\mu$ as long as eigenvalues are simple, and branches of a multivalued function analytic in~$(\mu -\mu_0)^{1/k}$ if at least $k \in \N$ branches intersect for $\mu_0$.}

We start this section with a lemma that characterizes Condition 
\emph{\ref{it_L_2_hyp}} from Theorem~\ref{thm_VK_intro} in terms of the invertibility of~$L_\mu$.
\begin{lemma}\label{Lemma_equiv_number_evs_L2_and_Lmu}
    Let $f$ satisfy Assumption~\ref{Assumption_general_nonlinearity} and $\omega\in(0,m)$. The following statements are equivalent 
    \begin{enumerate}[label=\roman*),leftmargin=5\parindent]
	    \item \label{it_L_2_hyp-1} $L_2$ has a single eigenvalue in~$(-2\omega, 0)$,
	    \item $0\notin \sigma(L_\mu)$ for all $\mu\in (0,2)$.
    \end{enumerate}
\end{lemma}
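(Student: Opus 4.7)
The plan is to track the eigenvalue branches of the self-adjoint analytic family $L_\mu = L_0 - \mu Q$ as $\mu$ ranges over $[0,2]$. Three structural ingredients do the work: by Lemma~\ref{L_mu_eigenvalues_simple_and_analytic_in_mu} every eigenvalue of every $L_\mu$ is simple, so distinct analytic branches cannot cross; since $f'>0$, the matrix $Q$ is pointwise positive semi-definite, so the Hellmann--Feynman identity $\lambda'(\mu) = -\pscal{\psi_\mu,Q\psi_\mu}_{L^2}/\norm{\psi_\mu}_{L^2}^2 \leq 0$ makes every branch non-increasing; and $\phi_{-2\omega}$ is a fixed eigenfunction of $L_\mu$ with eigenvalue $-2\omega$ for every $\mu$ (Proposition~\ref{Prop:eigenvalues_symmetry}), which combined with simplicity prevents any other branch from ever taking the value $-2\omega$.

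With these in hand, consider first the branch $\lambda_0$ determined by $\lambda_0(0)=0$ and eigenfunction $\phi_0$. A direct Hellmann--Feynman computation using Proposition~\ref{prop:basic_groundstate_properties} yields $\lambda_0'(0) = -\norm{\phi_0}_{L^2}^{-2} \int_\R f'(v^2-u^2)(v^2-u^2)^2 \di x < 0$, so $\lambda_0$ immediately enters $(-2\omega,0)$ and, by monotonicity together with the forbidden value $-2\omega$, remains in $(-2\omega,0)$ throughout $\mu\in(0,2]$. Moreover, any branch $\lambda_*$ that attains the value $0$ at some $\mu^* \in [0,2]$ must do so strictly transversally with $\lambda_*'(\mu^*) < 0$: if instead $\lambda_*'(\mu^*) = 0$, the associated eigenfunction $\psi^*$ would satisfy $Q\psi^*=0$ and $L_0\psi^*=0$, forcing $\psi^*\in\mathrm{span}(\phi_0)\cap\ker Q$, which is trivial since $Q\phi_0\neq 0$ by the same computation.

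For the implication~\emph{i)}~$\Rightarrow$~\emph{ii)}, suppose $L_2$ has exactly one eigenvalue in $(-2\omega,0)$, and assume by contradiction that $\lambda_*(\mu^*)=0$ for some branch $\lambda_*$ and some $\mu^*\in(0,2)$. Since $\lambda_0(\mu^*)<0$, the branch $\lambda_*$ is distinct from $\lambda_0$. The transversality and monotonicity statements then place $\lambda_*(\mu)$ strictly in $(-2\omega,0)$ for every $\mu\in(\mu^*,2]$, so $\lambda_*(2)$ is a second eigenvalue of $L_2$ in $(-2\omega,0)$, contradicting~\emph{i)}.

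For the converse~\emph{ii)}~$\Rightarrow$~\emph{i)}, introduce the counting function $N(\mu):=\#\bigl(\sigma(L_\mu)\cap(-2\omega,0)\bigr)$ and observe that $N$ can change only at values of $\mu$ where a branch crosses $0$ (since branches cannot leave or enter $(-2\omega,0)$ across $-2\omega$). Under~\emph{ii)}, no such crossing occurs on $(0,2)$; combined with $N(\mu)\geq 1$ for $\mu>0$ coming from $\lambda_0$, this forces $N\equiv 1$ on $(0,2)$. The branch $\lambda_3$ with $\lambda_3(2)=0$ carrying $\partial_x\phi_0$ has $\lambda_3(\mu)\geq 0$ for $\mu\leq 2$ by monotonicity, hence does not contribute to $N(2)$; and any additional branch with value in $(-2\omega,0)$ at $\mu=2$ would, by continuity, already populate $(-2\omega,0)$ on a left-neighborhood of $2$, contradicting $N\equiv 1$ there. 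Therefore $N(2)=1$. The main delicacy is the transversality observation, which encodes the identity $\ker L_0 \cap \ker Q = \{0\}$; the rest is careful bookkeeping of simple continuous branches under a monotone self-adjoint perturbation.
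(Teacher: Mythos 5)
Your plan follows the same route as the paper: the Hellmann--Feynman monotonicity $\partial_\mu\lambda = -\pscal{\psi,Q\psi}/\norm{\psi}^2 \leq 0$, the simplicity from Lemma~\ref{L_mu_eigenvalues_simple_and_analytic_in_mu} that keeps branches from crossing and in particular keeps every branch except the fixed $\phi_{-2\omega}$ one away from $-2\omega$, and the strict negativity $\lambda_0'(0)<0$ pushing the zero mode downward. Your transversality step --- that a branch with $\lambda_*(\mu^*)=0$ and $\lambda_*'(\mu^*)=0$ would have an eigenfunction in $\ker L_0 \cap \ker Q = \{0\}$ --- is a clean, explicit way to rule out a branch tangent to the axis; the paper instead combines analyticity with monotonicity to say each branch meets the axis once, but the hidden reason (a constant-zero branch would again live in $\ker L_0\cap\ker Q$) is the same. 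The $i)\Rightarrow ii)$ direction is sound.

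There is, however, a gap in your $ii)\Rightarrow i)$ argument. You establish that $N(\mu)$ is constant on $(0,2)$ and that $N\geq 1$ there because of $\lambda_0$, and from this you conclude $N\equiv 1$. But ``constant and at least $1$'' does not force ``identically $1$'': a priori $N$ could be constant equal to $2$ or more. The missing ingredient is Theorem~\ref{groundstate}, i.e.\ $L_0$ has no eigenvalues in $(-2\omega,0)$, so $N(0)=0$; since $0$ is a simple eigenvalue of $L_0$, only the branch $\lambda_0$ crosses zero at $\mu=0$, and hence $N=1$ for $\mu$ just to the right of $0$, which together with constancy pins $N\equiv 1$ on $(0,2)$. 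Without invoking Theorem~\ref{groundstate}, a hypothetical eigenvalue of $L_0$ already inside $(-2\omega,0)$ would produce a second non-increasing branch that never touches $0$, so $ii)$ could hold while $i)$ fails. The paper's proof uses this fact implicitly when tracing the branches back to eigenvalues of $L_0$ in $(0,m-\omega)$ or the positive essential spectrum; you should make the appeal to Theorem~\ref{groundstate} explicit. The fix is a one-line addition, but as written the step ``this forces $N\equiv 1$'' is a non sequitur.
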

\begin{proof}
\begin{figure}[ht]
    \centering
	\begin{tikzpicture}[x=4cm, y=1cm,
	    eje/.style = {-latex },
	    known/.style = {ultra thick,\existingSpectrum},
	new/.style = { ultra thick, \possibleSpectrum!60, dashed},
	newCircle/.style = { ultra thick, \possibleSpectrum!60}]
	
	    \shade[left color=\existingSpectrum!60,right color=\existingSpectrum!10, shading angle =180] (0, 1) -- (2,1) --(2, 2.5) --(0,2.5) --cycle;
	    \shade[left color=\existingSpectrum!60,right color=\existingSpectrum!30, shading angle =0] (0,-3) -- (2,-3) --(2,-3.5) -- (0,-3.5)--cycle;
	    
	    \draw[eje](0,-3.5)--(0,2.5) node [below left] {\footnotesize $\sigma(L_\mu)$};
	    \draw[eje](0,0) --(2.2,0) node[below]{\footnotesize $\mu$} ;
	    
	    \draw[known](0,0)  .. controls (0.8,-0.9) ..(2,-1.2) ;
	    \draw[known](0,-2) --(2,-2) ;
	   \draw[known](1.3,1)  .. controls (1.5, 0.4) ..(2,0) ;
	   
	 \draw[new](0.6,1)  .. controls (1.5, -0.2) ..(2,-0.4) ;
	 \draw[new](0,0.3)  .. controls (1.2, -0.4) ..(2,-0.9) ;
	 \draw[new](1.5 , 1)  .. controls (1.7, 0.6) ..(2,0.5) ;
	\draw[new](0 ,-2.3)  .. controls (0.5, -2.8) ..(1, -3) ;
	    \draw[shift={(-4pt,0)}] (0,0) -- (0pt,-0pt) node[] {\tiny$0$};
	    \draw[shift={(0,1)}] (0,0) -- (0,0) node[left] {\tiny$m-\omega$};  
	    \draw[shift={(0,-3)}] (0,0) -- (0,0) node[left] {\tiny$-m-\omega$};
	    \draw[shift={(0,-2)}] (0,0) -- (0,0) node[left] {\tiny$-2\omega$};
	     \draw[shift={(2,0)}] (0,2pt) -- (0,-2pt) node[xshift=+6pt, yshift=-4pt] {\tiny$2$};
	     \node at (0.4,-0.8) {\footnotesize$\lambda_1(\mu)$};
	    \fill[newCircle] (2,-0.4) circle (4pt) ;
	    \fill[newCircle] (2,-0.9) circle (4pt) ;
	    \draw[newCircle] (0.55,0) circle (4pt) ;
	    \draw[newCircle] (1.4,0) circle (4pt) ;
	\end{tikzpicture} 
	
	\captionsetup{width=.9\textwidth}
	\caption{Sketch for the proof of Lemma~\ref{Lemma_equiv_number_evs_L2_and_Lmu} representing vertically the spectrum of $L_\mu$ as $\mu$ varies from $0$ to $2$. Solid curves: eigenvalue branches always present; dashed curves: scenarios for other possible eigenvalue branches.}
		\label{fig:sketch_lemma}
\end{figure}
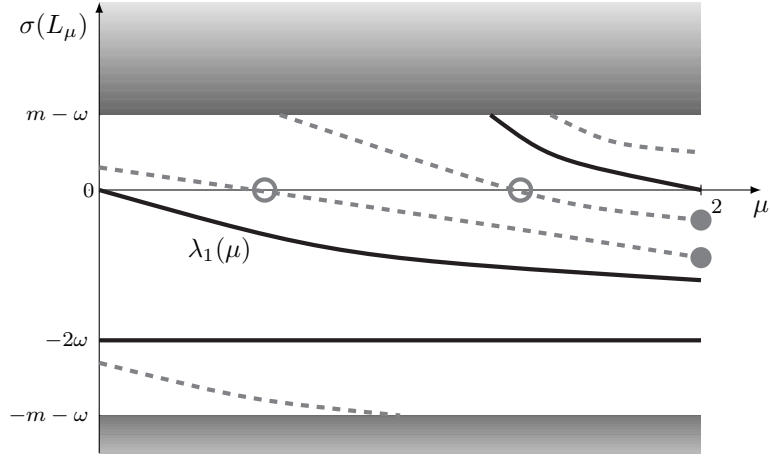

    Recall that eigenvalues of~$L_\mu$ are analytic in~$\mu$.
    Let $\lambda(\mu)$ be an eigenvalue of~$L_\mu$ and~$\varphi(\mu)$ be a corresponding normalized eigenfunction, then first order perturbation theory (also known as the Feynman--Hellmann theorem), together with the nonnegativity of~$Q$, yields
    \begin{equation}\label{Feynman_Hellmann_type_ineq}
        \partial_\mu\lambda(\mu) = -\pscal{\varphi(\mu), Q \varphi(\mu)}\leq 0\,.
    \end{equation}
    Denote now by $\lambda_1(\mu)$ the branch with $\lambda_1(0)=0$ and observe that
    \[
        \partial_\mu\lambda_1(0) = -  \pscal{v^2-u^2,f'(v^2-u^2)\left(v^2-u^2\right)} < 0\,,
    \]
    by Assumption~\ref{Assumption_general_nonlinearity}, the definition of~$Q$, and since $v^2-u^2>0$ by Proposition~\ref{prop:basic_groundstate_properties}.
    Thus, $\lambda_1(\mu)<0$ for any $\mu>0$. Moreover, $\lambda_1(\mu)>-2\omega$ since $-2\omega$ is an eigenvalues of~$L_\mu$ for any $\mu>0$ by Proposition~\ref{Prop:eigenvalues_symmetry}, and eigenvalues are simple by Lemma~\ref{L_mu_eigenvalues_simple_and_analytic_in_mu}. Hence,
    \[
        \partial_\mu\lambda_1(\mu) \leq 0\quad \mbox{and}\quad -2\omega<\lambda_1(\mu)<0, \qquad \textrm{ for } \mu>0\,.
    \]
    Now, refer to Figure~\ref{fig:sketch_lemma}. Eigenvalues of $L_2$ in $(\lambda_1(2), 0)$, represented by the disks, originate from eigenvalues of $L_0$ in $(0, m-\omega)$ or from the positive essential spectrum. Since the eigenvalues are analytic and monotonic, each curve intersects the axis at a single point, represented by the circles. Therefore, the number of eigenvalues of~$L_2$ in~$(-2\omega, 0)$ equals the number of~$\mu$'s in~$[0,2)$ for which $0$ is an eigenvalue of~$L_\mu$.
\end{proof}

Now we can prove our first main result.
\begin{proof}[Proof of Theorem~\ref{thm_VK_intro}]
    Let $m_a(\lambda, H_\mu)$ be the algebraic multiplicity of~$\lambda$ as an eigenvalue of~$H_\mu$. For $\mu =0$, the operator $H_0$ is self-adjoint and we have $m_a(0, H_0)=2$ since
    \[
        \ker (H_0^n) = \ker (H_0) = \vect \left\{\begin{pmatrix} \phi_0 \\0
    \end{pmatrix}, \begin{pmatrix} 0 \\ \phi_0 
    \end{pmatrix}\right\}.
    \]
    In view of Lemma~\ref{Lemma_equiv_number_evs_L2_and_Lmu}, the function
    \[
        l(\mu):= \pscal{\phi_0, L_\mu^{-1} \phi_0},
    \]
    where $L_\mu$ is identified with its restriction to the even subspace, is well-defined on~$(0,2]$ and real analytic. Indeed, referring once more to Figure~\ref{fig:sketch_lemma}, $l(\mu)$ is well-defined and continuous until the value of $\mu > 2$ where the first \emph{even} eigenvalue crosses the axis.
By Proposition~\ref{prop_differentiability}, we identify
    \[
        l(2) = \pscal{\phi_0, L_2^{-1} \phi_0} = \frac{1}{2}\partial_\omega \norm{\phi_0}^2.
    \]
    
    For $\mu \in (0,2)$, we claim that $m_a(0, H_\mu) \geq 3$ if and only if $l(\mu)= 0$. Indeed,
    \[
        \ker(H_\mu^2 ) = \ker (L_0 L_\mu) \times \ker (L_\mu L_0) = \vect \left\{\begin{pmatrix} L_\mu^{-1}\phi_0 \\0 \end{pmatrix}, \begin{pmatrix} 0 \\ \phi_0 \end{pmatrix}\right\},
    \]
    so $m_a(0,H_\mu)\geq 2$. Now fix $\mu  \in (0,2)$, and assume that $(\psi_1, \psi_2)\transp \in \ker(H_\mu^3)$. The corresponding equations are
    \[
        \left\{
        \begin{aligned}
            L_\mu L_0 L_\mu \psi_1 &= 0\,, \\
            L_0 L_\mu L_0 \psi_2 &= 0\,.
        \end{aligned}
        \right.
    \]
    Since $L_\mu$ is invertible, the first equation yields $\psi_1 \in \ker (L_0 L_\mu)$, and since $\ker(L_0)= \vect \{\phi_0\}$, the second equation implies that there exists $\alpha \in \C$ such that
    \[
        L_\mu L_0 \psi_2 = \alpha \phi_0\,.
    \]
    
    If $\alpha=0$,then  $\psi_2 \in \ker(L_\mu L_0)$ and $\ker(H_\mu^3) \subset \ker(H_\mu^2)$. By induction on~$n$, this implies that $\ker(H_\mu^n) \subset \ker (H_\mu^2)$ for all $n \in \N$, and~$m_a(0,H_\mu) = \dim (\ker(H_\mu^2))= 2$.
    
    Thus, if $m_a(0,H_\mu) \geq 3$, we need $\alpha \neq 0$ and, taking the inner product of both sides of
    \[
        L_0 \psi_2 = \alpha L_\mu^{-1} \phi_0\,
    \]
    with $\phi_0$ yields
    \[
        0 = \alpha \pscal{\phi_0,L_\mu^{-1} \phi_0 } = \alpha \, l(\mu)\,, 
    \]
    hence~$l(\mu) = 0$ as claimed. For the converse, if $l(\mu)= 0$, then
    \[
        L_\mu^{-1} \phi_0 \in \ker (L_0)^{\perp} = \ran(L_0)
    \]
    and thus there exists $\tilde \phi \notin \ker(L_0)$ such that 
    \[
        L_0 L_\mu L_0 \tilde \phi =0\,.
    \]
   This implies $m_a(0,H_\mu) \geq 3$ since, again,
    \[
        \ker(H_\mu^3) = \ker (L_\mu L_0 L_\mu) \times \ker (L_0 L_\mu L_0)\,.
    \]

    We now show that if
    \begin{equation}\label{eq:VK-inequality}
        \partial_\omega \norm{\phi_0(\omega)}^2_{L^2} \leq 0\,,
    \end{equation}
    then $l(\mu) < 0$ for all $\mu \in (0,2)$.
    First, $l$ is non-decreasing, since
    \[
        l'(\mu) = \lim_{h\searrow 0} h^{-1} \pscal{\phi_0, (L_{\mu +h}^{-1} - L_\mu^{-1}) \phi_0} = \lim_{h\searrow 0} \pscal{\phi_0, L_{\mu +h}^{-1} Q L_\mu^{-1} \phi_0} = \pscal{ L_{\mu}^{-1}\phi_0, Q L_\mu^{-1} \phi_0} \geq 0.
    \]

    If~\eqref{eq:VK-inequality} holds with a strict inequality, this is sufficient to conclude $l(\mu) \leq l(2) <0 $ for all $\mu \in (0,2)$, and therefore $m_a(0,H_\mu)= 2$.

    If the equality holds in~\eqref{eq:VK-inequality}, we have to show that $l$ is not constant on any interval $[\mu_0,2]$. Since $l$ is real analytic, this can only happen if $l(\mu)=0$ for all $\mu \in (0,2)$. But then $m_a(0,H_\mu) \geq 3$ for all $\mu \in (0,2)$.
    We now show that this is impossible for small $\mu$, since $m_a(0, H_0)=2$ and~$\norm{H_0 - H_\mu} = \mu \normt{Q}$. 
    Let $\delta = \operatorname{dist}(0, \sigma(H_0)\setminus\{0\}) > 0$. Then the circle $\partial B(0,\delta/2)$ is contained in the resolvent set $\rho(H_\mu)$ for all $\mu $ with $\abs{\mu} < \delta/(2 \normt{Q})$.
    Indeed, for all $\phi \in H^1(\R, \C^4)$ and~$z \in \partial B(0,\delta/2) $, we have
    \[
        \norm{(H_\mu - z)\phi} \geq \norm{(H_0 - z)\phi} - \norm{(H_0 - H_\mu) \phi} \geq\left( \frac{\delta}{2} - \mu \normt{Q}\right) \norm{\phi}.
    \]
    Thus, by~\cite{HislopSigal}, the sums of algebraic multiplicities of the eigenvalues of~$H_\mu$ within $B(0, \delta/2)$ is constant for $\mu \in [0, \delta/(3 \normt{Q}) )$. Therefore $m_a(0, H_\mu) =2$ for those~$\mu$'s.
\end{proof}

We end this section with a by-product of the proof of Lemma~\ref{Lemma_equiv_number_evs_L2_and_Lmu}, needed for the proof of Theorem~\ref{negative_spectrum_of_Lmu_nonrelativistic_limit}, about the smallest eigenvalue of~$L_\mu$ (strictly) above $-2\omega$.
\begin{lemma}\label{A_priori_bounds_on_first_ev_of_Lmu}
    Let $f$ satisfy Assumption~\ref{Assumption_general_nonlinearity}, $\omega\in(0,m)$, and~$\mu>0$. Then the first eigenvalue $\lambda_1$ of~$L_\mu$ in~$(-2\omega,m-\omega)$ exists and verifies
    \[
        -\mu\normt{Q} \leq \lambda_1(\mu) < 0\,.
    \]
    
    Moreover, if $-2\omega$ and~$0$ are the only eigenvalues of~$L_0$, then for any $\mu\geq0$, $L_\mu$ has no eigenvalues in the interval $(-m-\omega, -2\omega)$.
\end{lemma}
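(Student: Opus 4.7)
The plan is to derive both statements from the Feynman--Hellmann identity
\[
    \partial_\mu \lambda(\mu) = -\pscal{\varphi(\mu), Q \varphi(\mu)}\,,
\]
already used in the proof of Lemma~\ref{Lemma_equiv_number_evs_L2_and_Lmu}, combined with the simplicity of the eigenvalues of~$L_\mu$ from Lemma~\ref{L_mu_eigenvalues_simple_and_analytic_in_mu} and the fact that $-2\omega$ is a permanent eigenvalue with eigenfunction annihilated by~$Q$.

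For the first statement, I denote by~$\lambda_1(\mu)$ the analytic branch starting from $\lambda_1(0) = 0$; such a branch exists by Kato's analytic perturbation theory since $0$ is a simple isolated eigenvalue of~$L_0$. The strict inequality $-2\omega < \lambda_1(\mu) < 0$ for $\mu > 0$ is exactly what is established (in slightly disguised form) in the proof of Lemma~\ref{Lemma_equiv_number_evs_L2_and_Lmu}: the derivative at $\mu=0$ is strictly negative by positivity of $v^2-u^2$ and $f'$, the global bound $\partial_\mu \lambda_1 \leq 0$ yields the upper bound, and simplicity together with the permanence of the $-2\omega$ eigenvalue yields the lower bound. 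Integrating the Feynman--Hellmann identity from $0$ to $\mu$ and using $\pscal{\varphi_1(s), Q\varphi_1(s)} \leq \normt{Q}$ gives
\[
    \lambda_1(\mu) = -\int_0^\mu \pscal{\varphi_1(s), Q \varphi_1(s)} \di s \geq -\mu \normt{Q}\,.
\]

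For the second statement, I argue by contradiction. Suppose $L_0$ has only $\{-2\omega, 0\}$ as eigenvalues, and suppose some $\mu_0 > 0$ and some $\lambda_0 \in (-m-\omega, -2\omega)$ satisfy $\lambda_0 \in \sigma(L_{\mu_0})$. By analytic perturbation theory and Lemma~\ref{L_mu_eigenvalues_simple_and_analytic_in_mu}, $\lambda_0$ belongs to an analytic branch $\mu \mapsto \lambda(\mu)$ with $\lambda(\mu_0) = \lambda_0$. I would then continue this branch down to $\mu = 0$. Feynman--Hellmann yields $\partial_\mu \lambda \leq 0$, so running $\mu$ from $\mu_0$ down to $0$ makes $\lambda$ non-decreasing, giving $\lambda(\mu) \geq \lambda_0 > -m-\omega$. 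Simplicity prevents $\lambda(\mu)$ from crossing the constant eigenvalue $-2\omega$, so $\lambda(\mu) < -2\omega$. Thus the branch is confined to the compact set $[\lambda_0, -2\omega) \subset (-m-\omega, m-\omega)$, and analytic perturbation theory extends it to $\mu = 0$, producing an eigenvalue $\lambda(0) \in (-m-\omega, -2\omega)$ of~$L_0$ — contradicting the hypothesis.

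The main obstacle is rigorously justifying the continuation step: one must rule out that the analytic branch terminates at some intermediate $\mu_\ast \in (0,\mu_0)$, either by merging into the essential spectrum or by colliding with another branch. The monotonicity confines $\lambda(\mu)$ to $[\lambda_0, -2\omega)$, which is bounded away from both essential-spectrum thresholds~$-m-\omega$ and~$m-\omega$ (using $-2\omega < m-\omega$, i.e.\ $\omega<m$); and Lemma~\ref{L_mu_eigenvalues_simple_and_analytic_in_mu} rules out a collision with the permanent $-2\omega$ branch as well as with the branch $\lambda_1$, which lives in $(-2\omega,0)$. Together these facts ensure the analytic continuation is valid on all of $[0,\mu_0]$, completing the argument.
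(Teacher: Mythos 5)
Your proposal is correct and follows essentially the same route as the paper: existence and bounds on $\lambda_1$ via the Feynman--Hellmann inequality and the monotone eigenvalue flow already set up in the proof of Lemma~\ref{Lemma_equiv_number_evs_L2_and_Lmu}, integration of that inequality for the lower bound $-\mu\normt{Q}$, and a monotonicity-plus-simplicity argument for the absence of eigenvalues in $(-m-\omega,-2\omega)$. You actually make explicit the branch-continuation step that the paper compresses into ``the same Feynman--Hellmann type argument and Figure~\ref{fig:sketch_lemma}''; the only minor slip is calling $[\lambda_0,-2\omega)$ compact (it is half-open), but this is harmless since you invoke Lemma~\ref{L_mu_eigenvalues_simple_and_analytic_in_mu} precisely to keep the branch bounded away from the permanent eigenvalue $-2\omega$, which is the substance of the matter.
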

\begin{proof}
    The argument in the proof of Lemma~\ref{Lemma_equiv_number_evs_L2_and_Lmu} gives the existence and the negativity. Moreover, integrating~\eqref{Feynman_Hellmann_type_ineq} over $\mu$ for $\lambda_1$ gives
    \[
        \lambda_1 (\mu) = 0 - \int_0^\mu \pscal{\phi_\alpha, Q \phi_\alpha}\di\alpha \geq -\mu \normt{Q},
    \]
    where $\phi_\alpha$ is the normalized eigenvector associated to $\lambda_1(\alpha)$.
    
    The second result is obtained by the same Feynman--Hellmann type argument as in the proof of Lemma~\ref{Lemma_equiv_number_evs_L2_and_Lmu} and we refer again to Figure~\ref{fig:sketch_lemma}.
\end{proof}

\section{Lower bound on \texorpdfstring{$\Re z^2$}{Re z2}}\label{Section_Lower_Bound_Re_z2}
In this section, our goal is to establish bounds for eigenvalue branches of~$H_\mu$ of type~(b), those that are away from the real and imaginary axes. 
These bounds are non-trivial if the operator norm of $Q$ is sufficiently small. This is always the case in the non-relativistic limit, so a concise statement is given in the following theorem, while we refer to Lemma~\ref{technical_lemma} for the explicit bounds.

\begin{theorem}\label{Bound_on_Re_z2_and_Im_z_general_statement}
    Let $f$ satisfy Assumption~\ref{Assumption_general_nonlinearity}. For any $E \in [0, 2m)$, there exists $\omega_E \in [E/2, m)$ such that, for all $\omega \in [\omega_E, m)$ and~$\mu \in (0,2]$, if $z \in \C \setminus(\R \cup i \R)$ is an eigenvalue of~$H_\mu(\omega)$, then
	\begin{equation}\label{Bound_Re_z2_general_statement}
	    \Re z^2 \geq E^2\,.
	\end{equation}
\end{theorem}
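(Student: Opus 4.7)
The plan is to reduce the eigenvalue problem for $H_\mu$ to a scalar identity for $z^2$, combine it with the spectral bounds on $L_0$, and exploit the smallness of $\normt{Q}$ in the non-relativistic regime. Starting from $H_\mu \varphi = z\varphi$ with $\varphi = (\varphi_1, \varphi_2)\transp$, the block form of $H_\mu$ gives the coupled system $L_0\varphi_2 = z\varphi_1$ and $L_\mu \varphi_1 = z\varphi_2$, which combine into $L_0 L_\mu \varphi_1 = z^2 \varphi_1$. Taking the $L^2$-inner product with $\varphi_1$ and expanding $L_\mu = L_0 - \mu Q$,
\[
z^2 \norm{\varphi_1}^2 = \norm{L_0\varphi_1}^2 - \mu\pscal{L_0\varphi_1, Q\varphi_1}.
\]
Taking real parts and applying Cauchy--Schwarz yields the quadratic bound
\[
\Re z^2 \geq t^2 - \mu \normt{Q}\, t, \qquad t := \norm{L_0\varphi_1}/\norm{\varphi_1},
\]
so $\Re z^2 \geq E^2$ as soon as $t$ exceeds the larger root $\tfrac12(\mu\normt{Q} + \sqrt{\mu^2\normt{Q}^2 + 4E^2})$ of the right-hand side.

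Next, I would lower bound $t$. The equation $L_0\varphi_2 = z\varphi_1$ forces $\varphi_1 \in \ran(L_0) = \ker(L_0)^\perp$, and Lemma~\ref{L_mu_eigenvalues_simple_and_analytic_in_mu} gives $\ker L_0 = \vect\{\phi_0\}$, so $\varphi_1 \perp \phi_0$. By Theorem~\ref{groundstate}, $L_0$ has no further eigenvalues in $(-2\omega,0)$, so the only part of $\sigma(L_0)$ with $|\lambda| < 2\omega$ that can support $\varphi_1$ is the portion of the essential spectrum $(-\infty,-m-\omega] \cup [m-\omega,+\infty)$ lying in that range. Granting, for the moment, that this essential-spectrum contribution can be shown negligible, the functional calculus for $L_0$ yields $t \geq 2\omega(1 - o(1))$ as $\omega \to m$, whence $\Re z^2 \geq 2\omega(2\omega - \mu\normt{Q}) \to 4m^2$ using $\normt{Q} \to 0$ from Proposition~\ref{prop:basic_groundstate_properties}\emph{\ref{it:decay_in_Nrel_limit}}. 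For any $E \in [0, 2m)$, solving $4\omega^2 - 2\omega\mu\normt{Q_\omega} = E^2$ self-consistently then produces an $\omega_E \in [E/2, m)$ as required.

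The hard part is precisely this control of the essential-spectrum component of $\varphi_1$: the inner threshold $m-\omega$ of $\sigma_{\rm ess}(L_0)$ collapses as $\omega \to m$, so the naive spectral-gap estimate only gives $t \geq m - \omega \to 0$, which is too weak. The crucial leverage should be the off-axis condition $z \notin \R \cup i\R$, equivalently $z^2 \notin \R$: since $\sigma_{\rm ess}(L_0 L_\mu) = \sigma_{\rm ess}(L_0^2) \subset [0, +\infty)$, the eigenvalue $z^2$ is necessarily isolated in the discrete spectrum of the non-selfadjoint product $L_0 L_\mu$. A Schur-complement or resolvent argument---treating $-\mu L_0 Q$ as a small perturbation of $L_0^2$ and incorporating the imaginary-part bound $|\Im z| \leq \mu \normt{Q}/2$ of Proposition~\ref{Prop_bound_on_Im_z}---should force the essential-spectrum component of $\varphi_1$ to be small and restore the effective bound $t \gtrsim 2\omega$. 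I expect this localization estimate to be the technical heart of Lemma~\ref{technical_lemma}, after which Theorem~\ref{Bound_on_Re_z2_and_Im_z_general_statement} follows qualitatively by passing to the non-relativistic limit in the explicit quantitative bound.
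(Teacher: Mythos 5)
Your opening reduction is sound and similar in spirit to the paper: write the block system $L_0\varphi_2 = z\varphi_1$, $L_\mu\varphi_1 = z\varphi_2$, combine into $L_0 L_\mu \varphi_1 = z^2\varphi_1$, pair with $\varphi_1$, and use self-adjointness of $L_0$ to get $z^2\norm{\varphi_1}^2 = \norm{L_0\varphi_1}^2 - \mu\pscal{L_0\varphi_1,Q\varphi_1}$, hence $\Re z^2 \geq t^2 - \mu\normt{Q}\,t$ with $t := \norm{L_0\varphi_1}/\norm{\varphi_1}$. But you have correctly identified, and not closed, the essential gap: this bound is vacuous unless you can show $t$ stays bounded away from zero, and the only spectral information you invoke ($\varphi_1 \perp \phi_0$, Theorem~\ref{groundstate}) gives only $t \geq m-\omega$, which degenerates exactly in the regime $\omega \to m$ you need.

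Your suggested repair --- that the off-axis condition plus a resolvent/Schur argument forces the essential-spectrum component of $\varphi_1$ near the inner threshold to be small --- is not what actually happens, and I do not see how it could be made to work. What the paper exploits instead (Lemma~\ref{Identity_Re_z2}) are two further consequences of $z \notin \R \cup i\R$ that you do not use: first, $\varphi_1 \perp \phi_{-2\omega}$ (from $z^2 \neq 4\omega^2$), so $\varphi_1 = (P_+ + P_-)\varphi_1$ lies entirely in $\ran P_+ \oplus \ran P_-$; and second, and decisively, $\pscal{\varphi_1,\varphi_2} = 0$, from which $\pscal{\varphi_1, L_\mu\varphi_1} = 0$ and $\pscal{\varphi_1, L_0^{-1}\varphi_1} = 0$. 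The latter identity, rewritten as $\pscal{\varphi_+, \abs{L_0}^{-1}\varphi_+} = \pscal{\varphi_-, \abs{L_0}^{-1}\varphi_-}$, is the real lever: the negative part $\varphi_-$ lives where $\abs{L_0} \geq t+\omega > 2\omega$, so its $\abs{L_0}^{-1}$-weight is uniformly small, and the equality then forces the positive part $\varphi_+$ --- which is allowed to sit near the inner threshold $m-\omega$ --- to have small $L^2$-norm relative to $\varphi_-$. It is not that the essential-spectrum component of $\varphi_1$ is small; rather, its size is controlled through this balance, quantified by the parameter $\eta = \norm{\varphi_+}^2/\norm{\varphi_-}^2$. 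Combined with a Jensen-type inequality for $\pscal{\varphi_+, L_0\varphi_+}$ against the spectral measure and an optimization over a free parameter $\alpha\in[0,1]$, this yields the paper's genuine lower bound (Theorem~\ref{Thm_ineq_Re_z2} and Lemma~\ref{technical_lemma}); the soft qualitative statement you are asked to prove then follows from $\normt{Q_\omega}\to 0$. Without the zero-pairing $\pscal{\varphi_1,\varphi_2}=0$ and the resulting balance identity, the bound $\Re z^2 \geq t^2 - \mu\normt{Q}\,t$ alone cannot be closed.
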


Throughout this section we assume that $z \in \C $ is an eigenvalue of~$H_\mu$ with associated eigenvector $\varphi = (\varphi_1, \varphi_2)\transp$, where~$\varphi_1$ and~$\varphi_2$ are two-component spinors. We introduce the notations
\begin{equation*}
    P_- := E_{(-\infty, -2\omega)}(L_0) \quad \textrm{ and } \quad P_+ := E_{(0, \infty)}(L_0)
\end{equation*}
for the spectral projectors of~$L_0$ on the corresponding intervals. Note that the eigenfunctions $\phi_{-2\omega}$ and $\phi_0$, respectively associated to $-2\omega$ and~$0$, are excluded from the range of these projectors.
For a concise notation in the proofs below, we also define 
\[
    Q_{++} = P_{+} Q P_{+}, \quad Q_{--} = P_{-} Q P_{-}, \quad \textrm{ and } \quad Q_{+-} = Q_{-+}^* =  P_{+} Q P_{-} \,.
\]

We first establish an identity for eigenvalues.
\begin{lemma}\label{Identity_Re_z2}
    Let $f$ satisfy Assumption~\ref{Assumption_general_nonlinearity}, $\omega\in(0,m)$, and $\mu\geq0$. Assume that $z \in \C \setminus(\R \cup i \R)$ is an eigenvalue of~$H_\mu$ with eigenfunction $(\varphi_1, \varphi_2)\transp$.
    Then $\varphi_1 = (P_+ + P_-)\varphi_1$, $\pscal{\varphi_1, \varphi_2}= 0$, and
    \begin{align}\label{eq:bound_real_z^2}
        \Re z^2 = \frac{ \pscal{\varphi_1, \big(P_+ L_\mu P_+ - P_- L_\mu P_-\big) \varphi_1} }{ \pscal{\varphi_1, \abs{L_0}^{-1} \varphi_1} }\,,
    \end{align}
    where $L_0^{-1} \varphi_1$ is well-defined since $\varphi_1 \in \ker(L_0)^{\perp}$.
\end{lemma}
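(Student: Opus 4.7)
The plan is to combine the two eigenvalue equations $L_0\varphi_2=z\varphi_1$ and $L_\mu\varphi_1=z\varphi_2$ with the fact, from Theorem~\ref{groundstate}, that $L_0$ has no spectrum in $(-2\omega,0)$, so that $\Id = P_+ + P_- + E_{\{-2\omega\}}(L_0) + E_{\{0\}}(L_0)$.

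\textbf{Orthogonality conditions.} First I would show that $\varphi_1$ is orthogonal to $\phi_0$ and $\phi_{-2\omega}$, which yields $\varphi_1=(P_++P_-)\varphi_1$. Pairing $L_0\varphi_2=z\varphi_1$ with $\phi_0$ and using $L_0\phi_0=0$ immediately gives $z\pscal{\phi_0,\varphi_1}=0$, hence $\pscal{\phi_0,\varphi_1}=0$ since $z\neq 0$. For $\phi_{-2\omega}$, I would project both equations onto $\phi_{-2\omega}$, using $L_\mu\phi_{-2\omega}=-2\omega\phi_{-2\omega}$ (which holds because $Q\phi_{-2\omega}=0$ by Proposition~\ref{Prop:eigenvalues_symmetry}). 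This produces the linear system
\[
    -2\omega\,\pscal{\phi_{-2\omega},\varphi_1}=z\,\pscal{\phi_{-2\omega},\varphi_2},\qquad -2\omega\,\pscal{\phi_{-2\omega},\varphi_2}=z\,\pscal{\phi_{-2\omega},\varphi_1},
\]
whence $(z^2-4\omega^2)\pscal{\phi_{-2\omega},\varphi_1}=0$; since $z\notin\R\cup i\R$, $z^2$ cannot equal the positive real number $4\omega^2$, so $\pscal{\phi_{-2\omega},\varphi_1}=0$. Next, pairing $L_0\varphi_2=z\varphi_1$ with $\varphi_2$ and $L_\mu\varphi_1=z\varphi_2$ with $\varphi_1$ and using self-adjointness shows that both $z\pscal{\varphi_2,\varphi_1}$ and $z\pscal{\varphi_1,\varphi_2}$ are real. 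Setting $a:=\pscal{\varphi_1,\varphi_2}$, writing $za\in\R$ and $z\bar a\in\R$ and subtracting/adding gives $\Re(z)\Im(a)=0=\Im(z)\Re(a)$; since $\Re z\neq 0$ and $\Im z\neq 0$, this forces $a=0$.

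\textbf{Main identity.} I would then write $\varphi_2=a\phi_0+\varphi_2''$ with $\varphi_2''\in\ran(P_++P_-)$, noting that no $\phi_{-2\omega}$ component can appear because $L_0\varphi_2=z\varphi_1\in\ran(P_++P_-)$ while $L_0\phi_{-2\omega}=-2\omega\phi_{-2\omega}\notin\ran(P_++P_-)$. On $\ran(P_++P_-)$, $L_0$ is boundedly invertible, so $\varphi_2''=zL_0^{-1}\varphi_1$ and therefore
\[
    L_\mu\varphi_1=z\varphi_2=z^2L_0^{-1}\varphi_1+za\phi_0.
\]
Now I would test against $J\varphi_1:=(P_+-P_-)\varphi_1\in\ran(P_++P_-)$, which annihilates the $\phi_0$ term, yielding $\pscal{J\varphi_1,L_\mu\varphi_1}=z^2\pscal{J\varphi_1,L_0^{-1}\varphi_1}$. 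Since $L_0=J|L_0|$ on $\ran(P_++P_-)$ with $J=J^*=J^{-1}$, the right-hand side equals $z^2\pscal{\varphi_1,|L_0|^{-1}\varphi_1}$, which is real and (using $\varphi_1\neq 0$, since $\varphi_1=0$ would force $\varphi=0$) strictly positive. Taking real parts of both sides, the cross contributions $\pscal{P_+\varphi_1,L_\mu P_-\varphi_1}$ and $\pscal{P_-\varphi_1,L_\mu P_+\varphi_1}$ are complex conjugates by self-adjointness of $L_\mu$, so they cancel, leaving
\[
    \pscal{\varphi_1,(P_+L_\mu P_+-P_-L_\mu P_-)\varphi_1}=\Re(z^2)\,\pscal{\varphi_1,|L_0|^{-1}\varphi_1}.
\]
Dividing by the positive denominator yields~\eqref{eq:bound_real_z^2}.

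The only point that requires real care is the fact that $\varphi_2$ can have a nonzero component along $\ker(L_0)$, which is why it is necessary to test the identity against $J\varphi_1\in\ran(P_++P_-)$ rather than $|L_0|^{-1}\varphi_1$ directly, so that the $\phi_0$ contribution is killed by orthogonality.
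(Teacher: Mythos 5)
Your proof is correct and follows essentially the same route as the paper's: identical orthogonality arguments for $\langle\phi_0,\varphi_1\rangle$, $\langle\phi_{-2\omega},\varphi_1\rangle$, and $\langle\varphi_1,\varphi_2\rangle$, followed by the same manipulation of $L_\mu\varphi_1 = z^2 L_0^{-1}\varphi_1 + (\ker L_0\text{-term})$ projected onto $\ran(P_+\pm P_-)$. The only cosmetic difference is that you pair once against $(P_+-P_-)\varphi_1$ and observe the off-diagonal terms are complex conjugates, whereas the paper pairs against $P_+\varphi_1$ and $P_-\varphi_1$ separately, subtracts, and uses $P_+L_\mu P_-=-\mu Q_{+-}$ before taking real parts — the content is the same.
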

\begin{proof}
    Since there are no eigenvalues of~$L_0$ in~$(-2\omega, 0)$ by Theorem~\ref{groundstate}, we only 
    have to check that $\varphi_1$ is orthogonal to the eigenfunctions $\phi_0$ and~$\phi_{-2\omega}$ (associated to $0$ and~$-2 \omega$, respectively) in order to prove $\varphi_1 = (P_+ + P_-)\varphi_1$. 
    Taking the inner product of~\eqref{eq:eigen_of_H_split_a} with $\phi_0$ gives $z \pscal{\phi_0, \varphi_1} =0$. Since $z \neq 0$, this gives the first ortogonality condition. 
    The eigenvalue equation for $H_\mu^2$ reads
    \begin{equation}\label{eq:EV_H_mu^2}
        \begin{cases}
            L_\mu L_0 \varphi_2 = z^2 \varphi_2 \\
           L_0 L_\mu \varphi_1 = z^2 \varphi_1.
        \end{cases}
    \end{equation}
    Since $L_\mu \phi_{-2\omega} = L_0 \phi_{-2\omega}$, taking the inner product of the second line with $\phi_{-2\omega}$ gives 
    \[
        z^2 \pscal{\phi_{-2\omega}, \varphi_1} = 4 \omega^2  \pscal{\phi_{-2\omega}, \varphi_1}.
    \]
    Given that we assume $z^2 \notin \R$, this implies the orthogonality of~$\varphi_1$ to $\phi_{-2\omega}$.
    
    To prove $\pscal{\varphi_1, \varphi_2}= 0$, we start by taking the scalar product of~\eqref{eq:eigen_of_H_split_a} and~\eqref{eq:eigen_of_H_split_b} respectively with $\varphi_2$ and~$\varphi_1$.
    We obtain
    \[
        \left\{
        \begin{aligned}
            &\pscal{\varphi_2, L_0 \varphi_2} = z \pscal{\varphi_2, \varphi_1} \\
            &\pscal{\varphi_1, L_\mu \varphi_1} = z \pscal{\varphi_1, \varphi_2}.  
        \end{aligned}
        \right.
    \]
    By selfadjointness of~$L_0$ and~$L_\mu$, the left hand side of both equations is real.
    Taking their sum and difference gives
    \[
        \left\{
        \begin{aligned}
           2 &z \Re \pscal{\varphi_2, \varphi_1} =  \pscal{\varphi_2,  L_0 \varphi_2}+  \pscal{\varphi_1,  L_\mu \varphi_1} \in\R \\
           2 i &z \Im \pscal{\varphi_2, \varphi_1} =  \pscal{\varphi_2,  L_0 \varphi_2} -  \pscal{\varphi_1,  L_\mu \varphi_1} \in\R\,.  
        \end{aligned}
        \right.
    \]
    The first line yields $\Re\pscal{\varphi_2, \varphi_1} = 0 $, since $z\not\in\R$, and the second gives $\Im \pscal{\varphi_2, \varphi_1} =0$, since $z\not\in i\R$.
    
    We now turn to the proof of~\eqref{eq:bound_real_z^2}. We apply $L_0^{-1}$ to the second line of~\eqref{eq:EV_H_mu^2} and take the scalar product with $P_+ \varphi_1$ and~$P_- \varphi_1$. This gives
    \[
        \left\{
        \begin{aligned}
            &\pscal{P_+ \varphi_1, L_\mu \varphi_1} = z^2 \pscal{P_+ \varphi_1, L_0^{-1} \varphi_1} \\
            &\pscal{P_- \varphi_1, L_\mu \varphi_1} = z^2  \pscal{P_- \varphi_1, L_0^{-1} \varphi_1}. 
        \end{aligned}
        \right.
    \]
    We insert $\varphi_1 = (P_+ + P_-)\varphi_1$, and use the identity
    \[
        P_+ L_{\mu} P_- = P_+ L_{0} P_- - \mu P_+ Q P_- = - \mu Q_{+-}\,.
    \]
  We are left with
    \[
        \left\{
        \begin{aligned}
            &\pscal{ \varphi_1, P_+L_\mu P_+\varphi_1} - \mu \pscal{ \varphi_1, Q_{+-}\varphi_1} = z^2  \pscal{ \varphi_1, P_+ L_0^{-1}P_+ \varphi_1} \\
            &\pscal{ \varphi_1, P_- L_\mu P_-\varphi_1} - \mu  \pscal{ \varphi_1, Q_{-+}\varphi_1} = z^2  \pscal{ \varphi_1, P_- L_0^{-1}P_- \varphi_1}. 
        \end{aligned}
        \right.
    \]
    Subtracting both identities yields
    \[
        \pscal{ \varphi_1, \left( P_+ L_\mu P_+ - P_- L_\mu P_- \right)\varphi_1} - \mu  \pscal{ \varphi_1, \left(Q_{+-}- Q_{-+}\right)\varphi_1} = z^2  \pscal{ \varphi_1, \abs{L_0}^{-1} \varphi_1}. 
    \]
    Taking the real part eliminates the second term in the l.h.s.\ and gives identity~\eqref{eq:bound_real_z^2}.
\end{proof}

We exploit this identity to derive a lower bound on~$\Re z^2$. This bound depends on
\begin{equation}\label{Def_t}
    t := \min\{\lambda \,|\, \lambda \in \sigma(L_0 + \omega\Id) \cap (\omega,m] \}\,,
\end{equation}
which is either the smallest eigenvalue of $L_0 + \omega\Id$ above $\omega$, or $m$ (the bottom of its positive essential spectrum).

\begin{theorem}\label{Thm_ineq_Re_z2}
    Let $f$ satisfy Assumption~\ref{Assumption_general_nonlinearity}, $\omega\in(0,m)$, $\mu\geq0$, $\alpha\in[0,1]$, and~$t$ be as in~\eqref{Def_t}. Assume that $z \in \C \setminus(\R \cup i \R)$ is an eigenfunction of~$H_\mu$ with eigenfunction $(\varphi_1, \varphi_2)\transp$, and define $\eta>0$ as
    \[
        \eta := \frac{\norm{P_+ \varphi_1}^2}{\norm{P_- \varphi_1}^2}\,.
    \]
	
	If
	\begin{equation}\label{Lemma_ineq_Re_z2_restriction}
	    \left(1-\alpha^2\right)\eta^2 + (1+\alpha)^2 \geq \mu\eta  \frac{\normt{Q}}{t+\omega}\,,
	\end{equation}
	then
	\begin{equation}\label{Lemma_ineq_Re_z2_inequality}
	    \Re z^2 \geq \frac{(t+\omega)^2}{2} \left(\left(1-\alpha\right)\eta^2 + (1+\alpha) - \frac{\eta}{1+\alpha} \mu \frac{\normt{Q}}{t+\omega}\right) \geq 0\,.
	\end{equation}
\end{theorem}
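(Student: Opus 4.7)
Plan. My plan is to proceed from the identity of Lemma~\ref{Identity_Re_z2},
\[
\Re z^2 = \frac{\pscal{\varphi_1, (P_+ L_\mu P_+ - P_- L_\mu P_-)\varphi_1}}{\pscal{\varphi_1, |L_0|^{-1}\varphi_1}}\,,
\]
by decomposing $\varphi_1 = \varphi_1^+ + \varphi_1^-$ with $\varphi_1^\pm := P_\pm \varphi_1$, setting $a := \norm{\varphi_1^+}^2$ and $b := \norm{\varphi_1^-}^2$ so that $\eta = a/b$. Since $L_0 > 0$ on $\ran(P_+)$ and $L_0 < 0$ on $\ran(P_-)$, we have $P_+ L_0 P_+ - P_- L_0 P_- = |L_0|$ on $\ran(P_+ + P_-)$, and the numerator rewrites as
\[
N = \pscal{\varphi_1, |L_0|\varphi_1} - \mu\pscal{\varphi_1, (Q_{++} - Q_{--})\varphi_1}\,.
\]

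I would then apply the spectral bounds coming from the definition of $t$, Theorem~\ref{groundstate}, and the symmetry of $\sigma(L_0)$ around $-\omega$ (Proposition~\ref{WellKnownProperties}\,\emph{\ref{WellKnownProperties_L0_EVs}}): $|L_0| \geq (t-\omega)\Id$ on $\ran(P_+)$ and $|L_0| \geq (t+\omega)\Id$ on $\ran(P_-)$, with the corresponding reciprocal bounds on $|L_0|^{-1}$. The $\mu Q$ correction is controlled using $Q \geq 0$ (hence $Q_{\pm\pm} \geq 0$, so $\pscal{\varphi_1, Q_{--}\varphi_1} \geq 0$) and $\pscal{\varphi_1, Q_{++}\varphi_1} \leq \normt{Q} a$. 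Together, these produce pointwise bounds on $N$ and on $D := \pscal{\varphi_1,|L_0|^{-1}\varphi_1}$ in terms of $a$, $b$, $\mu\normt{Q}$, and $t\pm\omega$.

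The parameter $\alpha \in [0,1]$ enters through a weighted interpolation between the $P_+$ and $P_-$ contributions in these bounds. The idea is to balance the two estimates in such a way that the resulting quotient $N/D$ admits a lower bound of the specific form appearing in~\eqref{Lemma_ineq_Re_z2_inequality}; the condition~\eqref{Lemma_ineq_Re_z2_restriction} arises precisely as the requirement that this lower bound be non-negative, which one can check by multiplying the bracket in~\eqref{Lemma_ineq_Re_z2_inequality} by $2(1+\alpha)/(t+\omega)$ and recognising the left-hand side of~\eqref{Lemma_ineq_Re_z2_restriction}.

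The main technical obstacle is achieving the factor $(t+\omega)^2$ throughout the leading term (rather than the weaker $(t-\omega)^2$ that the naive $P_+$ spectral bound produces). Overcoming this requires exploiting the fact that $N/D$ represents a single quantity $\Re z^2$: the constraint, combined with the much stronger bound $|L_0|\geq (t+\omega)\Id$ available on $\ran(P_-)$, effectively ``transports'' that stronger bound to the $P_+$ contribution once the latter is weighted appropriately by $\eta$, with the freedom encoded by the parameter $\alpha$ tuning the relative weight of the $\eta^2$ and constant terms in the final bound.
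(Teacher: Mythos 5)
Your starting point is right — Lemma~\ref{Identity_Re_z2}, the split $\varphi_1 = \varphi_+ + \varphi_-$, and the spectral bounds $|L_0| \geq (t-\omega)\Id$ on $\ran(P_+)$, $|L_0| \geq (t+\omega)\Id$ on $\ran(P_-)$. You also correctly identify the central obstacle: the naive bounds put a $(t-\omega)$ in front of the $\varphi_+$-contributions and would give at best a $(t-\omega)^2$ prefactor, not the claimed $(t+\omega)^2$.

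But the sketch never supplies the mechanism; the sentence about the constraint ``effectively transporting'' the stronger bound is a hope, not an argument. The two missing ingredients are the orthogonality identities and Jensen's inequality, and without them your plan does not close. Concretely, from $\pscal{\varphi_1,\varphi_2}=0$ and the eigenvalue equations~\eqref{eq:eigen_of_H_split_a}--\eqref{eq:eigen_of_H_split_b} one gets
\[
\pscal{\varphi_1, L_\mu \varphi_1} = 0
\qquad\text{and}\qquad
\pscal{\varphi_1, L_0^{-1}\varphi_1} = 0\,.
\]
The second identity reads $\pscal{\varphi_+, |L_0|^{-1}\varphi_+} = \pscal{\varphi_-, |L_0|^{-1}\varphi_-}$, so the denominator becomes $D = 2\pscal{\varphi_-,|L_0|^{-1}\varphi_-}\leq 2(t+\omega)^{-1}\norm{\varphi_-}^2$ — the $(t-\omega)^{-1}$ never appears. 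For the numerator one applies Jensen's inequality to the spectral measure of $L_0$ in the state $\varphi_+$,
\[
\frac{\pscal{\varphi_+, L_0 \varphi_+}}{\norm{\varphi_+}^2} \geq \left(\frac{\pscal{\varphi_+, L_0^{-1}\varphi_+}}{\norm{\varphi_+}^2}\right)^{-1} = \frac{\norm{\varphi_+}^2}{\pscal{\varphi_-,|L_0|^{-1}\varphi_-}} \geq \eta^2(t+\omega)\frac{\norm{\varphi_-}^2}{\norm{\varphi_+}^2}\cdot\frac{\norm{\varphi_+}^2}{\norm{\varphi_-}^2}\,,
\]
so $\pscal{\varphi_+, L_0\varphi_+} \geq \eta^2 (t+\omega)\norm{\varphi_-}^2$. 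That is how the $(t+\omega)$ bound from the $P_-$ sector is transferred to the $P_+$ sector, weighted by $\eta^2$. The first identity, rewritten as $\pscal{\varphi_+, L_\mu\varphi_+} = -\pscal{\varphi_-,L_\mu\varphi_-} + \mu\pscal{\varphi_1,(Q_{+-}+Q_{-+})\varphi_1}$, gives the complementary estimate; $\alpha$ interpolates between the two, and the off-diagonal $Q_{+-}, Q_{-+}$ blocks must then be handled by completing a square (not by dropping $Q_{--}\geq 0$ alone, as you propose). As written, your estimate of the numerator by $(t-\omega)\norm{\varphi_+}^2 + (t+\omega)\norm{\varphi_-}^2$ against a denominator bounded by $(t-\omega)^{-1}\norm{\varphi_+}^2 + (t+\omega)^{-1}\norm{\varphi_-}^2$ does not yield the theorem for general $\eta$.
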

\begin{proof}
    Combining the eigenvalue equations~\eqref{eq:eigen_of_H_split_a} and~\eqref{eq:eigen_of_H_split_b} with $\pscal{\varphi_1, \varphi_2}= 0$, we obtain
    \begin{equation}\label{Orthogonality_identities}
        \pscal{\varphi_1, L_\mu \varphi_1} = 0\,,  \quad \textrm{ and } \quad \pscal{\varphi_1, L_0^{-1} \varphi_1} = 0\,.
    \end{equation}
     For shortness, we define $\varphi_+ := P_+ \varphi_1$ and~$\varphi_- := P_- \varphi_1$. The above identities give
    \begin{equation}\label{eq:0-exp-value-2}
        \pscal{\varphi_+, L_\mu \varphi_+} + \pscal{\varphi_-, L_\mu \varphi_-} - \mu \pscal{\varphi_1,(Q_{+-} + Q_{-+}) \varphi_1} = 0
    \end{equation}
    and
    \begin{equation}\label{eq:0-exp-value}
        \pscal{\varphi_+, \abs{L_0}^{-1} \varphi_+} - \pscal{\varphi_-, \abs{L_0}^{-1}\varphi_-} = 0\,.
    \end{equation}
    As a first consequence of these identities, $\varphi_+ \neq 0 \neq \varphi_-$ hence $\eta>0$ is well-defined. Indeed, if $\varphi_+=0$ or $\varphi_-=0$, then the other one is also trivial leading by Lemma~\ref{Identity_Re_z2} to $\varphi_1=0$ which in turn yields, by the eigenvalue equation for $H_\mu$, the contradiction $z\varphi_2=0$.

    From the symmetry w.r.t.\ $0$ of the spectrum of~$L_0 + \omega\Id$ and the definition of~$t$,
    the spectrum of $P_- L_0$, restricted to $\operatorname{Ran} (P_-)$, is contained in the interval in $(-\infty, -t-\omega]$. Using that $|L_0| = -L_0$ on $\operatorname{Ran} (P_-)$, we obtain
    \begin{equation}\label{eq:bound-neg}
        \pscal{\varphi_-, \abs{L_0} \varphi_-} \geq (t+\omega) \norm{\varphi_-}^2
        \quad \text{ and } \quad
        \pscal{\varphi_-, \abs{L_0}^{-1} \varphi_-} \leq (t+\omega)^{-1} \norm{\varphi_-}^2.
    \end{equation}
    
    We now bound all quantities appearing in~\eqref{eq:bound_real_z^2}.
    For its denominator, combining~\eqref{eq:0-exp-value} with~\eqref{eq:bound-neg} gives
    \[
        \pscal{\varphi_1, \abs{L_0}^{-1} \varphi_1} = 2  \pscal{\varphi_-, \abs{L_0}^{-1} \varphi_-}  \leq 2 ( t + \omega)^{-1} \norm{\varphi_-}^2.
    \]

    The first term in the numerator of~\eqref{eq:bound_real_z^2} has to be treated in two different ways depending on the value of~$\eta$.
    From~\eqref{eq:0-exp-value-2}, we obtain on one hand
    \[
        \pscal{\varphi_+, L_\mu \varphi_+} = - \pscal{\varphi_-, L_\mu \varphi_-} + \mu \pscal{\varphi_1, (Q_{+-} + Q_{-+}) \varphi_1},
    \]
    which will be a useful identity to obtain bounds for small values of~$\eta$. 
    
    On the other hand, for large values of~$\eta$, we use Jensen's inequality for the convex function $0<\lambda \mapsto 1/\lambda$. This inequality reads
    \[
    	E_{\mathsf{m}}[\lambda] \geq E_{\mathsf{m}}[\lambda^{-1}]^{-1}, 
    \]
    where $E_\mathsf{m}$ denotes the expectation value with respect to any probability measure $\mathsf{m}$ on~$(0, +\infty)$.
    We take for $\mathsf{m}$ the probability measure $\norm{\varphi_+}_{L^2}^{-2} \mathsf{m}(L_0)_{\varphi_+}$, where $\mathsf{m}(L_0)_{\varphi_+}$ denotes the spectral measure of $L_0$ in the state $\varphi_+$, i.e., for any interval $I$, 
    \[
    	\mathsf{m}(L_0)_{\varphi_+}(I) = \pscal{\varphi_+, \Id_I(L_0) \varphi_+ }.
    \]
    This gives
    \[
       \frac{  \pscal{\varphi_+, L_0 \varphi_+} }{\norm{\varphi_+}^2}  \geq \left(  \frac{  \pscal{\varphi_+, L_0^{-1} \varphi_+} }{\norm{\varphi_+}^2} \right)^{-1} = \frac{\norm{\varphi_+}^2}{\pscal{\varphi_-, \abs{L_0}^{-1} \varphi_-}}\,,
    \]
    where we have also used~\eqref{eq:0-exp-value}.
    Combining it with~\eqref{eq:bound-neg} and the definition of~$\eta$, we obtain
    \begin{align*}
          \pscal{\varphi_+, L_\mu \varphi_+} 
          &= \pscal{\varphi_+, L_0 \varphi_+} - \mu \pscal{\varphi_+, Q \varphi_+} \\
          & \geq \norm{\varphi_+}^4 (t+\omega) \norm{\varphi_-}^{-2}- \mu \pscal{\varphi_+, Q \varphi_+} = \eta^2 (t+ \omega) \norm{\varphi_-}^{2}- \mu \pscal{\varphi_+, Q \varphi_+}.
    \end{align*}

    Inserting these two statements on~$\pscal{\varphi_+, L_\mu \varphi_+}$ into~\eqref{eq:bound_real_z^2}, we obtain
    \begin{align*}
        2\pscal{\varphi_-, \abs{L_0}^{-1}\varphi_-}\Re z^2 &= \alpha \pscal{\varphi_+, L_\mu \varphi_+} + (1-\alpha)\pscal{\varphi_+, L_\mu \varphi_+}  - \pscal{\varphi_-, L_\mu \varphi_-} \\
        &\geq
        \begin{multlined}[t]
            \alpha  \left(- \pscal{\varphi_-, L_\mu \varphi_-} + \mu \pscal{\varphi_1, (Q_{+-} + Q_{-+}) \varphi_1}\right) \\
            + (1-\alpha)\left( \eta^2 (t+ \omega) \norm{\varphi_-}^{2}- \mu \pscal{\varphi_+, Q \varphi_+}\right) - \pscal{\varphi_-, L_\mu \varphi_-},
        \end{multlined}
    \end{align*}
    for all $\alpha \in [0,1]$. After grouping terms and using again~\eqref{eq:bound-neg}, we are left with
    \[
       2\pscal{\varphi_-, \abs{L_0}^{-1}\varphi_-}\Re z^2 \geq
        \begin{multlined}[t]
            \left(\alpha (1-\eta^2)+ 1 + \eta^2 \right) (t+\omega)\norm{\varphi_-}^{2} \\
            + \mu \pscal{\varphi_1, \left((1+\alpha)Q_{--} + \alpha Q_{+-} + \alpha Q_{-+} - (1-\alpha) Q_{++}\right) \varphi_1}.
        \end{multlined}
    \]
     In the last term, we ``complete the square'' to obtain
     \begin{multline*}
        \pscal{\varphi, \bigl((1+\alpha)Q_{--} + \alpha Q_{+-} + \alpha Q_{-+} - (1-\alpha) Q_{++}\bigr) \varphi} \\
        \begin{aligned}
            &= (1+\alpha)\pscal{\varphi_- + \tfrac{\alpha}{1+\alpha} \varphi_+, Q\left(\varphi_- + \tfrac{\alpha}{1+\alpha} \varphi_+\right)} -\tfrac{\alpha^2}{1+\alpha} \pscal{\varphi_+, Q \varphi_+} - (1-\alpha)\pscal{\varphi_+, Q \varphi_+} \\
            &\geq - \left(\tfrac{\alpha^2}{1+\alpha}  + (1-\alpha)\right) \normt{Q} \norm{\varphi_+}^2 = -\tfrac{\eta}{1+ \alpha} \normt{Q} \norm{\varphi_-}^2 .
        \end{aligned}
     \end{multline*}
     This finally gives
     \[
        2\frac{\pscal{\varphi_-, \abs{L_0}^{-1}\varphi_-}}{\norm{\varphi_-}^2 }\Re z^2 
        \geq \bigl(\alpha (1-\eta^2)+ 1 + \eta^2 \bigr) (t+\omega) - \frac{\mu \eta}{1+\alpha}  \normt{Q}, \quad \forall\, \alpha \in [0,1]\,.
    \]
  
    Now, if~\eqref{Lemma_ineq_Re_z2_restriction} ---in Theorem~\ref{Thm_ineq_Re_z2}--- holds, i.e., nonnegative r.h.s., then the l.h.s.\ satisfies
    \[
        2(t+\omega)^{-1}\Re z^2 \geq 2\frac{\pscal{\varphi_-, \abs{L_0}^{-1}\varphi_-}}{\norm{\varphi_-}^2 }\Re z^2 \geq 0
    \]
    by the second inequality in~\eqref{eq:bound-neg}. Finally, still using that the r.h.s.\ is nonnegative, we conclude the proof of the lemma:
    \[
        \Re z^2 \geq \frac{(t+\omega)^2}{2} \left(\left(1-\alpha\right)\eta^2 + (1+\alpha) - \frac{\eta}{1+\alpha} \mu \frac{\normt{Q}}{t+\omega}\right) \geq 0, \quad \forall\, \alpha \in [0,1]\,. \qedhere
    \]
\end{proof}

An immediate consequence of the last bound is Corollary~\ref{Corollary_no_imaginary_evs}.
\begin{proof}[Proof of Corollary~\ref{Corollary_no_imaginary_evs}]
	Recall that $H_0$ has only real eigenvalues. Consider the eigenvalue branches of $H_\mu$ as $\mu$ increases from $0$ to $2$.
	We pick $\alpha = 1/2$ in the last formula of the previous proof and check that $\Re z^2 \geq 0$ if $\mu \normt{Q}/(t+\omega) \leq 3\sqrt{3}/2$.
	Since $t>\omega$ and $0\leq\mu\leq2$, condition~\eqref{eq:simplified_corollary} guarantees that
	\[
	    \Re z^2\geq 0 \quad \Leftrightarrow \quad |\Im z| \leq |\Re z|\,.
	\]
	Since $H_\mu$ is an analytic family, any branch of eigenvalues that goes to the imaginary axis must pass through zero. But this is not possible since, by Theorem~\ref{thm_VK_intro}, the algebraic multiplicity of zero, as an eigenvalue of~$H_\mu$, is constant (equal to $2$) for $\mu\in (0,2)$. 
\end{proof}

In order to complete the proof of Theorem~\ref{Bound_on_Re_z2_and_Im_z_general_statement}, we optimize in $\alpha$ in Theorem~\ref{Thm_ineq_Re_z2} and obtain the following lemma.
In order to state it, we define $\theta_+:[0,2)\to(0,3\sqrt{3}/8]$ by
\begin{equation}\label{General_nonlinearity_lower_bound_Re_z2_Def_theta_plus}
	\theta_+(\xi) := 2\frac{\left(2 - 3\xi + \sqrt{9(2-\xi)^2+8\xi}\right) \left(6-3\xi + \sqrt{9(2-\xi)^2+8\xi}\right)^{\frac{3}{2}}}{\left(14 - 9 \xi + 3 \sqrt{9(2-\xi)^2+8\xi}\right)^2}\,.
\end{equation}
\begin{lemma}\label{technical_lemma}
    Let $f$ satisfy Assumption~\ref{Assumption_general_nonlinearity}, $t$ be as in~\eqref{Def_t}, and $\theta_+$ as in~\eqref{General_nonlinearity_lower_bound_Re_z2_Def_theta_plus}. Assume that $\omega \in (0,m)$, $E \in [0,2m)$, and $\mu \geq 0$ are such that $E< t+\omega$ and
    \begin{equation}\label{General_nonlinearity_lower_bound_Re_z2_condition_on_Q}
        \frac{\mu \normt{Q}}{4 (t+\omega)} \leq \theta_+\!\left( \frac{2 E^2}{(t+\omega)^2} \right)\,.
    \end{equation}
    
    If $z \in \C \setminus (\R \cup i \R)$ is an eigenvalue of $H_\mu$, then
    \[
        \Re z^2 \geq E^2\,.
    \]
\end{lemma}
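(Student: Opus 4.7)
The plan is to invoke Theorem~\ref{Thm_ineq_Re_z2} and reduce the statement to a two-stage optimization. Set $\tau := \normt{Q}/(t+\omega)$ and $\xi := 2E^2/(t+\omega)^2 \in [0,2)$. Observe first that the restriction~\eqref{Lemma_ineq_Re_z2_restriction} is, after dividing by $1+\alpha$, exactly the condition that the right-hand side of~\eqref{Lemma_ineq_Re_z2_inequality} be non-negative. Therefore, to conclude $\Re z^2 \geq E^2$ it suffices to exhibit $\alpha \in [0,1]$ such that
\[
    (1-\alpha)\eta^2 + (1+\alpha) - \frac{\eta\mu\tau}{1+\alpha} \geq \xi.
\]
With the substitution $s := 1+\alpha \in [1,2]$, this is equivalent to $\mu\tau \leq F(\eta, s)$, where $F(\eta, s) := s\left[(2-s)\eta^2 + s - \xi\right]/\eta$. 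Since $\eta > 0$ is determined by the (a priori unknown) eigenvector while $s$ is at our disposal, the lemma reduces to the sharp inequality $\mu\tau \leq \inf_{\eta>0}\sup_{s\in[1,2]} F(\eta,s)$.

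Next I compute the inner supremum in two regimes. The equation $\partial_s F = 0$ yields the critical point $s^\star := (\eta^2 - \xi/2)/(\eta^2 - 1)$, which lies in $(1,2)$ exactly when $\eta^2 > 2-\xi/2$, and at which the critical-point relation simplifies the bracket via $(2-s^\star)\eta^2 + s^\star - \xi = s^\star(\eta^2-1)$, giving $F(\eta, s^\star) = (s^\star)^2 (\eta^2-1)/\eta$. In the complementary range $\eta^2 \leq 2-\xi/2$ one has $\partial_s F>0$ on $[1,2]$, so the supremum is attained at $s=2$ with value $2(2-\xi)/\eta$, which is decreasing in $\eta$ and is minimized at the boundary of the regime. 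The outer infimum is therefore attained in the interior regime, which I reparametrize by $u := s^\star \in (1,2]$, equivalently $\eta^2 = (u-\xi/2)/(u-1)$, yielding
\[
    \sup_{s\in[1,2]} F(\eta,s) = \frac{u^2(1-\xi/2)}{\sqrt{(u-1)(u-\xi/2)}}.
\]
This expression tends to $+\infty$ as $u\to 1^+$ and matches the first-regime boundary value at $u=2$, so its minimum on $(1,2]$ is attained at the unique interior critical point.

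Differentiating $\ln F$ and clearing denominators, that critical point solves the quadratic $4u^2 - (6+3\xi)u + 4\xi = 0$, whose larger root is $u^\star = (6+3\xi+R)/8$ with $R := \sqrt{9(2-\xi)^2+8\xi}$, precisely the radical in~\eqref{General_nonlinearity_lower_bound_Re_z2_Def_theta_plus}; one checks that $u^\star \in (1,2)$ for every $\xi \in [0,2)$. It remains to evaluate $F$ at $u^\star$ and identify the result with $4\theta_+(\xi)$, thereby turning the inequality $\mu\tau \leq F(u^\star)$ into the hypothesis~\eqref{General_nonlinearity_lower_bound_Re_z2_condition_on_Q}. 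The key algebraic tool is the identity
\[
    \left[R-(2-3\xi)\right]\left[R+(2-3\xi)\right] = R^2 - (2-3\xi)^2 = 16(2-\xi),
\]
which converts $u^\star - 1 = (R-(2-3\xi))/8$ into $2(2-\xi)/(R+2-3\xi)$ and, combined with $u^\star - \xi/2 = (6-\xi+R)/8$ and the critical-point relation $4(u^\star)^2 = (6+3\xi)u^\star - 4\xi$, recasts $F(u^\star)$ into the form displayed in~\eqref{General_nonlinearity_lower_bound_Re_z2_Def_theta_plus}. The main obstacle is thus purely computational: correctly tracking the two successive optimizations and performing the final algebraic identification. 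A sanity check at $\xi=0$ yields $u^\star = 3/2$ and $F(u^\star) = 3\sqrt{3}/2 = 4\theta_+(0)$, confirming the normalization.
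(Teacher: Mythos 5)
Your proposal is correct and follows essentially the same strategy as the paper's Appendix~B: starting from Theorem~\ref{Thm_ineq_Re_z2}, one optimizes the right-hand side of~\eqref{Lemma_ineq_Re_z2_inequality} over $\alpha\in[0,1]$ and then over the unknown $\eta>0$, and identifies the resulting threshold with $\theta_+(\xi)$. The one genuine (and pleasant) difference is your parametrization. By dividing through by $\eta$ and working with $F(\eta,s)=s\bigl[(2-s)\eta^2+s-\xi\bigr]/\eta$, you obtain a quantity independent of $\theta$, so the inner maximization over $s\in[1,2]$ splits into only two regimes (since $\partial_s F(\eta,1)=(2-\xi)/\eta>0$ always, $s=1$ is never optimal), whereas the paper's $g_\theta(\alpha,\eta)$ depends on $\theta$ and requires three regimes $\eta\le\eta_\star$, $\eta_\star<\eta<\eta_\circ$, $\eta\ge\eta_\circ$. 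Moreover, your reparametrization $u=s^\star$ turns the outer critical-point equation into a clean quadratic $4u^2-(6+3\xi)u+4\xi=0$, while the paper characterizes the minimizing $\eta_\theta$ implicitly by the higher-degree relation~\eqref{Def_eta_theta} and must then invert it. The two are consistent: one verifies that your $\eta^2=(u^\star-\xi/2)/(u^\star-1)$ equals the paper's threshold value $\eta_\theta^2=(6-3\xi+R)/4$ via the identity $4(6-\xi+R)=(R-2+3\xi)(6-3\xi+R)$, and the final identification $F(u^\star)=4\theta_+(\xi)$ that you leave as ``purely computational'' does indeed go through (it amounts to $\theta_+(\xi)=4\eta^3(\eta^2-1)/(3\eta^2-1)^2$ at $\eta^2=(6-3\xi+R)/4$, which is how~\eqref{General_nonlinearity_lower_bound_Re_z2_Def_theta_plus} is obtained). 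For completeness you should observe that the smaller root of your quadratic lies below $1$, so that $u^\star$ is the unique critical point in $(1,2)$ and the claimed minimum is justified.
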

This lemma being technical, we defer its proof to Appendix~\ref{Appendix_proof_technical_lemma}.

 \begin{remark}
     The restriction $E< t+\omega$ ---which comes from our method of proof (in particular from the fact we have no bound on $\eta>0$ defined in Theorem~\ref{Thm_ineq_Re_z2}, hence we perform some minimization over it)--- implies that, within our method, the intersections of the hyperbola in Figure~\ref{fig:spectrum_H_2} with the real axis cannot go beyond, nor reach, the outer thresholds $\pm(m+\omega)$.
     Of course, a priori bounds on $\eta$ would improve our method and could lead to reach and go beyond the outer thresholds. Going beyond those thresholds is interesting because they are points from which can emerge eigenvalue branches of type~(b) ---see Section~\ref{Section_Main_results}--- when $\omega$ varies. See e.g.~\cite[last paragraph of~p.2]{BouCom-19} for a summary on points from which non-real eigenvalues can emerge.
 \end{remark}

We now complete the proof of Theorem~\ref{Bound_on_Re_z2_and_Im_z_general_statement}.
\begin{proof}[Proof of Theorem~\ref{Bound_on_Re_z2_and_Im_z_general_statement}]
      If $E=0$, since $ \lim_{\omega \to m}\normt{Q_\omega} = 0$ by Proposition~\ref{prop:basic_groundstate_properties} and $t>\omega$ by definition, there exists  $\omega_{E=0} \in (0,m)$ such that for all $\mu \in (0,2]$ and $\omega \in [\omega_{E=0}, m)$,
    \[
        \mu \frac{\normt{Q_\omega}}{4 (t + \omega)} \leq \theta_+(0) = \frac{3\sqrt{3}}{8}\,.
    \]
    By Lemma~\ref{technical_lemma}, this implies~\eqref{Bound_Re_z2_general_statement} in the case $E=0$.
    
    Now, for a fixed $E \in (0, 2m)$, we restrict our attention to $\omega$'s such that $\omega \in [E/2,m)$. In this case, we choose $\xi\equiv\xi(E,\omega,t)$ as
    \[
        \frac{E^2}{2 m^2} \leq \xi := \frac{2 E^2}{(t + \omega)^2} < \frac{E^2}{2 \omega^2} \leq 2\,,
    \]
    so we obtain $\lim_{\omega \to m}\xi(E, \omega, t) = E^2 /(2m^2) < 2 $.
    Since $ \lim_{\omega \to m}\normt{Q_\omega} = 0$ from Proposition~\ref{prop:basic_groundstate_properties}, there exists $\omega_E \in [E/2,m)$ such that for all $\mu \in (0,2]$ and $\omega \in [\omega_E, m)$,
    \[
        \mu \frac{\normt{Q_\omega}}{4 (t + \omega)} \leq \frac{\normt{Q_\omega}}{2 (t + \omega)} \leq \theta_+(\xi)\,.
    \]
    By Lemma~\ref{technical_lemma}, this implies $\Re z^2 \geq (t + \omega)^2 \xi / 2 = E^2$, concluding the proof.
\end{proof}

These results show that, in order to obtain quantitative bounds, we need to estimate or compute $\normt{Q}$. This is done in the next section for power nonlinearities.

\section{Power nonlinearities: explicit estimates}\label{Section_Power_nonlinearity_generalities}
From now on, we consider the case of power nonlinearities $f(s)=s |s|^{p-1}$, $p>0$. They satisfy Assumption~\ref{Assumption_general_nonlinearity}. In that case, the solitary wave solutions
\[
    \phi_0(x):=\phi_0(p, \omega; x) := \begin{pmatrix} v(p, \omega; x) \\ u (p, \omega; x)\end{pmatrix},
\]
can be found by explicitly integrating the ODE (see, e.g.,~\cite{ChuPel-06, CooKhaMihSax-10,LeeKuoGav-75,MerQuiCooKhaSax-12}), and are given for any~$(p,\omega)\in(0,+\infty)\times(0,m)$ by
\begin{align}
    v(x) := v(p, \omega; x) &:= \frac{1}{\sqrt{1-\nu\tanh^2(p\kappa x)}} \left[(p+1)(m-\omega)\frac{1-\tanh^2(p\kappa x)}{1-\nu\tanh^2(p\kappa x)}\right]^{\frac1{2p}} \label{Def_v_p}
    \intertext{and}
    u(x) := u(p, \omega; x) &:= \sqrt\nu \tanh(p\kappa x) \, v(p, \omega; x) \label{Def_u_p} \,,
\end{align}
where we have introduced the parameters
\begin{equation*}
    \kappa = \sqrt{m^2 - \omega^2} \quad \textrm{ and } \quad \nu = \frac{m-\omega}{m+\omega}\in(0, 1)\,.
\end{equation*}

These explicit formulae allow us to compute exactly the contribution of~$\phi_0$ in~$L_0$:
\begin{align}
     f\!\left(v^2 - u^2\right) = \left(v^2 - u^2\right)^p &= (p+1)(m-\omega)\frac{1-\tanh^2(p\kappa \cdot)}{1-\nu\tanh^2(p\kappa \cdot)} \label{v2minusu2tothep}\\
     &= 2m(p+1)\frac{\nu}{1+\nu}\frac{1-\tanh^2(p\kappa \cdot)}{1-\nu\tanh^2(p\kappa \cdot)}\,,\nonumber
\end{align}
where we used $v^2-u^2>0$ to remove the absolute value in the nonlinearity and we wrote it in two ways as the latter point of view will turn out to be useful in some of our proofs.
We immediately read from it that
\begin{equation}\label{infinity_norm_v2minusu2tothep}
    \norm{f\!\left(v^2 - u^2\right)}_{L^\infty(\R)} = \norm{\left(v^2 - u^2\right)^p}_{L^\infty(\R)} = (p+1)(m-\omega) = 2m(p+1)\frac{\nu}{1+\nu}\,.
\end{equation}

Inserting~\eqref{Def_v_p} and~\eqref{Def_u_p} in the definition~\eqref{Def_of_Q} of~$Q$ leads to the explicit expression
\begin{equation}\label{Expression_of_Q}
    \begin{aligned}[b]
        Q &= 
        p \left(v^2 - u^2\right)^{p-1} \begin{pmatrix} v^2 & -uv \\ -uv & u^2\end{pmatrix}
        \\
        &= p (p+1)(m-\omega)\frac{1-\tanh^2(p\kappa\cdot)}{\left(1-\nu\tanh^2(p\kappa\cdot)\right)^2} \begin{pmatrix} 1 & -\sqrt\nu \tanh(p\kappa\cdot) \\ -\sqrt\nu \tanh(p\kappa\cdot) & \nu \tanh^2(p\kappa\cdot)\end{pmatrix}.
    \end{aligned}
\end{equation}

\subsection{Rescaled operators}\label{Subsection_rescaled_operators}
Given the explicit spatial depends in~$p\kappa x$ of~$\phi_0$ and because it will often be convenient for shortness and clarity, we will use the convention that a tilde means a spatial rescaling by a factor $p\kappa$. That is, for instance,
\[
    \tilde{v}(x) = \tilde{v}(p, \omega; x) = \frac{1}{\sqrt{1-\nu\tanh^2(x)}} \left[(p+1)(m-\omega)\frac{1-\tanh^2(x)}{1-\nu\tanh^2(x)}\right]^{\frac1{2p}}.
\]
Analoguously, we also define the unitary operator $U_{p\kappa}$ through its adjoint
\[
    U_{p\kappa}^*\psi(x):=\sqrt{p\kappa}\psi(p\kappa x)\,,
\]
and, for any operator $O$, the corresponding operator $\tilde{O} := U_{p\kappa} O U^*_{p\kappa}$. For instance, $\tilde{D}_m := U_{p\kappa} D_m U^*_{p\kappa} = i p\kappa \sigma_2 \partial_x + m\sigma_3$,
\[
    \tilde{L}_\mu := U_{p\kappa} L_\mu U^*_{p\kappa} := \tilde{L}_\mu (p, \omega) := i p\kappa \sigma_2 \partial_x + m\sigma_3 - \omega \Id - |\pscalns{\widetilde{\phi_0},\sigma_3\widetilde{\phi_0}}|^p\sigma_3 - \mu \tilde{Q}\,
\]
with
\[
    \tilde{Q} := U_{p\kappa} Q U^*_{p\kappa} = p (p+1)(m-\omega) \frac{1-\tanh^2}{\left(1-\nu\tanh^2\right)^2} \begin{pmatrix} 1 & -\sqrt\nu \tanh \\ -\sqrt\nu \tanh & \nu \tanh^2\end{pmatrix},
\]
and
\[
    \tilde{H}_\mu := U_{p\kappa} H_\mu U^*_{p\kappa} = \begin{pmatrix} 0 & \tilde{L}_0 \\ \tilde{L}_\mu & 0 \end{pmatrix}.
\]
This operator $U_{p\kappa}$ leaves invariant spectra and maps eigenfunctions to eigenfunctions.

With these explicit formulae and notations in place, we can
start computations. We first check the Vakhitov--Kolokolov criterion, then we compute $\normt{Q(p,\omega,\cdot)}$ and plug it into the bounds in Section~\ref{Section_Lower_Bound_Re_z2}, in order to prove Theorem~\ref{Thm_H2_no_purely_imaginary_evs}.

\subsection{Vakhitov--Kolokov condition}\label{Section_VK_condition}
We check the Vakhitov--Kolokolov criterion for power nonlinearities. Since we are unable to compute a closed expression for $\norm{\phi_0(\omega)}_{L^2}$ for any $p>0$, the next lemma gives sufficient conditions on~$(p, \omega)$ to have $\partial_\omega \norm{\phi_0(\omega)}_{L^2}^2$ positive or negative.

\begin{lemma}\label{Check_VK_criterion}
	Let $f(s) = s |s|^{p-1}$.
	\begin{itemize}
		\item If $p\in(0,2]$, then
		\[
		    \forall\, \omega\in(0,m), \quad \partial_\omega\norm{\phi_0(p, \omega; \cdot)}_{L^2}^2 < 0\,.
		\]
		\item If $p>2$, then
		\begin{align*}
		    \forall\, \omega\leq \frac{p}{3p-4} m, \quad &\partial_\omega\norm{\phi_0(p, \omega; \cdot)}_{L^2}^2 < 0
		    \intertext{and there exists $\omega_+\in\left(\frac{p}{3p-4} m,\sqrt{\frac{p+1}{2p-1}}m\right]$ such that}
		    \forall\, \omega>\omega_+, \quad &\partial_\omega\norm{\phi_0(p, \omega; \cdot)}_{L^2}^2 > 0\,.
		\end{align*}
	\end{itemize}
\end{lemma}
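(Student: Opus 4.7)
The plan is to derive a semi-explicit integral representation for $N(p,\omega) := \norm{\phi_0(p,\omega)}_{L^2}^2$ from the closed formulas~\eqref{Def_v_p}--\eqref{Def_u_p} and then analyze the sign of~$\partial_\omega N$. Since $v^2+u^2=(1+\nu\tanh^2(p\kappa x))\,v^2$ with $\nu=(m-\omega)/(m+\omega)$, the substitution $s=\tanh(p\kappa x)$ yields
\[
    N(p,\omega) = A(\omega)\,\mathcal{I}(p,\nu), \quad A(\omega) := \frac{[(p+1)(m-\omega)]^{1/p}}{p\kappa}, \quad \mathcal{I}(p,\nu) := \int_{-1}^1 \frac{(1+\nu s^2)(1-s^2)^{1/p-1}}{(1-\nu s^2)^{1+1/p}}\di s\,.
\]
Using $\omega = m(1-\nu)/(1+\nu)$, $\kappa^2 = 4m^2\nu/(1+\nu)^2$, and $\partial_\omega\nu = -2m/(m+\omega)^2$, a direct chain-rule computation gives $\sgn(\partial_\omega N) = \sgn \Psi(p,\nu)$, where
\[
    \Psi(p,\nu) := \bigl[p(1-\nu)-2\bigr]\mathcal{I}(p,\nu) - 2p\nu(1+\nu)\,\partial_\nu\mathcal{I}(p,\nu)\,.
\]
Differentiation under the integral sign gives directly $\partial_\nu\mathcal{I}(p,\nu) > 0$ on $(0,1)$.

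For $p\in(0,2]$, one has $p(1-\nu)-2 \leq p-2 \leq 0$ on $\nu\in(0,1)$, so both terms in $\Psi$ are non-positive with at least one strictly negative; hence $\Psi<0$ and $\partial_\omega N<0$, as claimed.

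For $p>2$, the prefactor $p(1-\nu)-2$ is positive exactly when $\nu<(p-2)/p$, i.e., $\omega>m/(p-1)$, so the trivial sign argument only covers $\nu\geq(p-2)/p$, which is strictly less restrictive than the claimed threshold $\nu_0:=(p-2)/(2(p-1))$ corresponding to $\omega=\tfrac{p}{3p-4}m$. To handle the intermediate window $\nu\in[\nu_0,(p-2)/p)$, I would derive an integration-by-parts identity by integrating
\[
    \tfrac{d}{ds}\bigl[s\,(1-s^2)^{1/p}(1-\nu s^2)^{-1/p}\bigr]
\]
over $[-1,1]$; the boundary terms vanish thanks to the factor $(1-s^2)^{1/p}$. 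The resulting linear relation among the moments $K_n(p,\nu):=\int_{-1}^1 s^{2n}(1-s^2)^{1/p-1}(1-\nu s^2)^{-1-1/p}\di s$ reads $\bigl[(1+\nu)+2(1-\nu)/p\bigr]K_1 = K_0+\nu K_2$. Together with the analogous computation for $\partial_\nu\mathcal{I}$ (which introduces moments with $(1-\nu s^2)^{-2-1/p}$), this lets one express $\Psi$ as an explicit combination of two elementary moments; checking the sign at $\nu=\nu_0$ then yields $\Psi\leq 0$ throughout $[\nu_0,(p-2)/p)$. An analogous calculation at the value $\nu_+$ corresponding to $\omega=\sqrt{(p+1)/(2p-1)}\,m$, combined with $\Psi(p,0)=(p-2)\mathcal{I}(p,0)>0$ and continuity in $\nu$, then provides existence of $\omega_+$ in the required interval.

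The main obstacle is precisely the intermediate window for $p>2$: there the algebraic prefactor is positive and must be counterbalanced by the integral term, requiring a quantitative comparison between $\mathcal{I}$ and $\partial_\nu\mathcal{I}$ rather than only their signs. Since $\mathcal{I}$ and $\partial_\nu\mathcal{I}$ do not admit simple closed forms for general $p$ (they reduce to hypergeometric functions), the argument must rely on the integration-by-parts identities above rather than on explicit evaluation, and the precise thresholds $\nu_0$ and $\nu_+$ arise exactly as the boundary values at which these identities force $\Psi$ to vanish.
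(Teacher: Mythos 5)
Your setup — rewriting $\norm{\phi_0}^2 = A(\omega)\,\mathcal{I}(p,\nu)$ and reducing to the sign of $\Psi(p,\nu)=[p(1-\nu)-2]\mathcal{I}-2p\nu(1+\nu)\partial_\nu\mathcal{I}$ — is a correct and essentially equivalent reformulation of what the paper does (the paper parametrizes the same integral as $F(\nu)=\int_0^1 h(\nu,y)\di y$ and studies $\sgn F'(\nu)$). Your argument for $p\in(0,2]$ is correct and arguably cleaner than the paper's: when $p(1-\nu)-2\le 0$ both terms of $\Psi$ are non-positive, with the second strictly negative since $\partial_\nu\mathcal{I}>0$, so $\Psi<0$. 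The paper instead identifies $\sgn(\partial_\nu h)$ with that of a quadratic $P_\nu(y^2)$ in $y^2$ and uses $P_\nu(0)=\nu+2/p-1\ge 0$ when $p\le 2$; the two are consistent. Your claim that the coefficient sign argument covers exactly $\omega\le m/(p-1)$, i.e.\ $\nu\ge (p-2)/p$, is also what the paper finds.

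The real work — and the gap in your proposal — is the window $\nu\in\bigl[\tfrac{p-2}{2(p-1)},\,\tfrac{p-2}{p}\bigr)$ for $p>2$, i.e.\ $\omega\in\bigl(\tfrac{m}{p-1},\tfrac{p}{3p-4}m\bigr]$. Your plan is to integrate by parts to get a linear relation among the moments $K_n$, do something analogous for $\partial_\nu\mathcal{I}$ (which involves a different power of $1-\nu s^2$), and then ``check the sign at $\nu=\nu_0$''. Two issues: (i) the IBP identity you write relates moments with the exponent $1+1/p$, while $\partial_\nu\mathcal{I}$ produces exponent $2+1/p$, so you would need a separate family of relations, and it is not clear that closing the system yields a sign-definite combination in the whole window; (ii) even if you could evaluate $\Psi$ at $\nu_0$, a single boundary check does not give you $\Psi\leq 0$ on the interior unless you also establish monotonicity of the relevant expression, which you do not claim. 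The paper attacks this window differently: since $y\mapsto\partial_\nu h(\nu,y)$ is strictly increasing (because the quadratic $P_\nu$ is), it applies a two-point lower bound $F'(\nu)>z\,\partial_\nu h(\nu,0)+(1-z)\,\partial_\nu h(\nu,z)$ with the carefully tuned choice $z=\sqrt\nu$, reducing the question to a one-variable polynomial inequality in $\nu$ that can be verified by elementary estimates. That step is the content of the lemma; it is not filled in by your sketch.

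Your treatment of $\omega_+$ also has a gap. Knowing $\Psi(p,0)>0$ (the limit $\omega\to m$) plus continuity only guarantees positivity on a neighborhood of $\omega=m$, and the intermediate value theorem only produces a zero of $\Psi$, not a threshold beyond which the sign is uniformly positive. The paper instead observes that $P_\nu(1)\le0$ forces $P_\nu<0$ on all of $(0,1)$ (again since $P_\nu$ is increasing), hence $\partial_\nu h<0$ pointwise and $F'(\nu)<0$; unwinding $P_\nu(1)\le0 \Leftrightarrow \nu\le\nu_*(p)$ gives the explicit choice $\omega_+=\sqrt{(p+1)/(2p-1)}\,m$, which works for all $\omega\ge\omega_+$, not just near $m$. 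If you want to complete your route, you would need an analogous pointwise criterion or a monotonicity argument for $\Psi$, neither of which your proposal supplies.
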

\begin{proof}[Proof of Lemma~\ref{Check_VK_criterion}]
	First, since $\partial_\omega \nu = -\frac{2m}{(m+\omega)^2}<0$, we have that
	\[
	    \sgn \partial_\omega\norm{\phi_0(p, \omega; \cdot)}_{L^2}^2 = - \sgn \partial_\nu\norm{\phi_0(p, \omega; \cdot)}_{L^2}^2
	\]
	and we work with the latter in this proof.
	
	Using~\eqref{Def_v_p},~\eqref{Def_u_p}, $(p+1)(m-\omega)=2m(p+1)\frac{\nu}{1+\nu}$ and~$\frac2\kappa=\frac1m\frac{1+\nu}{\sqrt\nu}$, we have
	{\allowdisplaybreaks
	\begin{align*}
	    \norm{\phi_0}_{L^2}^2 &= \norm{v^2+u^2}_{L^1} = \frac1{p\kappa}\norm{\tilde{v}^2+\tilde{u}^2}_{L^1} = \frac2{p\kappa}\int_0^\infty\tilde{v}^2+\tilde{u}^2 \\
	        &= m^{\frac{1}{p}-1} 2^{\frac{1}{p}}\frac{(p+1)^{\frac1{p}}}{p}\frac{1+\nu}{\sqrt\nu} \left(\frac{\nu}{1+\nu}\right)^{\frac1{p}} \int_0^\infty \frac{1+\nu\tanh^2}{1-\nu\tanh^2}\left[\frac{1-\tanh^2}{1-\nu\tanh^2}\right]^{\frac1{p}} \\
            &=:m^{\frac{1}{p}-1} 2^{\frac{1}{p}}\frac{(p+1)^{\frac1{p}}}{p} F(\nu)\,,
	\end{align*}
	}%
	and our goal is to determine the sign of~$\partial_\nu F(\nu)$. We have
	\[
	F(\nu) = \frac{1+\nu}{\sqrt\nu} \left(\frac{\nu}{1+\nu}\right)^{\frac1{p}}\int_0^1 \frac{1+\nu y^2}{1-\nu y^2}\left[\frac{1- y^2}{1-\nu y^2}\right]^{\frac1{p}}\frac{\di y}{1-y^2} =: \int_0^1 h(\nu,y)\di y\,.
	\]
	
	Since $y\mapsto (1+\nu y^2)(1-\nu y^2)^{-(1+p^{-1})}$ is increasing on~$(0,1)$ for any $\nu\in(0,1)$, we have
	\[
	0 < h(\nu,y) \leq \nu^{\frac1{p}-\frac1{2}} \frac{(1+\nu)^{2-\frac1p}}{(1-\nu)^{1+\frac1p}} (1-y^2)^{\frac1p-1} \in L^1((0,1))\,,
	\]
	because $\int_0^1(1-y^2)^{\frac1p-1}\di y=\frac{\sqrt{\pi}}{2}\frac{\Gamma(\frac1p)}{\Gamma(\frac12+\frac1p)}<+\infty$ for any $p>0$. Hence $y\mapsto h(\nu,y)$ is Lebesgue integrable for any $\nu\in(0,1)$. Moreover, for any $y\in(0,1)$, $\partial_\nu h(\nu,y)$ exists for all $\nu\in(0,1)$ and
	\[
	    \partial_\nu h(\nu,y) = \frac{\nu^{\frac1p-\frac32}}{(1+\nu)^{\frac1{p}}} \times \frac{P\left(y^2\right)}{2 (1-\nu y^2)^{\frac1{p}+2} (1- y^2)^{1-\frac1{p}}}\,,
	\]
	where the polynomial
	\[
	    P_\nu(z):= \nu^2\left( 2 p^{-1}\nu + (1-\nu) \right)z^2 + \nu(1+\nu) \left( 2 p^{-1} + 4 \right)z + \left(\nu + 2 p^{-1} - 1\right)
	\]
	is strictly increasing on~$(0,1)$. And, finally, 
	\[
	    |\partial_\nu h(\nu,y)| \leq \frac{\nu^{\frac1p-\frac32} \left|P_\nu(1) \right|}{2 (1-\nu)^{\frac1{p}+2}(1+\nu)^{\frac1{p}}} \times \frac{1}{(1- y^2)^{1-\frac1{p}}}\,,
	\]
	which is in~$L^1((0,1))$ as a function of~$y$, for any $\nu\in(0,1)$. Thus
	\[
	    F'(\nu) = \partial_\nu \int_0^1 h(\nu,y)\di y = \int_0^1 \partial_\nu h(\nu,y)\di y\,.
	\]
	
	Now, a sufficient condition for $\partial_\omega\norm{\phi_0}_{L^2} < 0$ ---i.e., $\partial_\nu\norm{\phi_0}_{L^2} > 0$---, is for the polynomial $P_\nu$ to be positive on~$(0,1)$ or equivalently, since $P_\nu$ is strictly increasing on~$(0,1)$, that $P(0)=\nu + 2 p^{-1} - 1 \geq 0$.
	This proves the claim in the case $p\leq2$ as well as the first claim in the case $p>2$ but only in the subcase $\nu \geq 1- 2/p \Leftrightarrow \omega \leq m/(p-1)$.
	
	Similarly, a sufficient condition for $\partial_\omega\norm{\phi_0}_{L^2} > 0$ is for the polynomial $P_\nu$ to be negative on~$(0,1)$ or, equivalently, that $P(1) \leq 0$. Assuming $p>2$, and since $0<\nu<1$,
	\[
	    P(1)\leq 0 \Leftrightarrow (1+\nu)\left((p-2)\nu^2 - 6p\nu + p-2 \right) \geq 0 \Leftrightarrow \nu \leq \nu_*(p)\,,
	\]
	with
	\[
	    \nu_*(p) := \frac{3p}{p-2} - \frac{2}{p-2}\sqrt{(2p-1)(p+1)}\,.
	\]
	Defining $\omega_* := \frac{1-\nu_*}{1+\nu_*} m=\sqrt{\frac{p+1}{2p-1}}m$, we have proved the second claim in the case $p>2$.
    
    Now, for the first claim in the case $p>2$ for $\omega\in\left(\frac{m}{p-1},\frac{p}{3p-4} m\right] \Leftrightarrow \nu\in\left[\frac{1}{2}\frac{p-2}{p-1},1-\frac{2}{p}\right)$, we proceed as follow. Since the function $y\mapsto \partial_\nu h (\nu,y)$ is strictly increasing on~$(0,1)$ for any $\nu$, we have
    \[
        F'(\nu) = \int_0^z \partial_\nu h(\nu,y)\di y + \int_z^1 \partial_\nu h(\nu,y)\di y > z \partial_\nu h(\nu,0) + (1-z) \partial_\nu h(\nu,z)\,.
    \]
    Inserting $z=\sqrt{\nu}$ in the r.h.s., we obtain
    \begin{multline*}
        \textrm{r.h.s.} = \frac{\nu^{\frac1p-\frac32}}{(1+\nu)^{\frac1{p}}} \left\{\sqrt{\nu} \left(\nu+2 p^{-1}-1\right) \vphantom{+ \frac{ \left(2 p^{-1}\nu+1-\nu\right) \nu^3 + \nu^2(1+\nu)\left(2 p^{-1}+4\right) + \left(\nu+2 p^{-1}-1\right) }{(1+\sqrt{\nu}) (1+\nu)^{\frac{1}{p}} \left(1-\nu^2\right)^2}} \right. \\
        \left.  + \frac{ \left(2 p^{-1}\nu+1-\nu\right) \nu^3 + \nu^2(1+\nu)\left(2 p^{-1}+4\right) + \left(\nu+2 p^{-1}-1\right) }{(1+\sqrt{\nu}) (1+\nu)^{\frac{1}{p}} \left(1-\nu^2\right)^2} \right\}
    \end{multline*}
    and a sufficient condition to ensure $F'(\nu)>0$ ---i.e., $\partial_\omega\norm{\phi_0}_{L^2} < 0$--- is 
    \begin{multline*}
        \left(2 p^{-1}-1\right) \nu^4 + \left(2 p^{-1}+5\right) \nu^3 + \left(2 p^{-1}+4\right)\nu^2 \\
        + \left\{ 1 + \sqrt{\nu}(1+\sqrt{\nu}) (1+\nu)^{\frac{1}{p}} \left(1-\nu^2\right)^2 \right\}\left(\nu+2 p^{-1}-1\right) > 0\,.
    \end{multline*}
    Now, since we are in the case $\nu+2 p^{-1}-1<0$,
    we have
    \[
        \sqrt{\nu}(1+\sqrt{\nu})^2 \left(1-\nu^2\right)^2 >  \sqrt{\nu}(1+\sqrt{\nu}) (1+\nu)^{\frac{1}{p}} \left(1-\nu^2\right)^2 > 0, \qquad \forall\,\nu\in(0,1)
    \]
    and, studying the polynomial in $\sqrt\nu$, one can check that there exists $\nu_0\in\left(\left(\frac35\right)^2,\left(\frac23\right)^2\right)$ such that the l.h.s.\ attains its maximum at $\nu_0$ and is strictly monotonic on~$(0,\nu_0)$ and on~$(\nu_0,1)$. Therefore,
    \[
        \textrm{l.h.s.}(\nu)\leq \max\limits_{\left(\frac{9}{25},\frac49\right)} \textrm{l.h.s.} \leq \frac23\left(1+\frac23\right)^2 \left(1-\left(\frac35\right)^4\right)^2 <2\,,
    \]
    and we obtain the following sufficient condition to ensure $F'(\nu)>0$:
    \[
         \left(2 p^{-1}-1\right) \nu^4 + \left(2 p^{-1}+5\right) \nu^3 + \left(2 p^{-1}+4\right)\nu^2 + 3\nu + 3\left(2 p^{-1}-1\right) > 0\,.
    \]
    Finally, this polynomial in~$\nu$ being strictly increasing on~$(0,1)$ and positive at $\nu=\frac{1}{2}\frac{p-2}{p-1}$ concludes the proof that $\partial_\omega\norm{\phi_0}_{L^2} < 0$ for $p>2$ and~$\omega\in\left(\frac{m}{p-1},\frac{p}{3p-4} m\right]$.
\end{proof}

\subsection{Estimates on \texorpdfstring{$\beta(p)$}{beta(p)}}\label{Subsection_Power_nonlinearity_Lower_Bound_Re_z2}
We derive here lower bounds on~$\omega$ for condition~\eqref{General_nonlinearity_lower_bound_Re_z2_condition_on_Q} to hold. We do that by means of our general bound on~$\Re z^2$ combined with the explicit formula of the norm~$\normt{Q_\omega}$ of~$Q_\omega$.

\begin{lemma}\label{Lemma_Q}
The operator $Q_\omega$, $\omega\in(0,m)$, which acts as point-wise multiplication by the two by two matrix given in~\eqref{Expression_of_Q}, is positive semi-definite and satisfies
\begin{equation}\label{Norm_Q}
    \normt{Q_\omega} =
    \left\{
        \begin{aligned}
            &2p(p+1)m\frac{\nu}{1+\nu}=p(p+1)(m-\omega), \quad & \textrm{if }\, \nu < \frac13 \Leftrightarrow \omega > \frac{m}2\,,\\
            & p\frac{p+1}{2}\frac{m}2\frac{1+\nu}{1-\nu}=p\frac{p+1}2\frac{m^2}{2\omega}, \quad & \textrm{if }\, \nu \geq \frac13 \Leftrightarrow \omega \leq \frac{m}2\,.
        \end{aligned}
    \right.
\end{equation}
\end{lemma}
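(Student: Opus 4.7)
The plan is to reduce the computation to a pointwise matrix norm and then a one-variable calculus problem. Since $Q$ acts as multiplication by the matrix-valued function given in~\eqref{Expression_of_Q}, we have $\normt{Q_\omega}=\sup_{x\in\R}\|Q_\omega(x)\|_{\C^2\to\C^2}$. The first key observation is that the $2\times 2$ matrix appearing in~\eqref{Expression_of_Q} factors as the rank-one outer product $w(x)w(x)\transp$, where $w(x):=(1,-\sqrt{\nu}\tanh(p\kappa x))\transp\in\R^2$. This immediately gives positive semi-definiteness of $Q_\omega(x)$ (hence of $Q_\omega$) and
\[
    \|Q_\omega(x)\|_{\C^2\to\C^2} = p(p+1)(m-\omega)\,\frac{1-\tanh^2(p\kappa x)}{(1-\nu\tanh^2(p\kappa x))^2}\,|w(x)|^2,
\]
with $|w(x)|^2=1+\nu\tanh^2(p\kappa x)$.

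Next, setting $s:=\tanh^2(p\kappa x)\in[0,1)$, the problem reduces to maximizing
\[
    g(s):=\frac{(1-s)(1+\nu s)}{(1-\nu s)^2}
\]
on $[0,1)$. A direct differentiation gives
\[
    g'(s)=\frac{(3\nu-1)-\nu(3-\nu)s}{(1-\nu s)^3},
\]
so the unique critical point in $\R$ is $s^*=\dfrac{3\nu-1}{\nu(3-\nu)}$.

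The cases in the statement correspond to the sign of $3\nu-1$. If $\nu<1/3$, then $s^*<0$ and $g'<0$ on $[0,1)$; the maximum is at $s=0$, giving $g(0)=1$ and hence $\normt{Q_\omega}=p(p+1)(m-\omega)$, which matches the first line of~\eqref{Norm_Q} after using $m-\omega=2m\nu/(1+\nu)$. If $\nu\geq 1/3$, then $s^*\in[0,1)$ is the unique maximizer (since $g'$ changes sign from $+$ to $-$ there, as $\nu(3-\nu)>0$). A routine evaluation uses the simplifications
\[
    1-s^*=\frac{(1-\nu)(1+\nu)}{\nu(3-\nu)},\quad 1+\nu s^*=\frac{2(1+\nu)}{3-\nu},\quad 1-\nu s^*=\frac{4(1-\nu)}{3-\nu}
\]
to obtain $g(s^*)=\dfrac{(1+\nu)^2}{8\nu(1-\nu)}$, and multiplying by $p(p+1)(m-\omega)=2mp(p+1)\nu/(1+\nu)$ yields the second line of~\eqref{Norm_Q}, which in the $\omega$-parametrization reads $p(p+1)m^2/(4\omega)$.

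The whole argument is elementary; the only thing that requires care is the algebraic simplification at $s^*$. There is no real obstacle, apart from checking the sign analysis of $g'$ cleanly so that the threshold $\nu=1/3$ is correctly identified as the boundary between the two regimes, which is exactly the condition $\omega=m/2$ appearing in~\eqref{Norm_Q}.
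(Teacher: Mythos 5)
Your proof is correct and follows essentially the same route as the paper: identify $Q_\omega(x)$ as a positive scalar times a rank-one outer product, take the nonzero eigenvalue, and reduce the operator norm to a one-variable maximization of $\frac{(1-s)(1+\nu s)}{(1-\nu s)^2}$. The only cosmetic difference is that you substitute $s=\tanh^2(p\kappa\,\cdot)$ before differentiating, which turns the sign analysis into a linear-in-$s$ numerator, whereas the paper differentiates in the original variable; both correctly identify $\nu=1/3$ (i.e.\ $\omega=m/2$) as the threshold and arrive at the same maximal value.
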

\begin{proof}
For all $x \in \R$, the matrix $Q_\omega(x)$ is positive semi-definite
with eigenvalues $0$ and~$p (v^2 - u^2)^{p-1} (v^2 + u^2)$, hence
\begin{equation}\label{Norm_Q_formula_norm_infinity}
    \normt{Q_\omega} = p \norm{\left(v^2 - u^2\right)^{p-1} \left(v^2 + u^2\right)}_\infty = p \norm{\left(\tilde{v}^2 - \tilde{u}^2\right)^{p-1} \left(\tilde{v}^2 + \tilde{u}^2\right)}_\infty.
\end{equation}
We have
\[
    \left(\tilde{v}^2 - \tilde{u}^2\right)^{p-1} \left(\tilde{v}^2 + \tilde{u}^2\right) = \left(\tilde{v}^2 - \tilde{u}^2\right)^p \frac{\tilde{v}^2 + \tilde{u}^2}{\tilde{v}^2 - \tilde{u}^2} = 2m(p+1)\frac{\nu}{1+\nu} F,
\]
where $F$ is the even function
\[
    F:=\frac{1-\tanh^2}{1-\nu\tanh^2} \times \frac{1 + \nu\tanh^2}{1 - \nu\tanh^2}\,.
\]
Its derivative satisfies
\[
	\left(1 - \nu\tanh^2\right)^3 F' = 2\left(3\nu - 1 - \nu(3-\nu)\tanh^2\right) \left(1-\tanh^2\right) \tanh.
\]
Therefore, on~$(0,+\infty)$,
\begin{align*}
    F'(t)>0
    &\Leftrightarrow
    3\nu - 1 > \nu(3-\nu)\tanh^2(t)\\
    &\Leftrightarrow
    \left\{
        \begin{aligned}
            &t < t_\nu := \arctanh\sqrt{\frac{3\nu - 1}{\nu(3-\nu)}}, & \textrm{if } \quad \nu \geq \frac13 \Leftrightarrow \omega \leq \frac{m}2\,,\\
            &t\in\emptyset, & \textrm{if } \quad \nu < \frac13 \Leftrightarrow \omega > \frac{m}2\,.
        \end{aligned}
    \right.
\end{align*}
Consequently, $\norm{F}_\infty=F(0)=1$ and~$\norm{(v^2 - u^2)^{p-1} (v^2 + u^2)}_\infty=2m(p+1)\frac{\nu}{1+\nu}$ if $\nu < \frac13$, otherwise $\norm{F}_\infty=F(t_\nu)=\frac18\frac{1+\nu}{\nu}\frac{1+\nu}{1-\nu}$ and~$\norm{(v^2 - u^2)^{p-1} (v^2 + u^2)}_\infty=\frac{m}2 \frac{p+1}2\frac{1+\nu}{1-\nu}$. We therefore have proved~\eqref{Norm_Q}.
\end{proof}

We end this section by explaining how to combine the expression for $\normt{Q}$ with the bounds of the previous section in order to estimate the range of $(p,\omega)$ for which eigenvalues~$z \in \C \setminus (\R \cup i \R)$ associated to~$H_2$ satisfy~$\Re z^2 \geq E^2$. For the sake of completeness, we give formulae in the general case $t>\omega$, but also in the case $t = m$ (i.e., assuming $L_0$ has no positive eigenvalues). From these formulae, we will then derive Theorem~\ref{Thm_H2_no_purely_imaginary_evs} at the end of this section and Lemma~\ref{restrictions_due_Re_z_above_threshold} at the end of the next section.

By Lemma~\ref{Lemma_Q} and defining $\theta\equiv\theta(\omega, t, \mu, p):=\frac{\mu\normt{Q_\omega}}{4(t+\omega)}$, the condition~\eqref{General_nonlinearity_lower_bound_Re_z2_condition_on_Q} on~$\omega$ becomes
\[
    4 \theta_+\!\left( \frac{2 E^2}{(t+\omega)^2} \right) \geq 4 \theta = \mu\frac{\normt{Q_\omega}}{t+\omega}
    =\left\{
        \begin{aligned}
            &\mu p(p+1)\frac{m-\omega}{t+\omega}, \quad & \textrm{if }\, \omega > \frac{m}2\,,\\
            & \mu p\frac{p+1}{4}\frac{m^2}{\omega(t+\omega)}, \quad & \textrm{if }\, \omega \leq \frac{m}2\,.
        \end{aligned}
    \right.
\]
We recall that $\theta_+$ is defined in~\eqref{General_nonlinearity_lower_bound_Re_z2_Def_theta_plus}.

Since $\omega\mapsto\theta_+(2 E^2 / (t+\omega)^2 )$ is non-decreasing on $(\max\{0,E-t\},m)$ ---constant if~$E=0$ and strictly inscreasing if~$E>0$---, because $\theta_+$ is strictly decreasing, and since~$\omega \mapsto \theta(\omega, t)$ is strictly decreasing (for $\mu, p>0$) from $+\infty$ to $0$, determining the $\omega$'s for which~\eqref{General_nonlinearity_lower_bound_Re_z2_condition_on_Q} holds is equivalent to finding the unique~$\tilde\omega\in(0,m)$ for which equality in~\eqref{General_nonlinearity_lower_bound_Re_z2_condition_on_Q} holds:
\begin{equation}\label{Section_5_equation_defining_optimal_omega}
	\theta(\tilde\omega, t,\mu, p) = \theta_+\!\left( \frac{2 E^2}{(t+\tilde\omega)^2} \right).
\end{equation}
Indeed, our $\omega$'s are then those verifying $\omega\geq\tilde\omega$.

Moreover, $t\in[\omega,m]$, $t\mapsto\theta_+(2 E^2 / (t+\tilde\omega)^2 )$ is nondecreasing, and $t \mapsto \theta(\tilde\omega, t)$ is decreasing. Thus, taking $t=\tilde\omega$ gives us an upper bound on this $\tilde\omega$, that is, a sufficient condition on $\omega$ for~\eqref{General_nonlinearity_lower_bound_Re_z2_condition_on_Q} to hold. Meanwhile taking $t=m$ gives us a lower bound on this $\tilde\omega$, that is, \emph{necessary condition}, i.e., the largest possible range of $\omega$'s that one can obtain with the bounds that we have.

Unfortunately, if $E>0$, finding a closed formula for these conditions, let alone for $\tilde\omega$ itself, seems out of reach. However, for fixed explicit values for $\mu>0$ and $p>0$, and either $t=m$ or $t=\omega$, a dichotomy easily approximates $\tilde\omega$. We obtained that way the corresponding curves in Figure~\ref{fig:spectrum_H_2}.

Nevertheless, for the choice $E=0$, for which the condition $\omega\geq E/2$ is hence trivially verified, the computations are explicit since the r.h.s.\ of~\eqref{Section_5_equation_defining_optimal_omega} is the constant $\theta_+(0)=3\sqrt{3}/8$ and we can prove the absence of non-zero eigenvalues of~$H_2$ on the imaginary axis for couples $(p, \omega)$ as plotted in Figure~\ref{Power_nonlin_admissible_ranges_Re_z2_and_VK}.
In order to state the result we define
\begin{equation}\label{Power_nonlin_threshold_omega_for_Re_z2}
	\left\{
	\begin{aligned}
		\beta(p) &= \frac{2p(p+1)}{2p(p+1) + 3\sqrt{3}}, &\qquad\textrm{if } p \geq \frac{\sqrt{1 + 6\sqrt{3}} - 1}{2}\,,\\
		\beta(p) &= \sqrt{\frac{p(p+1)}{6\sqrt{3}}}, &\qquad\textrm{if } p \leq \frac{\sqrt{1 + 6\sqrt{3}} - 1}{2}\,.
	\end{aligned}
	\right.
\end{equation}
\begin{theorem}\label{Thm_H2_no_purely_imaginary_evs}
    Let $f(s) = s |s|^{p-1}$, $p>0$, and $\beta(p)$ defined in~\eqref{Power_nonlin_threshold_omega_for_Re_z2}. Then, $H_2$ has no non-zero eigenvalues on the imaginary axis for
    \begin{itemize}
        \item $p\leq2$ and~$\omega \in [\beta(p) m, m)$;
        \item $p>2$ and~$\omega \in \left[\beta(p) m, \frac{p}{3p-4} m\right]$.
    \end{itemize}
\end{theorem}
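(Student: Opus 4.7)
The plan is to apply Corollary~\ref{Corollary_no_imaginary_evs} and to identify, for power nonlinearities, the precise $(p,\omega)$ region where all its hypotheses are simultaneously verified. There are three conditions to check: (i) the single-eigenvalue hypothesis~\emph{\ref{it_L_2_hyp}} of Theorem~\ref{thm_VK_intro}, (ii) the Vakhitov--Kolokolov inequality $\partial_\omega \norm{\phi_0(\omega)}^2_{L^2} \leq 0$, and (iii) the quantitative bound $\normt{Q} \leq \tfrac{3\sqrt{3}}{2}\omega$.

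For (i), I invoke the result announced in the final section of the paper: when $f(s) = s|s|^{p-1}$ with $p>0$, the operator $L_2$ has exactly one eigenvalue in $(-2\omega, 0)$ for every $\omega \in (0,m)$. For (ii), Lemma~\ref{Check_VK_criterion} gives the strict VK inequality for all $\omega \in (0,m)$ when $p\leq 2$, and for $\omega \leq \frac{p}{3p-4}m$ when $p>2$. This explains the two cases of $\omega$-ranges appearing in the statement.

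The heart of the argument is (iii). Using the explicit formula~\eqref{Norm_Q} from Lemma~\ref{Lemma_Q}, I treat the two regimes separately. In the regime $\omega > m/2$, the inequality $p(p+1)(m-\omega) \leq \tfrac{3\sqrt{3}}{2}\omega$ is linear in $\omega$ and rearranges to $\omega \geq \frac{2p(p+1)m}{2p(p+1)+3\sqrt{3}}$. In the regime $\omega \leq m/2$, the inequality $p\tfrac{p+1}{2}\tfrac{m^2}{2\omega} \leq \tfrac{3\sqrt{3}}{2}\omega$ becomes $\omega^2 \geq \frac{p(p+1)m^2}{6\sqrt{3}}$, that is, $\omega \geq \sqrt{\frac{p(p+1)}{6\sqrt{3}}}\,m$. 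A short check shows that the two boundary conditions agree precisely at the critical value $p^* = \frac{\sqrt{1+6\sqrt{3}}-1}{2}$ (where $p(p+1) = \tfrac{3\sqrt{3}}{2}$), and that this is exactly the value where the threshold $\omega$ equals $m/2$: for $p\geq p^*$ one lands in the first regime and the threshold is $\geq m/2$, while for $p\leq p^*$ one lands in the second regime and the threshold is $\leq m/2$. Combining the two cases produces the piecewise formula~\eqref{Power_nonlin_threshold_omega_for_Re_z2} for $\beta(p)$, and shows that $\omega \geq \beta(p)m$ is equivalent to the bound $\normt{Q_\omega} \leq \tfrac{3\sqrt{3}}{2}\omega$.

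Putting the three ingredients together, Corollary~\ref{Corollary_no_imaginary_evs} applies whenever $\omega \geq \beta(p)m$ and the VK criterion holds, yielding the two announced intervals $[\beta(p)m, m)$ for $p\leq 2$ and $[\beta(p)m, \tfrac{p}{3p-4}m]$ for $p>2$. No step is genuinely hard: the only thing to watch is the matching between the two branches of $\normt{Q_\omega}$ at $\omega = m/2$ and the corresponding switch in the formula for $\beta(p)$, which is an elementary algebraic verification.
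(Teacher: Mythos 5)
Your proof is correct and follows essentially the same path as the paper's: the paper phrases the quantitative bound as taking $t=\omega$ in equation~\eqref{Section_5_equation_defining_optimal_omega} with $E=0$, which (via Corollary~\ref{Corollary_no_imaginary_evs}) is precisely the condition $\normt{Q}\leq\tfrac{3\sqrt3}{2}\omega$ you check, and both arguments invoke Lemma~\ref{Check_VK_criterion}, Lemma~\ref{Lemma_Q}, and Theorem~\ref{Thm_L2_only_one_ev} for the remaining hypotheses and arrive at the same case split and algebra for $\beta(p)$. Your write-up has the merit of making explicit the chain of dependencies (including the forward reference to Theorem~\ref{Thm_L2_only_one_ev} and the verification that the two branches of $\beta$ glue at $\omega=m/2$) that the paper's own proof leaves implicit.
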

\begin{figure}[ht]
	\includegraphics[width=0.8\columnwidth]{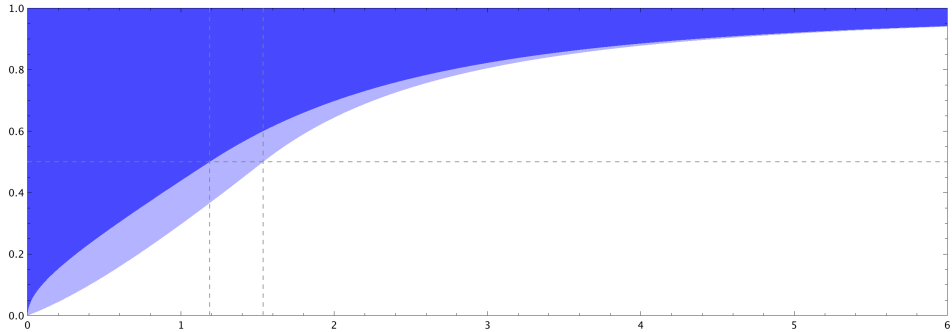}
	\captionsetup{width=.9\textwidth}
	\caption{Dark blue: range $\omega/m > \beta(p)$, a sufficient condition to have the \emph{cone condition} $\Re z^2 \geq 0$ for eigenvalues $z$ of $H_2$ outside the imaginary axis, independently of the spectrum of $L_0$. 
	Light blue: range of $\omega/m$'s for which the same condition holds under the additional assumption that $L_0$ has no positive eigenvalues. 
	Vertical dashed lines: $p_\circ(2)\in(1.18,1.19)$ and $p_*(2)\in(1.53,1.54)$.
	Horizontal dashed line:~$\omega/m=1/2$.}
	\label{Fig_Range_lower_bound_on_Re_z2}
\end{figure}
\begin{proof}[Proof of Theorem~\ref{Thm_H2_no_purely_imaginary_evs} and of the extra claim of Figure~\ref{Fig_Range_lower_bound_on_Re_z2}]
	The upper bounds in the theorem are due to Lemma~\ref{Check_VK_criterion}.
	
	The lower bound on $\omega$ in the theorem (common to both cases) is the sufficient condition: taking $t=\omega$ in~\eqref{Section_5_equation_defining_optimal_omega}. For the sake of generality, we compute it for general $\mu>0$. The sufficient condition becomes
	\begin{equation*}
	    \left\{
	    \begin{aligned}
	        &\frac{3\sqrt{3}}{2} \geq \mu p(p+1)\frac{m-\omega}{\omega} &&\Leftrightarrow \quad \omega \geq \frac{\mu p(p+1)}{\mu p(p+1) + 3\sqrt{3}} m, &\qquad\textrm{if } \omega > \frac{m}{2}\,,\\
	        &\frac{3\sqrt{3}}{2} \geq \mu \frac{p(p+1)}{4}\frac{m^2}{2\omega^2} &&\Leftrightarrow \quad \omega \geq \sqrt{\frac{\mu p(p+1)}{3\sqrt{3}}} \frac{m}{2},\, &\qquad\textrm{if } \omega \leq \frac{m}{2}\,,
	    \end{aligned}
	    \right.
	\end{equation*}
	and denoting the unique $p>0$ at which both lower bounds are equal to $m/2$ by
	\[
	    p_\circ \equiv p_\circ(\mu) := \left(\sqrt{1 + 12\mu^{-1}\sqrt{3}} - 1\right) \bigg/ 2
	\]
	---that is, $p_\circ>0$ s.t.\ $\mu p_\circ(p_\circ+1) = 3\sqrt{3}$---, we have equivalently
	\[
	    \omega \geq \omega_\circ(\mu) :=
	    \left\{
	    \begin{aligned}
	        &\frac{\mu p(p+1)}{\mu p(p+1) + 3\sqrt{3}} m, \quad &\textrm{if } p > p_\circ\,,\\
	        &\sqrt{\frac{\mu p(p+1)}{12\sqrt{3}}} m, &\textrm{if } p \leq p_\circ\,.
	    \end{aligned}
	    \right.
	\]
	Evaluating it at $\mu=2$ gives the lower bound of the theorem and $p_\circ=(\sqrt{1 + 6\sqrt{3}} - 1)/2$.
	
	The lower bound on the light blue area in Figure~\ref{Fig_Range_lower_bound_on_Re_z2} is the necessary condition: taking $t=m$ in~\eqref{Section_5_equation_defining_optimal_omega}. A similar computation on the necessary condition yields
	\[
	    \omega \geq \omega_* :=
	    \left\{
	    \begin{aligned}
	        &\frac{2\mu p(p+1) - 3\sqrt{3}}{2\mu p(p+1) + 3\sqrt{3}} m, &\textrm{if } p > p_*\,,\\
	        &\left(\sqrt{1+\frac{2\mu p(p+1)}{3\sqrt{3}}} - 1\right) \frac{m}{2}, \quad &\textrm{if } p \leq p_*\,,
	    \end{aligned}
	    \right.
	\]
	where $p_*$ is the unique $p>0$ at which both lower bounds are equal to $m/2$:
	\[
	    p_* \equiv p_*(\mu) := \left(\sqrt{1 + 18\mu^{-1}\sqrt{3}} - 1\right) \bigg/2\,
	\]
	Taking $\mu=2$ gives the light blue area and $p_*=(\sqrt{1 + 9\sqrt{3}} - 1)/2$\,.
\end{proof}

\section{The massive Gross--Neveu model: \texorpdfstring{$f(s)=s$}{f(s)=s}} \label{Section:Gross_Neveu}

In the case $f(s)=s$ (i.e., $p=1$), we can actually prove that $L_0$ has no other eigenvalues than $-2\omega$ and~$0$. To do so, we need the following result on resonances.
We recall that a general one-dimensional Dirac operator of the form $D_m - \omega - V$, where $V$ is some decaying, matrix valued potential, has a resonance at  $\lambda \in \R$ if the ordinary differential equation
\[
	(D_m -\omega -V) \psi = \lambda \psi
\]
has solutions $\psi$ in~$L^\infty(\R)$ which are not in $L^2(\R)$.

In the case at hand, it is known that $L_0$ has resonances with an explicit solution for the \emph{generalized eigenvalue} $\psi$. The formulae for these resonances that appear in~\cite[Lemma 5.5]{BerCom-12} seem to contain a typo in the expression of the second component $S$ and since we were not able to locate a derivation of the expression in the literature, we include the details.

\begin{lemma}\label{Resonances_L0_p_equal_1}
	Let $f(s)=s$ and $\omega\in(0,m)$. Then the values $m-\omega$ and~$-m-\omega$ are resonances of~$L_0$ with respective generalized eigenfunctions~$\left(R,S\right)\transp$ and~$\left(S,R\right)\transp$, where
	\[
	    R := \frac{uv}{v^2-u^2}
	    \quad \textrm{ and } \quad
	    S := -\frac{\nu}{1-\nu} \frac{v^2-\nu^{-1}u^2}{v^2-u^2} \,.
	\]
\end{lemma}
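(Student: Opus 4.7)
My plan is to verify the claim directly as a computation: reduce the resonance condition to a first-order ODE system and check that the explicit $(R,S)^\transp$ solves it, then leverage the symmetry of Proposition~\ref{Prop:eigenvalues_symmetry}\emph{\ref{it:sym_L_0}} to get the second resonance for free. Concretely, using $\alpha=-\sigma_2$, $\beta=\sigma_3$, the eigenvalue problem $L_0 \psi = \lambda\psi$ for $\psi = (\psi_1,\psi_2)\transp$ becomes
\begin{equation*}
    \partial_x \psi_2 = (\lambda+\omega-M)\psi_1, \qquad \partial_x \psi_1 = -(\lambda+\omega+M)\psi_2,
\end{equation*}
with $M = m - (v^2-u^2)$ (this is exactly the effective mass used in the proof of Theorem~\ref{groundstate}). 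Taking $\lambda = m-\omega$, these simplify to
\begin{equation*}
    \partial_x \psi_2 = (v^2-u^2)\psi_1, \qquad \partial_x \psi_1 = -\bigl(2m-(v^2-u^2)\bigr)\psi_2. \tag{$\star$}
\end{equation*}

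The verification of $(\star)$ for $(\psi_1,\psi_2)=(R,S)$ is a direct computation using the explicit formulas from Section~\ref{Section_Power_nonlinearity_generalities} specialized to $p=1$: with $t:=\tanh(\kappa x)$ one has $v^2-u^2 = 2(m-\omega)(1-t^2)/(1-\nu t^2)$, $v^2 = 2(m-\omega)(1-t^2)/(1-\nu t^2)^2$, $uv = \sqrt\nu\, t\, v^2$, so
\begin{equation*}
    R = \frac{\sqrt\nu\, t}{1-\nu t^2}, \qquad S = -\frac{\nu}{1-\nu}\cdot\frac{1-t^2}{1-\nu t^2}.
\end{equation*}
Using $\partial_x = \kappa(1-t^2)\partial_t$ and the identity $\sqrt\nu\,\kappa = m-\omega$, both equations in $(\star)$ follow after a short algebraic manipulation; the second one also uses the convenient rewrites $1-\nu = 2\omega/(m+\omega)$ and $1+\nu = 2m/(m+\omega)$ to collapse the right-hand side. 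The asymptotic behavior $R\to \pm\sqrt\nu/(1-\nu)$ and $S\to 0$ as $x\to \pm\infty$ then shows that $(R,S)\in L^\infty(\R)\setminus L^2(\R)$, so $m-\omega$ is a resonance and not an eigenvalue.

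For $\lambda = -m-\omega$, I would not redo any computation: the proof of Proposition~\ref{Prop:eigenvalues_symmetry}\emph{\ref{it:sym_L_0}} shows $\{L_0+\omega\Id,\sigma_1\}=0$, hence $(L_0+\omega\Id)(\sigma_1\psi) = -\sigma_1(L_0+\omega\Id)\psi$. Applying this to the resonance $(R,S)\transp$ at $\lambda=m-\omega$ yields that $\sigma_1(R,S)\transp = (S,R)\transp$ is a generalized eigenfunction of $L_0$ at $-m-\omega$, and it is still bounded and non-$L^2$ because $\sigma_1$ is a pointwise unitary. The main (and only) obstacle is the bookkeeping in the algebraic check of $(\star)$; there is no conceptual difficulty, but the computation deserves to be displayed carefully since the statement in~\cite[Lemma 5.5]{BerCom-12} contains a typo in $S$ that this proof is meant to correct.
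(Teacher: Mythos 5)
Your proposal is correct and takes essentially the same approach as the paper: direct verification of the first-order ODE system for $\lambda = m-\omega$ using the explicit $\tanh$-formulas for $R$ and $S$, followed by the anticommutation $\{L_0+\omega\Id,\sigma_1\}=0$ to transfer the result to $\lambda = -m-\omega$. The paper organizes the computation in the rescaled variable (working with $\tilde L_0$, $\tilde R$, $\tilde S$, $\tilde M$) rather than via $\partial_x = \kappa(1-t^2)\partial_t$, but this is a cosmetic difference; the identities $\kappa\sqrt{\nu}=m-\omega$ and $\tfrac{\nu}{1-\nu}=\tfrac{m-\omega}{2\omega}$ you invoke are exactly what makes both versions collapse, and your remark on the boundedness and non-$L^2$ behavior at infinity is a small but welcome addition that the paper leaves implicit.
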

\begin{proof}
    We equivalently prove the result for the rescaled problem defined in Section~\ref{Subsection_rescaled_operators}. Namely, that
    \[
        \tilde{L}_0\begin{pmatrix}\tilde{R}\\\tilde{S}\end{pmatrix} = (m-\omega)\begin{pmatrix}\tilde{R}\\\tilde{S}\end{pmatrix}.
    \]
    The resonance at $-m-\omega$ is obtained by symmetry of the spectrum of~$L_0$ w.r.t.\ $-\omega$ (see Proposition~\ref{Prop:eigenvalues_symmetry}), with corresponding (generalized) eigenfunction obtained by exchanging the spinors.
    
    Rewriting
    \[
        \tilde{R} = \frac{\sqrt{\nu}\tanh}{1-\nu\tanh^2} \quad \textrm{ and } \quad \tilde{S} = -\frac{\nu}{1-\nu} \frac{1-\tanh^2}{1-\nu\tanh^2} = \frac{1}{1-\nu\tanh^2} - \frac{1}{1-\nu}\,,
    \]
    we compute
    \begin{align*}
        \tilde{M} &= m - f(\tilde{v}^2 - \tilde{u}^2) = m - 2(m-\omega)\frac{1-\tanh^2}{1-\nu\tanh^2}= m \left( 1 - 4\frac{\nu}{1+\nu}\frac{1-\tanh^2}{1-\nu\tanh^2} \right),\\
        \tilde{R}' &= \sqrt{\nu} \frac{1-\tanh^2}{\left(1-\nu\tanh^2\right)^2} \left(1+\nu\tanh^2\right)
        = -2 \frac{m}{\kappa}\frac{1-\nu}{1+\nu} \frac{1+\nu\tanh^2}{1-\nu\tanh^2} \tilde{S}\\
        \intertext{and}
        \tilde{S}' &= \left(\frac{1}{1-\nu\tanh^2}\right)' = 2\nu \frac{1-\tanh^2}{\left(1-\nu\tanh^2\right)^2} \tanh = 4\frac{m}{\kappa}\frac{\nu}{1+\nu} \frac{1-\tanh^2}{1-\nu\tanh^2} \tilde{R} \,,
    \end{align*}
    and conclude the proof since
    \begin{align*}
       \frac1m\left( \tilde{L}_0 + \omega \right)\begin{pmatrix}\tilde{R}\\\tilde{S}\end{pmatrix} &= \frac{\kappa}{m}\begin{pmatrix}\tilde{S}'\\-\tilde{R}'\end{pmatrix} + \frac{\tilde{M}}{m} \begin{pmatrix}\tilde{R}\\-\tilde{S}\end{pmatrix} \\
            &= \begin{pmatrix}4\frac{\nu}{1+\nu} \frac{1-\tanh^2}{1-\nu\tanh^2} \tilde{R}\\2\frac{1-\nu}{1+\nu} \frac{1+\nu\tanh^2}{1-\nu\tanh^2} \tilde{S}\end{pmatrix} + \left( 1 - 4\frac{\nu}{1+\nu}\frac{1-\tanh^2}{1-\nu\tanh^2} \right) \begin{pmatrix}\tilde{R}\\-\tilde{S}\end{pmatrix}= \begin{pmatrix} \tilde{R}\\ \tilde{S}\end{pmatrix},
    \end{align*}
    where the last equality for the lower spinor is due to
    \[
        2(1-\nu)(1+\nu\tanh^2) + 4\nu(1-\tanh^2) = 2(1+\nu)(1-\nu\tanh^2) \,. \qedhere
    \]
\end{proof}

We can now prove that, in the massive Gross--Neveu model, $L_0$ has no other eigenvalues than $-2\omega$ and~$0$.
\begin{lemma}[Spectrum of~$L_0$ for $f(s)=s$]\label{L0_no_other_evs_p_equal_1}
	Let $f(s)=s$ and $\omega\in(0,m)$. Then
	\[
	    \sigma\left(L_0\right) \cap (-m-\omega, m-\omega) = \{-2\omega, 0\}\,.
	\]
\end{lemma}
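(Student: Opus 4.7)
The plan is to reduce the Dirac spectral problem for $L_0$ to a Schr\"odinger spectral problem via the unitary $U$ from the proof of Theorem~\ref{groundstate}, and then count eigenvalues via Sturm oscillation theory using the explicit threshold resonance from Lemma~\ref{Resonances_L0_p_equal_1}. By Theorem~\ref{groundstate} and the symmetry of $\sigma(L_0)$ with respect to $-\omega$ (Proposition~\ref{Prop:eigenvalues_symmetry}), it suffices to exclude eigenvalues of $L_0$ in the half-gap $(0, m-\omega)$.

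Assume $L_0\phi = \lambda\phi$ with $\lambda \in (0, m-\omega)$ and $\phi \in L^2(\R, \C^2) \setminus \{0\}$, write $A := L_0 + \omega \Id$, and set $(P, Q)\transp := U\phi$. The identity
\[
    UAU = \begin{pmatrix} 0 & -\partial_x + M \\ \partial_x + M & 0 \end{pmatrix}, \qquad M := m - (v^2 - u^2),
\]
from the proof of Theorem~\ref{groundstate} yields, after applying $UAU$ to $(P,Q)\transp$, the first-order system
\[
    (\partial_x + M)P = (\omega+\lambda)Q, \qquad (-\partial_x + M)Q = (\omega+\lambda)P.
\]
Since $\omega+\lambda > 0$, $P \equiv 0$ would force $Q \equiv 0$ and hence $\phi \equiv 0$; thus $P \not\equiv 0$. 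Squaring yields $H_+ P = (\omega+\lambda)^2 P$ with $H_+ := -\partial_x^2 + M^2 - M'$, so $(\omega+\lambda)^2 \in (\omega^2, m^2)$ lies in the point spectrum of $H_+$. It therefore suffices to prove that $H_+$ has no eigenvalue in $(\omega^2, m^2)$.

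The operator $H_+$ has essential spectrum $[m^2, +\infty)$ (since $M \to m$) and admits the strictly positive groundstate $v + u$ at energy $\omega^2$, as established in the proof of Theorem~\ref{groundstate}. Applying the same first-order reduction to the resonance $(R, S)\transp$ of Lemma~\ref{Resonances_L0_p_equal_1}, one sees that $R + S$ is a bounded generalized eigenfunction of $H_+$ at the threshold $m^2$; plugging in the explicit formulas for $R$ and $S$ (with $u = \sqrt{\nu}\tanh(\kappa x) v$) and simplifying yields
\[
    R + S = \frac{\sqrt{\nu}\,(\tanh(\kappa x) - \sqrt{\nu})}{(1 - \nu)\,(1 - \sqrt{\nu}\,\tanh(\kappa x))},
\]
which has exactly one zero on $\R$ and tends to $\pm\sqrt{\nu}/(1-\nu)\neq 0$ as $x \to \pm \infty$. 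By the standard Sturm oscillation theorem for one-dimensional Schr\"odinger operators on $\R$, a family $\psi^-_E$ of $-\infty$-subdominant solutions of $(H_+ - E)\psi = 0$ can be chosen continuously in $E \in [\omega^2, m^2]$, and the number of zeros of $\psi^-_E$ equals the number of eigenvalues of $H_+$ strictly below $E$. Since $\psi^-_{\omega^2} \propto v+u$ has no zero and $\psi^-_{m^2} \propto R+S$ has exactly one zero, $H_+$ admits a single eigenvalue strictly below $m^2$, namely $\omega^2$. This excludes the assumed $\lambda$ and completes the proof.

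The main subtlety is the extension of the nodal count across the threshold $E = m^2$: for $E < m^2$ the solution $\psi^-_E$ decays exponentially at $-\infty$, whereas the limiting $R+S$ is merely bounded. Ensuring continuous dependence of $\psi^-_E$ (and of its nodal count) on $E$ up to and including the threshold is classical --- typically via a Pr\"ufer angle transformation or a Weyl-$m$-function argument --- but has to be invoked with care.
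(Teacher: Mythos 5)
Your proof is correct and follows essentially the same route as the paper's: conjugate by $U$ to reduce to the supersymmetric Schr\"odinger operators $-\partial_x^2 + M^2 \mp M'$, show that the threshold resonance from Lemma~\ref{Resonances_L0_p_equal_1} yields a bounded generalized eigenfunction at $m^2$ with exactly one zero (your factored form of $R+S$ is an algebraically equivalent simplification of the paper's quadratic $h_\pm$), and conclude by Sturm oscillation against the zero-free groundstate $v+u$ at energy $\omega^2$ from Theorem~\ref{groundstate}. Your two additions --- the observation that $P\not\equiv 0$, which lets you commit to a single scalar operator $H_+$ instead of the two-by-two block, and the explicit flagging of the continuity of the nodal count up to the threshold $E=m^2$ --- tighten a couple of points that the paper leaves implicit, but do not change the substance of the argument.
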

\begin{proof}
    Thanks to Theorem~\ref{groundstate}, we are left with proving that $L_0$ has no eigenvalues in~$(-m-\omega,m-\omega)\setminus[-2\omega,0]$. By symmetry of the spectrum, it is sufficient to prove that there are no eigenvalues in~$(0,m-\omega)$. To proceed, we will use the resonances given in Lemma~\ref{Resonances_L0_p_equal_1}.
    
	As in the proof of Theorem~\ref{groundstate}, we define~$A=L_0 + \omega\Id$, which admits the resonance $+m$ with the same generalized eigenfunctions $(R,S)\transp$ as $L_0$, and for which we equivalently have to prove that it has no eigenvalues in~$(\omega,m)$. After the same coordinate transformation $U$ as in the proof of this theorem, the resonance corresponds to bounded solutions
	\[
	    R \mp S = \frac{uv}{v^2-u^2} \pm \frac{\nu}{1-\nu} \times \frac{v^2-\nu^{-1}u^2}{v^2-u^2} = \pm \frac{\nu}{1-\nu} \times \frac{1  \pm\frac{1-\nu}{\sqrt\nu}\tanh(\kappa\cdot) - \tanh^2(\kappa\cdot)}{1-\nu\tanh^2(\kappa\cdot)}
	\]
	respectively to the equations
	\[
	    (- \partial_x^2 + M^2 \pm \prim{M} - m^2) f = 0\,.
	\]
	If these solutions have a single zero then, by Sturm's oscillation theorem and since we know by Theorem~\ref{groundstate} that the groundstate energy of~$- \partial_x^2 + M^2 \pm \prim{M}$ is $\omega^2$, it shows that there are no eigenvalues of~$A^2$ in the interval $(\omega^2, m^2)$ and we are done.
	
	To show that these solutions have a single zero in $\R$, we check that the polynomials $h_\pm(y) := 1 \pm\frac{1-\nu}{\sqrt\nu}y - y^2$, appearing in  the numerators, have a single zero in $(-1,1)$. Since $h_-(x) = h_+(-x)$, we study~$h_+$. The roots of $h_+$ are $- \sqrt{\nu}$ and $\sqrt{\nu}^{-1}$. Since $\nu \in (0,1)$, $h_+$ has a single zero on~$(-1,1)$.
\end{proof}

This means, in particular, that $t$ defined in~\eqref{Def_t} equals $m$, the bottom of the essential spectrum of $L_0 + \omega\Id$, and we can obtain a larger range of~$\omega$'s in Theorem~\ref{Thm_H2_no_purely_imaginary_evs} than the one given for the general case $p>0$. Indeed, $t=m$ implies that our necessary condition developed in Section~\ref{Subsection_Power_nonlinearity_Lower_Bound_Re_z2} is actually a sufficient condition too. We therefore obtain, in the massive Gross--Neveu model, that
\begin{equation*}
    \beta(1) = \frac{\sqrt{1+\frac{8}{3\sqrt{3}}} - 1}{2} \gtrapprox 0.2968\,.
\end{equation*}

Finally, evaluating at $\omega/m=7/20$, for the choice $E=m-\omega$ ---for which the condition $\omega \geq E/2$ becomes $\omega/m\geq 1/3$---, the values of $\theta(\omega, t=m,\mu=2, p=1)$ and $\theta_+(\omega,t=m, E=m-\omega)$, we find that~\eqref{General_nonlinearity_lower_bound_Re_z2_condition_on_Q} holds and have therefore obtained the following which compares $\Re z$ (for eigenvalues not lying on the axes) to the inner thresholds of the essential spectrum of~$H_2$.
\begin{lemma}\label{restrictions_due_Re_z_above_threshold}
    Let $f(s)=s$, $\frac{7}{20}\leq\frac{\omega}{m}<1$, and~$z \in \C \setminus(\R \cup i \R)$ be an eigenvalue of~$H_2$. Then
    \[
        \left|\Re z\right|^2 > \Re z^2 > (m-\omega)^2.
    \]
\end{lemma}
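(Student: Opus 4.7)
The first inequality $|\Re z|^2 > \Re z^2$ is trivial: the difference equals $(\Im z)^2$, which is strictly positive since $z \notin \R$ forces $\Im z \neq 0$. For the second, the plan is to apply Lemma~\ref{technical_lemma} with $\mu = 2$, $p = 1$, $E = m-\omega$, and $t = m$. The choice $t = m$ is precisely where I invoke the content specific to the Gross--Neveu case: by Lemma~\ref{L0_no_other_evs_p_equal_1}, $L_0$ has no eigenvalues in $(0,m-\omega)$, so the minimum in the definition~\eqref{Def_t} of $t$ is not attained at a discrete eigenvalue and collapses to $t=m$, the bottom of the positive essential spectrum of $L_0+\omega\Id$. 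The condition $E < t+\omega$ reads $m-\omega < m+\omega$ and is automatic for $\omega>0$; also $E=m-\omega \in (0,m) \subset [0,2m)$.

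What remains is to verify~\eqref{General_nonlinearity_lower_bound_Re_z2_condition_on_Q}, which under the above choices becomes
\[
    \frac{\normt{Q_\omega}}{2(m+\omega)} \leq \theta_+\!\left(\frac{2(m-\omega)^2}{(m+\omega)^2}\right).
\]
I plan to handle this by a one-variable monotonicity argument in $\omega$. By Lemma~\ref{Lemma_Q} with $p=1$, the function $\omega \mapsto \normt{Q_\omega}$ equals $m^2/(2\omega)$ for $\omega \leq m/2$ and $2(m-\omega)$ for $\omega > m/2$; both pieces match continuously at $\omega = m/2$ and both are strictly decreasing, hence so is the left-hand side above. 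On the right, $\omega \mapsto 2(m-\omega)^2/(m+\omega)^2$ is strictly decreasing and $\theta_+$ is strictly decreasing, so their composition, i.e.\ the right-hand side, is strictly increasing in $\omega$. It therefore suffices to verify the inequality strictly at the left endpoint $\omega = 7m/20$.

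At $\omega = 7m/20$, one has $\normt{Q_\omega} = m^2/(2\omega) = 10m/7$, so the left-hand side equals $(10m/7)/(2 \cdot 27m/20) = 100/189 \approx 0.529$, while the right-hand side is $\theta_+(2(13/27)^2) = \theta_+(338/729)$, which a direct substitution into~\eqref{General_nonlinearity_lower_bound_Re_z2_Def_theta_plus} evaluates to approximately $0.542$. This is strictly larger, so the hypothesis of Lemma~\ref{technical_lemma} holds with a strict gap, and the conclusion gives $\Re z^2 \geq (m-\omega)^2$; the strict form $\Re z^2 > (m-\omega)^2$ follows from propagating the strict gap through the lower bound~\eqref{Lemma_ineq_Re_z2_inequality} in Theorem~\ref{Thm_ineq_Re_z2}. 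The only step that is not mere bookkeeping is the numerical evaluation at $\omega = 7m/20$, which is essentially a one-line calculation and is already flagged in the paragraph preceding the lemma.
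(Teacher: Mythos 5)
Your proposal is correct and takes essentially the same route as the paper's (very terse) justification: invoke Lemma~\ref{technical_lemma} with $p=1$, $\mu=2$, $E=m-\omega$, and $t=m$, the last choice being licensed by Lemma~\ref{L0_no_other_evs_p_equal_1}, and then verify condition~\eqref{General_nonlinearity_lower_bound_Re_z2_condition_on_Q} numerically. Your arithmetic checks out: at $\omega=7m/20$ one has $\normt{Q_\omega}=10m/7$, the left side equals $100/189\approx0.529$, while $\theta_+(338/729)\approx0.542$, so the condition holds with a strict gap, and increasing $E$ slightly (permissible by continuity of $\theta_+$, since $338/729<2$) upgrades the conclusion $\Re z^2\geq E^2$ of Lemma~\ref{technical_lemma} to the strict inequality claimed.

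One small but genuine improvement over the paper's exposition deserves a note: the paper's Section~\ref{Subsection_Power_nonlinearity_Lower_Bound_Re_z2} establishes monotonicity in $\omega$ for \emph{fixed} $E$, but in this lemma $E=m-\omega$ itself varies. Your argument — that the left side of~\eqref{General_nonlinearity_lower_bound_Re_z2_condition_on_Q} is strictly decreasing (two matching decreasing pieces across $\omega=m/2$), while the map $\omega\mapsto 2(m-\omega)^2/(m+\omega)^2$ is decreasing and $\theta_+$ is decreasing so the right side is strictly increasing — correctly handles the $\omega$-dependent $E$ and justifies restricting the check to the single endpoint $\omega=7m/20$. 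The paper glosses over this point, so your filling it in is welcome.
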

\begin{remark}
    Of course, this exact factor $\frac{7}{20}$ is not optimal within our framework, but it is close since a dichotomy finds it being larger than $0.3448$.
\end{remark}

\section{Power nonlinearities: \texorpdfstring{$L_2$}{Lmu} has only one eigenvalue in~\texorpdfstring{$(-2\omega,0)$}{(-2w,0)}} \label{Section_minmax_principle_and_Lmu}

For this final section, we return to the general case of power nonlinearities $f(s) = s |s|^{p-1}$.
The goal of this section is to determine the number of eigenvalues of~$L_2$ between the eigenvalues $-2\omega$ and~$0$. 
\begin{theorem}\label{Thm_L2_only_one_ev}
    Let $f(s) = s |s|^{p-1}$, with $p>0$, and $\omega\in(0,m)$.
    Then, $L_2$ has exactly one eigenvalue in~$(-2\omega, 0)$.
\end{theorem}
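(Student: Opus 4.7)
The plan is to combine the bifurcation/monotonicity picture from Section~\ref{Section_bif_from_origin} with a minmax characterization of eigenvalues inside the gap. Existence of at least one eigenvalue in $(-2\omega,0)$ is free: it is the content of Lemma~\ref{A_priori_bounds_on_first_ev_of_Lmu}. By Lemma~\ref{Lemma_equiv_number_evs_L2_and_Lmu}, uniqueness is equivalent to showing that $0\notin\sigma(L_\mu)$ for every $\mu\in(0,2)$, so the whole task is to prevent \emph{additional} eigenvalue branches of $L_\mu$ from crossing $0$ while $\mu$ runs through $(0,2)$.

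Because $L_\mu$ commutes with the parity operator $\sigma_3\mathcal{P}$, I would split $L_\mu = L_\mu^{e}\oplus L_\mu^{o}$ along the even and odd subspaces of $L^2(\R,\C^2)$. Recalling from Proposition~\ref{Prop:eigenvalues_symmetry} that $\phi_0$ is even while $\phi_{-2\omega}=\sigma_1\phi_0$ and $\partial_x\phi_0$ are odd, the known zero-eigenfunctions split cleanly: the branch $\lambda_1(\mu)$ of Lemma~\ref{Lemma_equiv_number_evs_L2_and_Lmu} lives in the even sector and emanates from $0$ at $\mu=0$, whereas the zero of $L_2$ in the odd sector comes from $\partial_x\phi_0$. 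Uniqueness thus reduces to two separate statements: no second even branch of $L_\mu^{e}$ crosses $0$ for $\mu\in(0,2)$, and no odd branch of $L_\mu^{o}$ reaches $0$ before $\mu=2$.

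The key tool is the Dolbeault--Esteban--Séré minmax principle for Dirac operators with a spectral gap (references~\cite{DolEstSer-00,DolEstSer-06,EstLewSer-19,SchSolTok-19} cited in the introduction), adapted to each parity sector. A natural splitting of the Hilbert space is the one induced by the spectral projectors of~$\sigma_3$, which matches the block structure of $D_m$; on each parity sector this yields a variational formula for the $k$-th eigenvalue of $L_\mu^{e/o}$ above $-m-\omega$ as an infimum over $k$-dimensional subspaces of the ``upper'' component of the supremum of the Rayleigh quotient over the ``lower'' component. In the even sector, combined with the Feynman--Hellmann monotonicity~\eqref{Feynman_Hellmann_type_ineq}, the minmax identifies $\lambda_1^{e}(\mu)$ as the unique even eigenvalue in $(-2\omega,0)$ for all $\mu\in(0,2]$; the second minmax value must then be shown to stay nonnegative on $[0,2]$, and the explicit formulas~\eqref{Def_v_p}--\eqref{Expression_of_Q} for $v$, $u$ and $Q$ make this a concrete computation. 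In the odd sector, one identifies the first odd eigenvalue above $-2\omega$ as a decreasing branch and uses the fact that translation invariance forces $\partial_x\phi_0\in\ker L_2^{o}$ to show that this branch reaches $0$ precisely at $\mu=2$; meanwhile the next odd minmax value is shown to stay nonnegative throughout $[0,2]$.

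The main obstacle I expect is the verification of the technical hypotheses of the minmax principle in this parity-restricted setting (in particular the separation of the chosen subspace from the essential thresholds $\pm(m-\omega)$) and the sharp evaluation of the second minmax value in each parity sector. Concretely, the hardest point is proving that at $\mu=2$ the function $\partial_x\phi_0$ is the ground state of $L_2^{o}$ above $-2\omega$ rather than an excited state, since in the latter case a further odd branch would already lie inside $(-2\omega,0)$. I would attack this step by Sturm-oscillation arguments in the spirit of the proof of Theorem~\ref{groundstate}, exploiting the explicit hyperbolic-tangent structure of $v$ and $u$ for power nonlinearities to count the sign changes of a candidate ``lower'' odd eigenfunction and rule out its existence.
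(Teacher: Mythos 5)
Your proposal takes a genuinely different route from the paper, and it contains one real gap. The paper's argument is much shorter in spirit: since the eigenvalues of $L_2(\omega)$ are simple and depend continuously on $\omega$, and $-2\omega$ and $0$ remain eigenvalues throughout, the number of eigenvalues in $(-2\omega,0)$ is constant on $(0,m)$. It then suffices to count them as $\omega\to m$, and there the minmax with the approximate free-Dirac projectors $\Lambda_\pm$ of \eqref{eq:def_H_pm} reduces the Rayleigh quotient to the P\"{o}schl--Teller Schr\"{o}dinger operator $-\tfrac12\partial_x^2 - \tfrac{\mathfrak{s}(\mathfrak{s}+1)}{2\cosh^2}$, whose discrete spectrum is completely explicit (Theorem~\ref{negative_spectrum_of_Lmu_nonrelativistic_limit} and Corollary~\ref{negative_evs_of_L2}). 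You aim instead to prove $0\notin\sigma(L_\mu)$ for all $\mu\in(0,2)$ and all $\omega$ at once, which is considerably harder because the explicit structure that makes the count tractable only emerges in the non-relativistic limit. You should exploit the continuity-in-$\omega$ reduction: it is precisely what lets the paper replace a global estimate by an asymptotic one.

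The gap is in your odd-sector step. You propose to show that $\partial_x\phi_0$ is the lowest odd eigenfunction of $L_2$ above $-2\omega$ by a Sturm oscillation argument in the spirit of Theorem~\ref{groundstate}. But that proof hinges on the anticommutation $\{L_0+\omega\Id,\sigma_1\}=0$: conjugating $A=L_0+\omega\Id$ by $U=(\sigma_1+\sigma_3)/\sqrt 2$ gives a purely off-diagonal operator, so $A^2$ becomes a pair of scalar Schr\"{o}dinger operators to which Sturm theory applies. For $L_2=L_0-2Q$ this fails because $\{Q,\sigma_1\}\neq 0$ (indeed $Q\phi_{-2\omega}=0$ but $Q$ does not anticommute with $\sigma_1$), so $(UL_2U)^2$ is not block-diagonal and there is no obvious pair of scalar equations on which to count nodes. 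Establishing that $\partial_x\phi_0$ is a ground state rather than an excited state of the odd restriction would therefore need a substantially new argument, not a transcription of the one used for $L_0$. A secondary but real concern is your choice of projectors: the $\sigma_3$ upper/lower split, while compatible with the DES minmax hypotheses, leaves a cross-term of order $\kappa$ rather than $\kappa^2$; the paper notes that it opts for the pseudodifferential $\Lambda_\pm$ precisely because they push the $F_-$ contribution to $O(\kappa^6)$ and streamline the expansion. Your ``concrete computation'' of the second minmax value for general $\omega$, with the coarser projectors, would be much messier and you give no indication of how to carry it through. Combining your bifurcation framework with the paper's $\omega\to m$ reduction and P\"{o}schl--Teller count would be the cleanest fix.
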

We expect this result to hold for groundstates with general $f$, at least if~$f(s)$ behaves as a power for small $s$.

By simplicity and continuity with $\omega$ of eigenvalues, this number is independent of~$\omega \in (0,m)$, and therefore, it is sufficient to compute it for $\omega$ close to $m$.
Therefore, it is sufficient to find lower bounds on the eigenvalues of~$L_\mu$, $\mu\geq0$, as $\omega$ approaches $m$.
We do this by expanding the eigenvalues of $L_\mu$ in the parameter $\kappa = \sqrt{m^2 -\omega^2}$.
In this limit, the operators converge to the free Dirac operator, with the first-order correction given by a Schr{\"o}\-dinger operator with P\"{o}schl--Teller potential, whose eigenvalues and eigenfunctions are known explicitly.
We start with a summary of the results, followed by a reminder about the minmax principle for operators with a gap, and the proof of the expansions.

\subsection{Eigenvalues of \texorpdfstring{$L_\mu$}{L-mu} in the non-relativistic limit}

We define, for any $p>0$ and~$\mu\geq0$, the real number
\begin{equation}\label{Def_s}
    \mathfrak{s}:=\frac{\sqrt{1+4\frac{p+1}{p^2}(1+p\mu)} - 1}{2} \,,
\end{equation}
which is equivalently defined as the positive real number such that
\[
    \mathfrak{s}(\mathfrak{s}+1) = \frac{p+1}{p^2}(1+p\mu)\,.
\]
\begin{remark}
    We recall the notations $\lceil x\rceil$ and~$\lfloor x\rfloor$: for $x\in\R$, $\lceil x\rceil$ is the smallest integer larger or equal to $x$ and~$\lfloor x\rfloor$ the largest integer smaller or equal to $x$. That is, $\lceil x\rceil, \lfloor x\rfloor \in\Z$ such that $\lceil x\rceil-1 < x \leq \lceil x\rceil$ and~$\lfloor x\rfloor \leq x < \lfloor x\rfloor +1$.
\end{remark}
\begin{theorem}[Eigenvalues in the non-relativistic limit]\label{negative_spectrum_of_Lmu_nonrelativistic_limit}
    Let $f(s) = s |s|^{p-1}$, $p>0$, $\mu\geq0$, and $\mathfrak{s}$ be as in~\eqref{Def_s}.
   For sufficiently small $\kappa=\sqrt{m^2-\omega^2}\to0^+$, the operator $L_\mu(\omega)$ has at least $\lceil \mathfrak{s} \rceil$ eigenvalues $\lambda_1 < \dotsc < \lambda_{\lceil \mathfrak{s} \rceil}$ in~$(-2\omega,m-\omega)$.
These eigenvalues have an expansion, as $\kappa \to 0^+$, of the form
    \begin{equation}\label{expansion_eigenvalues_Lmu_nonrelativistic_limit}
        \lambda_k = \frac{1 -p^2(\mathfrak{s}+1-k)^2}{2m} \kappa^2 + O\!\left(\kappa^3\right), \qquad k=1,\dotsc,\lceil \mathfrak{s} \rceil\,.
    \end{equation}
    Moreover, if there exists an $(\lceil \mathfrak{s} \rceil+1)$-th eigenvalue $\lambda_{\lceil \mathfrak{s} \rceil + 1}$ in~$(-2\omega,m-\omega)$, then it is positive and admits the lower bound
    \begin{equation}\label{Lower_bound_expansion_first_positive_ev_Lmu_nonrelativistic_limit}
        \lambda_{\lceil \mathfrak{s} \rceil + 1} \geq m - \omega - O\!\left(\kappa^3\right).
    \end{equation}
\end{theorem}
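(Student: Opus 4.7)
My plan is to reduce the eigenvalue problem for $L_\mu$, in the non-relativistic limit $\omega \to m$, to a one-dimensional Schrödinger operator with P\"oschl--Teller potential (whose discrete spectrum is known in closed form), via one of the minmax principles for operators with a spectral gap cited in Section~\ref{Section_Main_results}.

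First, I would pass to the rescaled operator $\tilde{L}_\mu$ via $U_{p\kappa}$ (Section~\ref{Subsection_rescaled_operators}) and, using the natural upper/lower spinor splitting, characterize the $k$-th eigenvalue of $L_\mu$ in the gap $(-m-\omega, m-\omega)$ as
\begin{equation*}
    \lambda_k(L_\mu) = \inf_{\substack{V \subset H^1(\R,\C)\\ \dim V = k}} \sup_{f_1 \in V\setminus\{0\}} \mathcal{R}_\mu(f_1),
\end{equation*}
where $\mathcal{R}_\mu(f_1)$ is a Rayleigh quotient for $L_\mu$ obtained by optimizing out the lower spinor $f_2$ at fixed $f_1$. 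The second row of the eigenvalue equation gives $f_2 = -(p\kappa/(2m))\,\partial_y f_1 + O(\kappa^3)$ in the rescaled variable $y = p\kappa x$, where the $O(\kappa^3)$ correction comes from the off-diagonal entries $uv = \sqrt{\nu}\tanh(y)\,v^2 = O(\kappa)$ of~$Q$. Substituting into the first row, using~\eqref{v2minusu2tothep} and~\eqref{Expression_of_Q}, writing $\lambda = (m-\omega) E$, and multiplying by $2m/\kappa^2$, yields the effective equation
\begin{equation*}
    \bigl[-p^2 \partial_y^2 + 1 - (p+1)(1+p\mu)\operatorname{sech}^2(y)\bigr] f_1 = E f_1 + O(\kappa)\, f_1,
\end{equation*}
which after dividing by $p^2$ is the standard reflectionless P\"oschl--Teller problem $P f_1 = [(E-1)/p^2]\, f_1$, with $P := -\partial_y^2 - \mathfrak{s}(\mathfrak{s}+1)\operatorname{sech}^2(y)$ by definition~\eqref{Def_s} of $\mathfrak{s}$.

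The operator $P$ is known to have exactly $\lceil\mathfrak{s}\rceil$ simple negative eigenvalues $-(\mathfrak{s}-n)^2$, $n = 0, \ldots, \lceil\mathfrak{s}\rceil-1$, with explicit associated-Legendre eigenfunctions, and no embedded eigenvalues in $[0, \infty)$. The expansion~\eqref{expansion_eigenvalues_Lmu_nonrelativistic_limit} then follows from a two-sided bound on the minmax: the upper bound is obtained by inserting appropriately rescaled P\"oschl--Teller eigenfunctions as trial functions in $\mathcal{R}_\mu$ and estimating the error, while the lower bound comes from comparing the minmax of $\tilde{L}_\mu$ with that of the limiting rescaled operator. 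Lemma~\ref{A_priori_bounds_on_first_ev_of_Lmu} ensures that the first $\lceil\mathfrak{s}\rceil$ eigenvalues stay in the interior of the gap $(-2\omega, m-\omega)$ for small $\kappa$, and the simplicity in Lemma~\ref{L_mu_eigenvalues_simple_and_analytic_in_mu} guarantees that the branches remain separated. The bound~\eqref{Lower_bound_expansion_first_positive_ev_Lmu_nonrelativistic_limit} on $\lambda_{\lceil\mathfrak{s}\rceil+1}$ follows from the same comparison: since $P$ has no further eigenvalue strictly below $0$, the $(\lceil\mathfrak{s}\rceil+1)$-th minmax value of $L_\mu$ cannot lie more than $O(\kappa^3)$ below the threshold $m-\omega$.

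The main obstacle will be obtaining the $O(\kappa^3)$ remainder in~\eqref{expansion_eigenvalues_Lmu_nonrelativistic_limit}, rather than the naive $O(\kappa^2)$ error that one would get from mere norm-resolvent convergence to the P\"oschl--Teller operator. This forces one to retain the $\sqrt{\nu} = O(\kappa)$ off-diagonal contributions of $Q$ inside the non-relativistic reduction and verify, via second-order perturbation theory on the Rayleigh quotient $\mathcal{R}_\mu$, that they do not shift the leading eigenvalues by more than $O(\kappa^3)$. A secondary technical point is to check that the chosen minmax principle applies with the same upper/lower splitting uniformly in $\omega$ near $m$ and simultaneously for all relevant eigenvalues in the gap.
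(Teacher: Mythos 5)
Your proposal takes a route that the paper explicitly acknowledges as viable but deliberately avoids: you use the naive upper/lower spinor splitting for the projections $\Lambda_\pm$ in the Dolbeault--Esteban--Séré minmax, whereas the paper uses the tilted projections onto $\cH_\pm = \{l_\pm(h)\}$ of~\eqref{eq:def_H_pm} (with $l_\pm(h)$ built from $(h,-\alpha h')$ and $(-\alpha h', h)$, $\alpha=(2m)^{-1}$). The difference is substantive. With the paper's choice, $\pscal{l_+(h), D_m\, l_+(h)} = m\norm{l_+(h)}^2 + \alpha\norm{h'}^2$ already contains the kinetic term, and the off-diagonal Dirac coupling collapses to $\pscal{l_-(g), D_m\, l_+(h)} = \alpha^2\pscal{g',h''}$, which is $O(\kappa^2)$ on the relevant trial subspaces and pushes the $F_-$ contribution to $O(\kappa^6)$ --- this is what the authors mean by ``more streamlined''. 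With your naive split the off-diagonal Dirac term is $O(\kappa)$ and the effective kinetic term $-p^2\partial_y^2$ only emerges after eliminating the lower spinor, so the $F_-$ contribution is leading-order, not a perturbation; this makes the bookkeeping harder and is exactly where the $O(\kappa^3)$-vs-$O(\kappa^2)$ issue you flag originates. Your identification of the limiting P\"oschl--Teller operator, the role of $\mathfrak{s}$, and the source of the main remainder terms is all correct.

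One point to be careful about: you phrase the reduction as ``optimizing out $f_2$ at fixed $f_1$'' using the second row of the eigenvalue equation. That row involves the unknown $\lambda$, so what you literally described is the Schur complement $f_2 = -(A_{22}-\lambda)^{-1}A_{21}f_1$, which is a nonlinear eigenvalue problem rather than a clean variational principle. The DES minmax you cite (Theorem~\ref{minmax_with_gap}) sidesteps this by taking a $\lambda$-independent supremum over all of $F_-$ inside the Rayleigh quotient; the inner supremum is not the same as the Schur complement, though they agree at the orders you need. A rigorous version of your argument should work purely with the DES formula (as the paper does), verify the gap condition $\gamma_0 < \gamma_1$ for your choice of $\Lambda_\pm$, and then argue separately --- via Lemma~\ref{A_priori_bounds_on_first_ev_of_Lmu} and Lemma~\ref{Lemma_Q} --- that the minmax levels $\gamma_k$ account for \emph{all} the eigenvalues in $(-2\omega, m-\omega)$, i.e., that $-2\omega$ and any other eigenvalue below $\gamma_0$ are indeed below $\gamma_0$. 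The claim on $\lambda_{\lceil\mathfrak{s}\rceil+1}$ also needs a short explicit argument: for $k > \lceil\mathfrak{s}\rceil$, any $k$-dimensional trial subspace contains a direction where the P\"oschl--Teller Rayleigh quotient is nonnegative, which gives $\gamma_{\lceil\mathfrak{s}\rceil+1}\geq m-\omega - O(\kappa^3)$.
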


Notice that the bound on~$\lambda_1$ obtained in~\eqref{expansion_eigenvalues_Lmu_nonrelativistic_limit} is better in the non-relativistic limit than the bound $\lambda_1 \geq -\mu\normt{Q} = -\mu p(p+1)(m-\omega)$ obtained in Lemma~\ref{A_priori_bounds_on_first_ev_of_Lmu}.
    
Also, since $p\mapsto \mathfrak{s}(p,\mu)$ is strictly decreasing on~$(0,+\infty)$, for any $\mu\geq0$, we have
\[
    \mathfrak{s}(p) > \lim\limits_{q\to+\infty} \mathfrak{s}(q) = \frac{\sqrt{1+4\mu} - 1}{2} \geq 0\,,
\]
and $\lceil \mathfrak{s} \rceil\geq1$ for any $p>0$ and~$\mu\geq0$. In the particular case $\mu=2$, it gives $\mathfrak{s}>1$, hence $\lceil \mathfrak{s} \rceil\geq2$, which means that $L_2$ has at least $3$ eigenvalues for any $p>0$. Furthermore, $\mathfrak{s}(p,\mu) = \frac{1}{p}+\frac{\mu}{2} + O(p)$ when $p\to0^+$, hence the number of eigenvalues diverges when $p\to0^+$, independently of~$\mu$.

\begin{corollary}\label{negative_evs_of_L2}
    Let $f(s) = s |s|^{p-1}$ with~$p>0$. In the non-relativistic limit, the operator $L_2(\omega)$ has exactly three eigenvalues $-2\omega = \lambda_0 < \lambda_1 < \lambda_2=0$ in~$[-2\omega,0]$, and the second eigenvalue $\lambda_1$ satisfies
    \[
       \lambda_1 = -\frac{p(p+2)}{2m} \kappa^2 + O\!\left(\kappa^3\right).
    \] 
    Moreover,
    \begin{itemize}
        \item if $p<1$, then $\lambda_3$ exists, is positive and satisfies
        \[
            \lambda_3=  \frac{p(2-p)}{2m} \kappa^2 +  O\!\left(\kappa^3\right);
        \]
        \item if $p\geq1$ and if $\lambda_3$ exists, then it is positive and admits the lower bound
        \[
            \lambda_3 \geq m - \omega - O\!\left(\kappa^3\right).
        \]
    \end{itemize}
\end{corollary}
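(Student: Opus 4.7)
The plan is a direct application of Theorem~\ref{negative_spectrum_of_Lmu_nonrelativistic_limit} with $\mu=2$, followed by an identification step fixing the second eigenvalue at zero.

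First I would specialize $\mathfrak{s}$ from~\eqref{Def_s} to $\mu=2$. The discriminant simplifies nicely, $1+4(p+1)(1+2p)/p^2 = (3p+2)^2/p^2$, yielding the clean value $\mathfrak{s} = 1 + 1/p$. Substituting into~\eqref{expansion_eigenvalues_Lmu_nonrelativistic_limit}, the quantity $p(\mathfrak{s}+1-k)$ takes the values $p+1$, $1$, and $1-p$ for $k=1,2,3$, producing leading coefficients $-p(p+2)$, $0$, and $p(2-p)$ respectively. This already delivers the claimed asymptotics of $\lambda_1$ and, when applicable, $\lambda_3$, and places $\lambda_2 = O(\kappa^3)$.

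The case split in the corollary is then read off from $\lceil\mathfrak{s}\rceil$: when $p<1$ we have $\mathfrak{s}>2$ so $\lceil\mathfrak{s}\rceil\geq 3$, and Theorem~\ref{negative_spectrum_of_Lmu_nonrelativistic_limit} supplies $\lambda_3$ with the explicit expansion (positive since $p<2$); when $p\geq 1$ we have $\mathfrak{s}\in(1,2]$ so $\lceil\mathfrak{s}\rceil=2$, and a possible $\lambda_3$ instead falls under the a priori lower bound~\eqref{Lower_bound_expansion_first_positive_ev_Lmu_nonrelativistic_limit}, giving $\lambda_3\geq m-\omega-O(\kappa^3)$.

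The one genuinely nontrivial step is pinning down $\lambda_2 = 0$ exactly, and not merely $O(\kappa^3)$. By Proposition~\ref{Prop:eigenvalues_symmetry}, $0\in\sigma(L_2)$ with eigenfunction $\partial_x\phi_0$; since $\lambda_2$ is the second-smallest eigenvalue of $L_2$ in $(-2\omega, m-\omega)$ by the theorem, it must coincide with this known eigenvalue---otherwise there would be two distinct eigenvalues of $L_2$ both within $O(\kappa^3)$ of zero, contradicting the strict ordering $\lambda_1<\lambda_2$. The same ordering forces $\lambda_1$ to be the unique eigenvalue of $L_2$ strictly between $-2\omega$ and $0$ (nothing lies below $\lambda_1$, and nothing lies in $(\lambda_1,0)$ since then it would have to be $\lambda_2$). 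Combining with the endpoint eigenvalues $\lambda_0=-2\omega$ and $\lambda_2=0$ from Proposition~\ref{Prop:eigenvalues_symmetry} gives exactly three eigenvalues in $[-2\omega,0]$, completing the proof.
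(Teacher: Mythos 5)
Your proposal is correct and follows essentially the same route as the paper: specialize $\mathfrak{s}=(p+1)/p$ at $\mu=2$, read off the coefficients $1-p^2(\mathfrak{s}+1-k)^2$ for $k=1,2,3$, split on $\lceil\mathfrak{s}\rceil$, and then identify the known eigenvalue $0$ with $\lambda_2$. One small imprecision: having two eigenvalues within $O(\kappa^3)$ of zero does not by itself contradict the \emph{strict ordering} $\lambda_1<\lambda_2$; what it contradicts is the \emph{scale} of the neighbours, since $\lambda_1\sim -\tfrac{p(p+2)}{2m}\kappa^2$ and $\lambda_3$ (when present) is $\gtrsim\kappa^2$, so neither can equal $0$ for small $\kappa$ and hence $0$ must be $\lambda_2$ — this is exactly how the paper phrases it.
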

This shows, in particular, that the first positive eigenvalue, $\lambda_3$, is asymptotically close to the essential spectrum for $p\geq1$.
In the case $p<1$, $\lambda_3$ is positive but away from the essential spectrum and, actually, the smaller the power $p$, the more  eigenvalues lie in~$(0,m-\omega)$. 
\begin{proof}[Proof of Corollary~\ref{negative_evs_of_L2}]
    Remarking that, at $\mu=2$, $\mathfrak{s}=\mathfrak{s}(\mu=2,p)=\frac{p+1}{p}$ hence $\lceil \mathfrak{s} \rceil\geq2$ with $\lceil \mathfrak{s} \rceil\geq 3 \Leftrightarrow p<1$, the bounds are a direct transcription of those in Theorem~\ref{negative_spectrum_of_Lmu_nonrelativistic_limit}.
    
    Now, we have $\lambda_0(\omega) = -2\omega$, 
    \[
        \left|\lambda_1(\omega) + \frac{p(p+2)}{2m} \kappa^2\right| \leq O\!\left(\kappa^3\right) \qquad \textrm{ and } \qquad \left|\lambda_2(\omega)\right| \leq O\!\left(\kappa^3\right),
    \]
    and the next eigenvalue $\lambda_3(\omega)$, if it exists, is positive. Therefore, the eigenvalue $0$ can only be $\lambda_2(\omega)$ and as a consequence, $-2\omega=\lambda_0(\omega) < \lambda_1(\omega)<\lambda_2(\omega)=0$ are the only eigenvalues in~$[-2\omega,0]$, in the non-relativistic limit.
\end{proof}

With that result, we can now prove Theorem~\ref{Thm_L2_only_one_ev}, which relies on the continuity of the eigenvalues of~$L_2$ with respect to $\omega$.
\begin{proof}[Proof of Theorem~\ref{Thm_L2_only_one_ev}]
The eigenvalues of~$L_2(p, \omega)$ are simple, and continuous with respect to $\omega$. We know that $-2\omega$ and~$0$ are eigenvalues of~$L_2(p, \omega)$ for any $\omega \in (0,m)$. Therefore, the number of eigenvalues in $(-2\omega, 0)$ is independent of $\omega \in (0,m)$. The previous corollary shows that this number equals one for large $\omega$.
\end{proof}

\subsection{The minmax principle for operators with a gap.}
The key tool for the proof of Theorem~\ref{negative_spectrum_of_Lmu_nonrelativistic_limit} is a minmax principle for eigenvalues inside a gap in the essential spectrum of an operator, as shown first in~\cite{DolEstSer-00,GriSei-99} (see~\cite{DolEstSer-06,EstLewSer-19,SchSolTok-19} for related results). This theorem gives a variational characterization of eigenvalues of self-adjoint operators inside a gap in the essential spectrum. We use the following formulation of the principle.
\begin{theorem}[{\cite[Theorem 1]{DolEstSer-06}}]\label{minmax_with_gap}
    Let $A$ be a self-adjoint operator wit domain~$D(A)$, in a Hilbert space $\cH$. Suppose that $\Lambda_{\pm}$ are orthogonal projections on~$\cH$ with $\Lambda_+ + \Lambda_- = \Id_\cH$ and such that 
    \[
        F_\pm:=\Lambda_\pm D(A) \subset D(A)\,.
    \]
    Define $\gamma_0$, the lower limit of the gap, as
      \begin{equation}\label{def_0_minmax_level}
        \gamma_0:=\sup\limits_{x_-\in F_-\setminus\{0\}} \frac{\pscal{x_-,A x_-}_\cH}{\norm{x_-}_\cH^2}\,,
    \end{equation}
    and~$\gamma_\infty$, its upper limit, as
    \[
        \gamma_\infty:=\inf(\sigma_{\textrm{ess}}(A)\cap(\gamma_0,+\infty))\in[\gamma_0,+\infty]\,.
    \]
    Finally, for $k \in \N\setminus\{0\}$, the minmax levels are defined as 
     \begin{equation}\label{def_minmax_levels}
		    \gamma_k := \inf\limits_{\substack{V\subset F_+\\\dim V = k}}\sup\limits_{x\in (V\oplus F_-)\setminus\{0\}}\frac{\pscal{x,Ax}_\cH}{\norm{x}_\cH^2}\,.
	\end{equation}
	If $\gamma_0 <+ \infty $ and the \emph{gap condition}
	\begin{equation}\label{gap_condition_in_thm}
		\gamma_0 < \gamma_1
	\end{equation}
	is satisfied, then for any $k\geq1$ either $\gamma_k$ is the $k$-th eigenvalue of~$A$ in~$(\gamma_0, \gamma_\infty)$, counted with multiplicity, or $\gamma_k=\gamma_\infty$. In particular, $\gamma_\infty \geq \sup_{k\geq1}\gamma_k \geq \gamma_1$.
\end{theorem}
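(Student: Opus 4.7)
The plan is to prove Theorem~\ref{minmax_with_gap} via the Schur complement strategy developed in~\cite{GriSei-99, DolEstSer-06}. For any $\lambda \in (\gamma_0, \gamma_\infty)$, the quadratic form $g \mapsto \pscal{g, (A-\lambda) g}_\cH$ on $F_-$ is uniformly negative by the definition of $\gamma_0$; hence for every $f \in F_+$ the map $g \mapsto \pscal{f+g, (A-\lambda)(f+g)}_\cH$ admits a unique maximizer $g_\lambda(f) \in F_-$ depending linearly on $f$. Setting
\[
    b_\lambda(f) := \sup_{g \in F_-} \pscal{f+g, (A-\lambda)(f+g)}_\cH,
\]
one obtains a closable quadratic form on $F_+$ that defines a self-adjoint Schur complement operator $B_\lambda$. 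A direct computation gives $\partial_\lambda b_\lambda(f) = -\|f + g_\lambda(f)\|^2_\cH$, so $\lambda \mapsto B_\lambda$ is continuous and strictly form-decreasing.

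The crucial correspondence to establish is that, for $\lambda \in (\gamma_0, \gamma_\infty)$, $\lambda$ is an eigenvalue of $A$ of algebraic multiplicity $m$ if and only if $0$ is an eigenvalue of $B_\lambda$ of multiplicity $m$. The forward direction projects an eigenvector $\psi = f + g$ of $A$ onto $F_+$ and uses the stationarity of $b_\lambda$ in $g$ to force $g = g_\lambda(f)$ and $B_\lambda f = 0$; the reverse direction reconstructs the eigenvector $\psi := f + g_\lambda(f)$ of $A$ at $\lambda$ from $f \in \ker B_\lambda$. The matching of \emph{algebraic} (not merely geometric) multiplicities requires combining this explicit construction with analytic perturbation theory in $\lambda$, using that $\lambda \mapsto B_\lambda$ depends analytically on $\lambda$.

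Finally, for any $k$-dimensional subspace $V \subset F_+$ one verifies the key identity
\[
    \sup_{x \in (V \oplus F_-) \setminus \{0\}} \frac{\pscal{x, (A-\lambda)x}_\cH}{\|x\|^2_\cH} \geq 0 \iff \sup_{f \in V \setminus \{0\}} \frac{b_\lambda(f)}{\|f\|^2_\cH} \geq 0,
\]
which translates the constrained minmax for $A$ into the Courant--Fischer problem for $B_\lambda$. Setting $\alpha_k(\lambda) := \inf_{V \subset F_+, \dim V = k} \sup_{f \in V \setminus \{0\}} b_\lambda(f)/\|f\|^2_\cH$, this yields $\gamma_k \leq \lambda \iff \alpha_k(\lambda) \leq 0$, and strict monotonicity plus continuity of $\alpha_k$ in $\lambda$ then pin down $\gamma_k$ as the unique value at which $\alpha_k$ crosses zero. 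By the correspondence above, this crossing value is either an eigenvalue of $A$ in $(\gamma_0, \gamma_\infty)$ or coincides with $\gamma_\infty$ (the $k$-th branch being reabsorbed into the essential spectrum). The main obstacles are the rigorous form-theoretic construction of $B_\lambda$ together with its domain and the careful tracking of algebraic multiplicities; the hypothesis $\Lambda_\pm D(A) \subset D(A)$ is precisely what avoids the more delicate variants of the theorem addressed in~\cite{EstLewSer-19, SchSolTok-19}, while the gap condition $\gamma_0 < \gamma_1$ anchors the induction on $k$ by ensuring that the first eigenvalue branch of $B_\lambda$ genuinely crosses zero inside the gap.
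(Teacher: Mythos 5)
This theorem is stated in the paper but not proved there: it is imported verbatim as {[}Theorem~1, DolEstSer-06{]}, and the section that follows simply applies it to $A = L_\mu$ with the explicit projectors $\Lambda_\pm$ of~\eqref{eq:def_H_pm}. There is therefore no ``paper's own proof'' to compare against, and your sketch should be measured against the original reference rather than against anything in this manuscript.

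With that caveat, your outline does capture the reduction strategy used in~\cite{GriSei-99,DolEstSer-06}: for $\lambda$ in the gap the form $(A-\lambda)$ is uniformly negative on $F_-$, so one can eliminate the $F_-$ component, pass to the Schur-complement form $b_\lambda$ on $F_+$, and translate the constrained min-max into the ordinary Courant--Fischer problem for the induced self-adjoint operator $B_\lambda$, using the strict Lipschitz monotonicity $\partial_\lambda b_\lambda(f) = -\|f + g_\lambda(f)\|^2 \leq -\|f\|^2$ to obtain a unique zero-crossing for each $\alpha_k$. One point of your write-up is off: the remark that matching \emph{algebraic} multiplicities requires analytic perturbation theory in $\lambda$ is a red herring here, since both $A$ and $B_\lambda$ are self-adjoint, so algebraic and geometric multiplicities coincide and no such argument is needed. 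Also be aware that the delicate step — which your sketch only flags — is the rigorous form-theoretic construction of $B_\lambda$ (identification of its domain, closability of $b_\lambda$) and the verification that the map $\lambda\mapsto\alpha_k(\lambda)$ really is continuous on the whole interval $(\gamma_0,\gamma_\infty)$, which is where the cited references invest most of their technical effort. As a high-level roadmap your proposal is consistent with the known proof; as written it is not yet a proof.
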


We will apply this theorem to $A=L_\mu$, so $D(A)=H^1(\R,\C^2)$.
If the projectors $\Lambda_{\pm}$ are chosen as the spectral projectors associated to $(- \infty, -m-\omega]$ and its complement, we obtain~$\lambda_0 = -m-\omega$, $\lambda_\infty= m-\omega$. All hypotheses are automatically satisfied, so the minmax formula gives exactly the eigenvalues in the gap in the essential spectrum. However, the characterization is not very useful since these projectors are not known explicitely. The strength of the above theorem is that it still gives useful information for well-chosen explicit projections, adapted to the non-relativistic limit.

While it would be possible to prove Theorem~\ref{negative_spectrum_of_Lmu_nonrelativistic_limit} by using the typical projectors on upper and lower spinor components, the projections $\Lambda_\pm$ allow for a more streamlined presentation. Indeed, careful estimates show that the contribution from $F_-$ to the eigenvalues is of order $\kappa^6$ only.

We apply the minmax principle of Theorem~\ref{minmax_with_gap} with $A= L_\mu$, $D(A)= H^1(\R, \C^2)$ and, with $\alpha =(2 m )^{-1}$, define the subspaces
\begin{equation}\label{eq:def_H_pm}
    \cH_+:=\left\{ \left(h,-\alpha h'\right)\transp \, \left| \, h\in H^1(\R) \right. \right\} \quad \textrm{ and } \quad \cH_-:=\left\{ \left(-\alpha h', h\right)\transp \, \left| \, h\in H^1(\R) \right. \right\}.
\end{equation}
The orthogonal projectors on these subspaces are given by the pseudo-differential operators
\[
    \Lambda_{+} = \cF^* \frac{1}{1+\alpha^2 \xi^2}\begin{pmatrix} 1 & -i\alpha \xi \\
    i\alpha \xi & \alpha^2 \xi^2\end{pmatrix} \cF \quad \textrm{ and } \quad  \Lambda_- = \sigma_1 \Lambda_+ \sigma_1\,,
\]
where $\cF$ denotes the Fourier transform and~$\xi$ the variable in Fourier space.
It is clear from this expression that $\Lambda_+ + \Lambda_- = \Id_{\cH}$ and that $F_{\pm} := \Lambda_{\pm}H^1(\R,\C^2) \subset H^1(\R,\C^2)$. 
These projections can be obtained by considering the spectral projectors on the positive and negative spectrum for the free Dirac operator $D_m$ and keeping the first terms in a formal expansion for small $\xi/m$.

As a final definition that will allow for a concise notation, we define for $h \in H^1(\R,\C)$,
\begin{equation}\label{Def_l+_l-}
    l_+(h) := \left(h,-\alpha h'\right)\transp \quad \textrm{ and } \quad l_-(h) := \left(-\alpha h', h \right)\transp.
\end{equation}
With this definition,
\[
    F_{\pm} = \left\{l_\pm (h) \, \left| \, h \in H^2(\R,\C) \right. \right\}.
\]

\begin{proof}[Proof of Theorem~\ref{negative_spectrum_of_Lmu_nonrelativistic_limit}]
    We compute the quantities appearing in the Rayleigh quotient.
    First of all, recalling that $\alpha=(2m)^{-1}$, we have
    \[
        \pscal{l_+(h), D_m l_+(h)} = \pscal{\begin{pmatrix}h\\-\alpha h'\end{pmatrix}, \begin{pmatrix}m h-\alpha h''\\ - h' + \alpha m h'\end{pmatrix}} = m \norm{l_+(h)}^2 + \alpha \norm{h'}^2,
    \]
    where we have used integration by parts to simplify the expressions. This identity motivates the definition of~$\Lambda_\pm$.
    Introducing the terms of $L_\mu$ involving $v$ and~$u$, we are left with
    \begin{align}
        &\pscal{l_+(h), L_\mu l_+(h)} \label{eq:expansion_+_quotient}\\
        &\begin{multlined}[t]
            =(m-\omega) \norm{l_+(h)}^2 + \alpha \norm{h'}^2   -\int{(v^2 -u^2)^p}\left(\abs{h}^2 - \alpha^2\abs{h'}^2\right) \nonumber\\
            - p \mu \left( \pscal{h, (v^2 -u^2)^{p-1} v^2 h} + 2 \Re \pscal{\alpha h' , (v^2 -u^2)^{p-1} uv h} - \alpha^2 \pscal{h', (v^2 -u^2)^{p-1} u^2 h'} \right)
      \end{multlined} \nonumber\\
      &
        \begin{multlined}[t]
            =(m-\omega) \norm{l_+(h)}^2 + \alpha \norm{h'}^2   -\int(v^2 -u^2)^p \left\{(1+p \mu )\abs{h}^2 - \alpha^2\abs{h'}^2\right\} \\
            - p \mu \left( \pscal{h, (v^2 -u^2)^{p-1} u^2 h} + 2 \Re \pscal{\alpha h' , (v^2 -u^2)^{p-1} uv h} + \alpha^2 \pscal{h', (v^2 -u^2)^{p-1} u^2 h'} \right)
      \end{multlined} \nonumber
    \end{align}
    Here, we rearranged the terms because $u \ll v$ in the non-relativistic limit, and the terms on the second line will contribute only to the error term.
    
    Similarly, by using $l_-(g) = \sigma_1 l_+(g)$, we find
    \begin{multline*}
        \pscal{l_-(g), L_\mu l_-(g)} = -(m + \omega)\norm{l_-(g)}^2 - \alpha \norm{g'}^2 \\
        + \int (v^2 -u^2)^p \left\{\norm{g}^2 - \alpha^2\norm{g'}^2\right\} -\mu \pscal{l_-(g), Ql_-(g)},
    \end{multline*}
    where we will not need a detailed expansion of the last term.
       
    Finally, for the cross terms, we compute
    \[
        \pscal{l_-(g), D_m  l_+(h)} = \pscal{\begin{pmatrix}-\alpha g' \\ g \end{pmatrix}, \begin{pmatrix}m h-\alpha h''\\ - h' + h'/2 \end{pmatrix}} = \alpha^2 \pscal{g', h''},
    \]
    and obtain
    \[
        \abs{  \pscal{l_-(g), L_\mu  l_+(h)}  - \alpha^2 \pscal{g', h''}} \leq \left(\norm{(v^2 - u^2)^p}_\infty + \mu \normt{Q} \right)\norm{l_+(h)}\norm{l_-(g)}.
    \]

    We first estimate $\gamma_0$. Since $Q$ is nonnegative,
    \[
        \gamma_0 = \sup_{g \in H^2} \frac{\pscal{l_-(g), L_\mu  l_-(g)} }{\norm{l_-(g)}^2 } \leq \sup_{g \in H^2} \frac{\pscal{l_-(g), L_0  l_-(g)} }{\norm{l_-(g)}^2 } \leq - (m+\omega ) + \norm{\left(v^2 -u^2\right)^p}_\infty.
    \]
    By taking a sequence of trial functions $g_R = R^{-1/2}g(\frac{\cdot + 2 R}{ R})$ for a smooth and normalized~$g$, it can be shown that
    \[
        \gamma_0 \geq -(m+\omega) - \alpha \inf_{g \in H^2} \frac{\norm{g'}^2}{\norm{l_- g}^2} = -(m+\omega)\,,
    \]
    Combining with the bound in~\eqref{infinity_norm_v2minusu2tothep}, we have
    \begin{equation}\label{NonRelat_Bounds_gamma0}
       -(m+\omega) \leq \gamma_0 \leq   -(m+\omega) + (p+1)(m-\omega)\,.
    \end{equation}
    Therefore, from its definition, we have $\gamma_\infty = m-\omega$ for $\omega > \frac{p-1}{p+1} m$\,.

\medskip
\textbf{Lower bound.}
    Next, we obtain a lower bound for the minmax levels
    \[
        \gamma_k = \inf_{\substack{E \subset H^2 \\ \dim E = k}}\sup_{\substack{(g,h) \in H^2\times E\\ (g,h)\neq(0,0)}} \!\! \frac{\pscal{l_+(h) + l_-(g), L_\mu (l_+(h) + l_-(g))}}{\norm{l_+(h)}^2 + \norm{l_-(g)}^2} \geq \inf_{\substack{E \subset H^2 \\ \dim E = k}} \sup_{\substack{h \in E\\h\neq0}} \frac{\pscal{l_+(h), L_\mu l_+(h)}}{\norm{l_+(h)}^2 } \,.
    \]
    From~\eqref{eq:expansion_+_quotient}, we find
    \begin{multline*}
        \pscal{l_+(h), L_\mu l_+(h)} \geq (m-\omega) \norm{l_+(h)}^2 + \alpha \norm{h'}^2 - (1+p\mu)\int{(v^2 -u^2)^p}{\abs{h}^2 } \\
        - p \mu \left(\norm{(v^2 -u^2)^{p-1}u^2}_\infty + \norm{(v^2 -u^2)^{p-1} u v}_\infty\right) \norm{l_+(h)}^2.
    \end{multline*}

    We finally use the definition of~$\mathfrak{s}$ in~\eqref{Def_s} to obtain
    \[
        \pscal{l_+(h), L_\mu l_+(h)} \geq (m-\omega) \norm{l_+(h)}^2 + \frac{1}{2m} \norm{h'}^2  - \frac{p^2\kappa^2}{m}  \pscal{h, \frac{\mathfrak{s}(\mathfrak{s}+1)}{2 \cosh( p\kappa \cdot )} h} - R_1 \norm{l_+(h)}^2,
    \]
    where we used $\alpha=1/(2m)$ and have defined
    \begin{multline*}
        R_1 := p\mu\left(\norm{(\tilde{v}^2-\tilde{u}^2)^{p-1}\tilde{u}\tilde{v}}_\infty + \norm{(\tilde{v}^2-\tilde{u}^2)^{p-1}\tilde{u}^2}_\infty\right) \\
        + (1+p\mu)\norm{(\tilde{v}^2-\tilde{u}^2)^p - \frac{(p+1)\kappa^2}{2m\cosh^2}}_\infty = O\!\left(\kappa^3\right)
    \end{multline*}
    in view of the bounds~\eqref{Nonrelativistic_expansion_kappa3_term}--\eqref{Nonrelativistic_expansion_reminder_type_1} in Appendix~\ref{Appendix_nonrelativistic_expansions}.
    
    In order to obtain a precise bound here, we write everything in terms of the rescaled variable $\tilde x := p \kappa x$ and obtain
    \begin{equation} \label{eq:lower_lambda_k-1} 
        \gamma_k  \geq m -\omega + \frac{p^2 \kappa^2}{m} \inf_{\substack{E \subset H^2 \\ \dim E = k}} \sup_{\substack{h \in E\\h\neq0}} \frac{\frac{1}{2} \norm{h'}^2 - \frac{\mathfrak{s}(\mathfrak{s}+1)}{2 } \pscal{h, \cosh^{-2} h} }{\norm{h}^2 + \alpha^2 p^2 \kappa^2 \norm{h'}^2 } - R_1\,.
    \end{equation}

    At this point, we recognize the usual\footnote{Up to the additional term in the denominator.} minmax formula for the eigenvalues of the Schr{\"o}\-dinger operator with the P\"oschl--Teller potential
    \[
        V_{\textrm{PT}}(\mathfrak{s},x) = -\frac{\mathfrak{s}(\mathfrak{s}+1)}{2\cosh^2x}\,.
    \]
    It is known that the spectrum of~$-\frac12 f'' + V_{\textrm{PT}}(\mathfrak{s},x) f$ is the union of its essential spectrum $[0,+\infty)$ and exactly $\lceil \mathfrak{s} \rceil$ eigenvalues, which are
    \begin{equation}\label{NRJ_PoschlTeller}
        E_j := -\frac{(\mathfrak{s}+1-j)^2}2, \qquad j=1,\dotsc,\lceil \mathfrak{s} \rceil\,,
    \end{equation}
    with the Legendre functions\footnote{Also called \emph{Ferrers functions}, since $-1<\tanh<1$.} $\left\{P_{\mathfrak{s}}^{\mathfrak{s}+1-j}(\tanh)\right\}_{j=1,\dotsc,\lceil \mathfrak{s} \rceil}$ as corresponding eigenfunctions.
    
    We define, for any integer $k\in \llbracket 1,\dotsc,\lceil \mathfrak{s} \rceil \rrbracket$, the $k$-dimensional space
    \begin{equation}\label{Def_V_k}
        V_k := \vect\left\{P_{\mathfrak{s}}^{\mathfrak{s}+1-j}(\tanh)\right\}_{1\leq j \leq k}.
    \end{equation}
    That is, the eigenspace corresponding to the first $k$ eigenvalues of this operator.
    
    Returning to~\eqref{eq:lower_lambda_k-1}, we find that for $k \leq \lceil \mathfrak{s} \rceil$, the infimum is negative (use $V_k$ as trial subspace), and therefore we may bound
    \begin{multline*}
        0 > \inf_{\substack{E \subset H^2 \\ \dim E = k}} \sup_{\substack{h \in E\\h\neq0}} \frac{\pscal{h,-\frac{1}{2}h'' - \frac{\mathfrak{s}(\mathfrak{s}+1)}{2\cosh^2} h}}{\norm{h}_2^2 + \frac{p^2\kappa^2}{4m^2}\norm{h'}_2^2} \geq \inf_{\substack{E \subset H^2 \\ \dim E = k}} \sup_{\substack{h \in E\\h\neq0}} \frac{ \pscal{h, -\frac12 h'' - \frac{\mathfrak{s}(\mathfrak{s}+1)}{2\cosh^2}h}}{\norm{h}^2} \\
        = \sup\limits_{h\in V_k \setminus\{0\}} \frac{ \pscal{h, -\frac12 h'' - \frac{\mathfrak{s}(\mathfrak{s}+1)}{2\cosh^2}h}}{\norm{h}^2} = E_k\,.
    \end{multline*}
    Inserting~\eqref{NRJ_PoschlTeller}, we obtain
    \begin{equation}\label{eq:lower_gamma_k}
        \gamma_k  \geq  m -\omega - \frac{p^2 \kappa^2}{2 m}(\mathfrak{s}+1-k)^2 - R_1\,.
    \end{equation}
    
    In the case $k = 1$, we obtain 
    \begin{equation} \label{eq:lower_gamma_1}
        \gamma_1 \geq m - \omega -\frac{p^2 \mathfrak{s}^2 \kappa^2}{2 m} - R_1
    \end{equation}
    and therefore, for sufficiently large values of $\omega$, the gap condition $\gamma_1 > \gamma_0$ is satisfied in view of~\eqref{NonRelat_Bounds_gamma0}.
    We conclude from Theorem~\ref{minmax_with_gap} that the minmax levels $\gamma_k$ are the eigenvalues of $L_\mu$ in $(\gamma_0, m -\omega)$.
    
    We now show that indeed, $\gamma_k = \lambda_k$, i.e., the minmax levels $\gamma_k$ are \emph{all} the eigenvalues of~$L_\mu$ in~$(-2\omega,m-\omega)$. First, for sufficiently large $\omega$, the lower bound~\eqref{eq:lower_gamma_1} gives $\gamma_1 > -2\omega$, so we conclude that the eigenvalue $-2\omega$ lies below~$\gamma_0$.
    Second, by Lemmas~\ref{A_priori_bounds_on_first_ev_of_Lmu} and~\ref{Lemma_Q} (for $\omega > m/2$), there are no eigenvalues in
    \[
        (-2\omega, -\mu \normt{Q} ) =(-2\omega, -\mu p(p+1)(m-\omega))\,.
    \]
    Since $\gamma_0 < -\mu p(p+1)(m-\omega)$ for sufficiently large $\omega$, $\{\gamma_k\}_{k\geq1}$ are \emph{all} the eigenvalues of~$L_\mu$ in~$(-2\omega,m-\omega)$.
    
   Returning to the lower bound~\eqref{eq:lower_gamma_k}, in combination with the expansion
    \begin{equation}\label{Nonrelativistic_expansion_m_minus_omega}
        m-\omega = \frac{\kappa^2}{2m} + O\!\left(\kappa^4\right) ,
    \end{equation}
    we obtain
    the required lower bound for $k \leq \lceil \mathfrak{s} \rceil $.
    
    For $k > \lceil \mathfrak{s} \rceil $, any subspace of dimension~$k$ contains a function in the positive eigenspace of the P\"{o}schl--Teller Schr{\"o}\-dinger operator, hence
    \[
        \gamma_{\lceil \mathfrak{s} \rceil + 1}  \geq m -\omega - R_1\,. 
    \]

\medskip
\textbf{Upper bound for $k \leq \lceil \mathfrak{s} \rceil $.}
    For an upper bound, we can not get rid of the supremum over $g$.
    However, we can choose the subspace $E_k$ in the infimum in order to match the upper bound.
    So, we restrict to $F_k := \{h(p\kappa x) | h \in V_k\}$, where $V_k$, defined in~$\eqref{Def_V_k}$ is the span of the first $k$ eigenfunctions of the P\"{o}schl--Teller Schr{\"o}\-dinger operator. 
    We will not use their explicit expression (yet), but
    use bounds on derivatives of $h \in {F_k}$
    \[
        \norm{h'} \leq p\kappa \sqrt{\mathfrak{s}(\mathfrak{s}+1)} \norm{h} \quad \textrm{ and } \quad \norm{h''} \leq p^2\kappa^2 \mathfrak{s}(\mathfrak{s}+1) \norm{h},
    \]
    see Appendix~\ref{Appendix_Bound_derivatives_Legendre} for details.
    For $h \in V_k$, we bound
    \begin{align*}
        \abs{\pscal{l_-(g), L_\mu  l_+(h)} } &= \abs{   \alpha^2 \pscal{g', h''}+ \pscal{l_-(g), (v^2 - u^2)^p \sigma_3 l_+(g)} -\mu  \pscal{l_-(g), Q  l_+(g)}  } \\
        &\leq p^2\kappa^2 \mathfrak{s} (\mathfrak{s}+1) \frac{3m+\omega}{2 m(m+\omega)} \norm{l_-(g)}\norm{l_+(h)}:= R_2 \norm{l_-(g)}\norm{l_+(h)},
    \end{align*}
    where we have used~\eqref{infinity_norm_v2minusu2tothep} and Lemma~\ref{Lemma_Q} (for $\omega > m/2$), so $R_2 = O(\kappa^2)$.
    Therefore,
    {\allowdisplaybreaks
    \begin{align*}
        \gamma_k &= \inf_{\substack{E \subset H^2 \\ \dim E = k}} \sup_{(g,h) \in \left(H^2\times E\right)\setminus\{0\}} \frac{\pscal{l_+(h) + l_-(g), L_\mu (l_+(h) + l_-(g))}}{\norm{l_+(h)}^2 + \norm{l_-(g)}^2} \\
        &\leq \sup_{(g,h) \in \left(H^2\times F_k\right)\setminus\{0\}} \frac{\pscal{l_+(h), L_\mu l_+(h)} + \pscal{l_-(g), L_\mu l_-(g))} + 2\Re   \pscal{l_-(g), L_\mu  l_+(h)} }{\norm{l_+(h)}^2 + \norm{l_-(g)}^2} \\
        & \leq \sup_{(g,h) \in \left(H^2\times F_k\right)\setminus\{0\}} \frac{\pscal{l_+(h), L_\mu l_+(h)} + \gamma_0 \norm{l_-(g)}^2 + 2 R_2 \norm{l_-(g)}\norm{l_+(h)} }{\norm{l_+(h)}^2 + \norm{l_-(g)}^2} \\
        &\leq \sup_{(g,h) \in \left(H^2\times F_k\right)\setminus\{0\}} \frac{\pscal{l_+(h), L_\mu l_+(h)} + \kappa^3 \norm{l_+(h)}^2 + (\gamma_0 + \kappa^{-3}R_2^2 )\norm{l_-(g)}^2  }{\norm{l_+(h)}^2 + \norm{l_-(g)}^2}\,,
    \end{align*}
    }%
    where we used the definition of $\gamma_0$.
    For the maximization over $g$, we note that, with $q := \norm{l_-(g)}^2 / \norm{l_+(h)}^2$, the quotient is of the form 
    \[
        \frac{A + B q}{1 + q}\,.
    \]
    
    If $B < A$, the maximum is attained at $q=0$.
    Now, we note that 
    \[
        A:= \frac{\pscal{l_+(h), L_\mu l_+(h)}}{\norm{l_+(h)}^2} +\kappa^3 \geq  m - \omega -\frac{p^2 \mathfrak{s}^2 \kappa^2}{2 m} - R_1  
    \]
    by the estimates leading to~\eqref{eq:lower_gamma_1}.
    Moreover, $\kappa^{-3}R_2^2 \sim \kappa$ for sufficiently small $\kappa$, hence
    \[
        B:= \gamma_0 + \kappa^{-3}R_2^2  <  A \,.
    \]
    
    We finally insert~\eqref{eq:expansion_+_quotient} again, neglect the nonpositive terms, bound (out of the supremum) the terms of order $\kappa^3$ or higher, insert the P\"{o}schl--Teller potential
    \[
        (1 + \mu p)\pscal{h,\frac{(p+1)\kappa^2}{2m\cosh^2}h} = \frac{p^2\kappa^2}{m}\pscal{h,\frac{\mathfrak{s}(\mathfrak{s}+1)}{2\cosh^2}h}
    \]
    in the numerator, neglect ---see \eqref{Nonrelativistic_expansion_reminder_type_1} in Appendix~\ref{Appendix_nonrelativistic_expansions}--- the negative term
    \[
        -\left((\tilde{v}^2-\tilde{u}^2)^p - \frac{(p+1)\kappa^2}{2m\cosh^2}\right)<0\,,
    \]
    then finally rescale variables in the remaining supremum, and  are left with
    {\allowdisplaybreaks
    \begin{align*}
          \gamma_k 
        &\leq \sup_{ h \in F_k} \frac{\pscal{l_+(h), L_\mu l_+(h)} + \kappa^3 \norm{l_+(h)}^2 }{\norm{l_+(h)}^2} \\
        &\leq m -\omega  + \sup_{ h \in F_k} \frac{\alpha \norm{h'}^2 - (1 + \mu p)\int (v^2- u^2)^p \abs{h}^2    }{\left(\norm{h}^2 + \alpha^2 \norm{h'}^2\right)} + R_3 \\
         &\leq m -\omega  + \frac{p^2\kappa^2}{m}\sup_{ h \in V_k} \frac{ \pscal{h, -\frac12 h'' - \frac{\mathfrak{s}(\mathfrak{s}+1)}{2\cosh^2}h}}{\norm{h}^2 + \alpha^2 {p^2 \kappa^2} \norm{h'}^2} + R_3\,,
    \end{align*}
    }%
    where we have defined 
    \[
        R_3 :=\kappa^3 +  \norm{\left(v^2 -u^2\right)^p}_\infty \kappa^2 \mathfrak{s}(\mathfrak{s}+1) + \mu p \norm{(v^2- u^2)^{p-1} u v}_\infty =  O\!\left(\kappa^3\right).
    \]
    
    Since the numerator is negative for $h \in {V_k}$, we also use
    \[
        \frac{\norm{h}^2 + \alpha^2 p^2 \kappa^2 \norm{h'}^2}{ \norm{h}^2} \leq 1 + \alpha^2 p^2 \kappa^2 \mathfrak{s}(\mathfrak{s}+1 )
    \]
    to obtain finally
    \[
          \gamma_k \leq m -\omega - \frac{p^2\kappa^2}{2m}  \frac{(\mathfrak{s}+1-k)^2}{1 + \alpha^2 p^2 \kappa^2 \mathfrak{s}(\mathfrak{s}+1) } + R_3, \qquad \forall\, k\in \{ 1,\dotsc,\lceil \mathfrak{s} \rceil \}\,. 
    \]
    This concludes the proof of Theorem~\ref{negative_spectrum_of_Lmu_nonrelativistic_limit}.
\end{proof}

\appendix
    \phantomsection 
    \addcontentsline{toc}{section}{Appendices}
    \addtocontents{toc}{\protect\setcounter{tocdepth}{0}} 
        
\section{ODE arguments}\label{Appendix_ODE}
In this appendix, we give the proofs of Propositions~\ref{prop:basic_groundstate_properties} (basic properties of the groundstates) and~\ref{prop_differentiability} because some intermediate steps are useful in both proofs. However, we stress that the proof of Proposition~\ref{prop_differentiability} requires spectral properties of $L_2$ that have been established in Proposition~\ref{Prop:eigenvalues_symmetry} and Lemma~\ref{L_mu_eigenvalues_simple_and_analytic_in_mu}.
\begin{proof}[Proof of Proposition~\ref{prop:basic_groundstate_properties}]
Equation~\eqref{eq:phi_0} satisfied by $\phi_0(\omega) = (v_\omega, u_\omega)\transp$ can be written as
\begin{equation} \label{eq:groundstate-system}
	\begin{cases}
		v_\omega'= - (M_\omega+\omega) u_\omega \\
		u_\omega'= -(M_\omega-\omega) v_\omega\,,
	\end{cases}
\end{equation}
with $M_\omega = m - f\!\left(v_\omega^2 - u_\omega^2\right)$ as defined in~\eqref{Def_M}.
Therefore, we assume $v_\omega(0)\geq0$ since $-\phi_0(\omega)$ is a solution if and only if $\phi_0(\omega)$ is a solution.
We define $F(s) := \int_0^s f(t) \di t $ on~$(0,+\infty)$ and $\tilde{F}\in\mathcal{C}^0([0,+\infty))\cap \mathcal{C}^1((0,+\infty))$ by $\tilde{F}(0)=0$ and $\tilde{F}(s) := F(s)/s$ on~$(0,+\infty)$. The continuity at the origin derives from $f'>0$ on~$(0,+\infty)$ by Assumption~\ref{Assumption_general_nonlinearity}, as it gives
\begin{equation}\label{nonlinearity_compared_to_its_primitive}
	0<F(s) = \int_0^s f(t) \di t < s f(s) \quad \textrm{ on } (0,+\infty)\,,
\end{equation}
where we used the assumption $f(0)=0$ for the positivity of $F$.

\medskip
\noindent \emph{Existence, positivity and regularity.}
The existence is established in~\cite[Lemma 3.2]{BerCom-12}. In its notation, we have $g(s) = m - f(s)$ and $G(s) = m s - F(s)$, and have to check that
\[
	\exists\, s^*_\omega>0\,, \, \forall\, s \in (0, s^*_\omega)\,, \, \frac{G(s)}{s} > \omega = \frac{G(s^*_\omega)}{s^*_\omega} \neq g(s^*_\omega)\,.
\]

First, $G(s)/s>g(s)$ on~$(0,+\infty)$ by~\eqref{nonlinearity_compared_to_its_primitive} ---hence checking already, for any $s^*_\omega>0$, the non-equality---, which implies $\lim_{s\searrow0} G(s)/s \geq g(0) = m > \omega$.
Second, denoting $f^{-1}$ the inverse of $f:[0,+\infty)\to[0,\lim_{+\infty} f)$, $\lim_\infty f \geq m$ from Assumption~\ref{Assumption_general_nonlinearity} gives
\[
	\forall\, \epsilon>0\,, \, \forall\, s > f^{-1}(m-\epsilon)\,, \quad \tilde{F}(s) \geq \frac{1}{s} \int_{f^{-1}(m-\epsilon)}^s f(t) \di t > \frac{s-f^{-1}(m-\epsilon)}{s} (m-\epsilon)\,.
\]
Thus $\lim_{+\infty} \tilde{F} \geq m$, and $\lim_{s\to+\infty} G(s)/s \leq 0 < \omega$.
Third, $s^2 \tilde{F}'(s)= s f(s) - F(s)>0$ on~$(0,+\infty)$ by~\eqref{nonlinearity_compared_to_its_primitive}. Summarizing, $G(s)/s = m - \tilde{F}(s)$ is strictly decreasing on~$(0,+\infty)$ with $\omega$ is in its image set, hence there is a unique $s^*_\omega>0$ satisfying the properties. The existence of a non-zero solution $(v_\omega, u_\omega) \in H^1(\R,\R^2)$ to~\eqref{eq:phi_0} with $v_\omega$ is even and $u_\omega$ odd is therefore proved by~\cite[Lemma 3.2]{BerCom-12}. Moreover, since $u_\omega(0)=0$ and $v_\omega(0)\geq0$, we can assume that $v_\omega(0)>0$ as, otherwise, $\phi_0(\omega)$ is the trivial solution by the Cauchy theory.

To prove~\emph{\ref{it:positivity}}, notice that~\eqref{eq:groundstate-system} is the Hamiltonian system $h_\omega(u_\omega,v_\omega)=0$ associated to 
\[
    h_\omega(u,v) := \frac{1}{2} \left( \omega \left(v^2 + u^2\right) - m \left(v^2 - u^2\right) + F\circ \left(v^2 - u^2\right) \right).
\]
On one hand, since $f>\tilde{F}$ on~$(0,+\infty)$ by~\eqref{nonlinearity_compared_to_its_primitive} and $v_\omega(0)>0$,~\eqref{eq:groundstate-system} implies
\[
	u_\omega'(0) = (\omega - m + f\circ v_\omega^2(0) ) v_\omega(0) > \left(\omega - m + \tilde{F}\circ v_\omega^2(0)\right)v_\omega(0) = (m - \omega)v_\omega(0) > 0\,.
\]
On another hand, it also gives $u_\omega(x) \neq 0$ if $x \neq 0$ since, otherwise, there would be $x >0$ (by oddity of $u$) such that $(v_\omega(x), u_\omega(x))= (v_\omega(0), u_\omega(0))$ and the solution would be periodic (in $x$), contradicting $\phi_0(\omega) \in L^2(\R)$. Hence, we conclude that $u_\omega>0$ on~$(0,+\infty)$, since $u_\omega$ is continuous, non-zero on~$(0,+\infty)$ with $u_\omega'(0)>0$ and $u_\omega(0)=0$.

Moreover, $d_\omega:=v_\omega^2 - u_\omega^2>0$ on~$\R$. Indeed, $d_\omega(0)=v_\omega(0)^2>0$ and, if $d_\omega(x)=0$ for some $x\neq0$, then $0=h_\omega(u_\omega,v_\omega)(x) = \frac{\omega}{2}  (v_\omega(x)^2 + u_\omega(x)^2)$ hence $v_\omega(x)= u_\omega(x) = 0$ contradicting $u_\omega\neq0$ on~$\R\setminus\{0\}$ just obtained. Finally, $v_\omega > u_\omega>0$ on~$(0,+\infty)$ because $|v_\omega| > u_\omega>0$ on~$(0,+\infty)$, since~$d_\omega>0$ on~$\R$, with $v_\omega$ continuous and $v_\omega(0)>0$.

Since $(v_\omega, u_\omega)\in H^1(\R)\subset\mathcal{C}^0(\R)$ and $f\in \mathcal{C}^0(\R)$ by Assumption~\ref{Assumption_general_nonlinearity},~\eqref{eq:groundstate-system} gives $(v_\omega, u_\omega)\in \mathcal{C}^1(\R)$. Moreover, $f\circ d_\omega\in \mathcal{C}^1(\R)$ since $d_\omega>0$ on $\R$ and $f\in \mathcal{C}^1(\R\setminus\{0\})$ by Assumption~\ref{Assumption_general_nonlinearity}, hence~\eqref{eq:groundstate-system} actually gives $(v_\omega, u_\omega)\in \mathcal{C}^2(\R)$.

\medskip
\noindent \emph{Uniqueness and continuity in $\omega$ of the initial condition.}
We have
\[
	h_\omega(v_\omega(0),0)=0 \quad \Leftrightarrow \quad \tilde{F}(v_\omega^2(0)) = m - \omega\,.
\]
Moreover, we already prove that $\tilde{F}\in\mathcal{C}^0([0,+\infty))\cap \mathcal{C}^1((0,+\infty))$ is strictly increasing on~$[0,+\infty)$ with $\lim_{+\infty} \tilde{F} \geq m$. Thus, it is bijective from~$[0,+\infty)$ to~$[0,\lim_{+\infty} \tilde{F})$ and has a continuous inverse with $v_\omega(0)^2 = \tilde{F}^{-1}(m-\omega)>0$. By the Cauchy theory, this implies the uniqueness of $\phi_0(\omega)$.

\medskip
\noindent \emph{Pointwise decay of $\phi_0(\omega,\cdot)$ as $\omega \to m$.} 
We start by noticing that $h_\omega(u_\omega,v_\omega) = 0$ gives
\begin{equation} \label{eq:comparable-components-apendix}
	0< \omega (v_\omega^2 + u_\omega^2) < m  (v_\omega^2 - u_\omega^2) = m d_\omega \quad \textrm{ on } \R\,,
\end{equation}
due to $F >0$ on~$(0,+\infty)$, $v_\omega^2(0)> 0 = u_\omega^2(0)$, and the parities of $v_\omega$ and $u_\omega$.
Moreover, $\norm{d_\omega}_{L^\infty} = d_\omega(0)$, because $(d_\omega)' = - 4 \omega u_\omega v_\omega$ by~\eqref{eq:groundstate-system} and $v_\omega >u_\omega >0$ on~$(0,+\infty)$, hence
\[
	\frac{\omega}{m} \norm{\phi_0(\omega, \cdot)}^2_{L^\infty} = \frac{\omega}{m} \norm{v_\omega}^2_{L^\infty} \leq \frac{\omega}{m} \norm{v_\omega^2+u_\omega^2}_{L^\infty} \leq \norm{d_\omega}_{L^\infty} = d_\omega(0) = \tilde{F}^{-1}(m-\omega)\,,
\]
for $\omega\in(0,m)$. Since $\tilde{F}^{-1}(0) = 0$, this establishes half of~\emph{\ref{it:decay_in_Nrel_limit}}.

\medskip
\noindent \emph{Uniform exponential decay of $\phi_0(\omega,\cdot)$.} 
In order to prove differentiability, it is important to obtain a pointwise upper bound uniform in $\omega$, for $\omega$ bounded away from $m$ and $0$. Thus, for fixed $\epsilon > 0$,
we assume that $m^2-\omega^2 \geq \epsilon^2 > 0$. In view of~\eqref{eq:comparable-components-apendix}, it is sufficient to bound $d_\omega$, and by symmetry we assume that $x \geq 0$.
We have
\[
  (d_\omega)' = - 4 \omega u_\omega v_\omega = -2\omega \left((v_\omega^2 + u_\omega^2)^2 - d_\omega^2  \right)^{1/2} = - 2\left(\left( m d_\omega - F(d_\omega) \right)^ 2  - \omega^2 d_\omega^2  \right)^{1/2},
\]
using $h(v_\omega, u_\omega) = 0$ for the last equality.
Since $\lim_{s\searrow0} (m - \tilde{F}(s)) \geq m > \omega$ (see the proof of existence and remember that $m - \tilde{F}(s)=G(s)/s$), we define $s_\epsilon$ such that
\[
	\left(  \left( m s - F(s) \right)^ 2  - \omega^2 s^2  \right)^{1/2} = s \left(  \left( m  - \tilde{F}(s) \right)^ 2  - \omega^2   \right)^{1/2} \geq \frac{\epsilon}{2} s\,, \quad \forall\, s \in [0,s_\epsilon] \,.
\]
Since $d_\omega$ is strictly decreasing on $[0,+\infty)$ hence bijective on it, we define $x^*(\omega,\epsilon)$ by $x^*(\omega,\epsilon) = 0$ if $d_\omega(0) \leq s_\epsilon$ and $x^*(\omega,\epsilon) = d_\omega^{-1}(s_\epsilon)$ otherwise, and we have
\[
	(d_\omega)' (x)\leq  - \epsilon d_\omega(x)\,, \quad \forall\, x \geq x^*(\omega,\epsilon)\,.
\]
Integrating it, yields~\emph{\ref{it:exponential_decay}} since it gives
\[
    d_\omega(x) \leq d_\omega(x^*(\omega,\epsilon)) \exp{\left( - \epsilon (x - x^*(\omega,\epsilon)) \right)} \leq s_\epsilon \exp{\left( - \epsilon (x - x^*(\omega,\epsilon)) \right)}\,.
\]

 \medskip
\noindent \emph{Decays of $Q$.}
By the definition of $Q$, the positivity of $d_\omega$, the one of $f'$ on $(0,+\infty)$ from Assumption~\ref{Assumption_general_nonlinearity}, and~\eqref{eq:comparable-components-apendix}, we bound
\[
    \norm{Q_\omega(x)}_{\C^2 \mapsto \C^2} = (v_\omega^2(x) + u_\omega^2(x)) f'(d_\omega(x)) \leq  \frac{m}{\omega} d_\omega(x) f'(d_\omega(x))\,.
\]
It yields~\emph{\ref{it:decay_of_Q}}, the exponential decay in $x$, due to the exponential decay of $v_\omega$ and $u_\omega$, and to $\lim_{s\to0^+} s f'(s) = 0$ from Assumption~\ref{Assumption_general_nonlinearity}. 
It also gives the statement on $Q$ in~\emph{\ref{it:decay_in_Nrel_limit}}:
\[
   \lim_{\omega \to m} \sup_{x \in \R}\norm{Q_\omega(x)}_{\C^2 \mapsto \C^2} \leq  \lim_{\omega \to m} \frac{m}{\omega} \sup_{s \in (0, v_\omega^2(0))} s f'(s) = 0\,. \qedhere
\]
\end{proof}
\begin{proof}[Proof of Proposition~\ref{prop_differentiability}]
We freely use the notation introduced in the previous proof. We remind the reader that the following proof of differentiability of $\omega\mapsto\phi_0(\omega)$ is needed because we do not assume continuity of $f'$ at $0$.

\medskip
\noindent \emph{Uniform continuity in $\omega$.} We show that $\omega \mapsto \phi_0(\omega)$ is continuous with values in $\mathcal{C}^0(\R)$.  
    We rewrite the nonlinear equation~\eqref{eq:phi_0} as an initial value problem on~$[0, +\infty)$ of the form
    \[
        \partial_x \phi_0(\omega) = B(\omega, \phi_0(\omega)) \phi_0(\omega), \quad \phi_0(\omega, 0) = \left(\tilde{F}^{-1}(m-\omega) , 0 \right)\transp,
    \]
    with
    \[
        B(\omega, \phi) = i \omega \sigma_2  - (m - f(\pscal{\phi,\sigma_3 \phi }_{\C^2}))\sigma_1 \,.
    \]
    Here, $B(\omega, \phi)$ is of class $\mathcal{C}^1$ in $\omega$ and $\phi$ (as long as $\phi_1^2 - \phi_2^2$ is bounded away from zero, which is the case on any bounded interval), and thus $\omega \mapsto \phi_0(\omega,x)$ is continuous in $\omega$, uniformly for $x$ in bounded intervals.
    
    We now use the exponential decay from the previous proof. The pointwise continuity implies that $x^*(\omega, \epsilon)$ is bounded for $\omega \in [\epsilon, \sqrt{m^2-\epsilon^2}]$, since otherwise there would be $\omega^*$ such that $d_{\omega^*}(x) > s_\epsilon$ for all $x \geq 0$.  Thus, there exist $r_\epsilon$ and $A_\epsilon$ such that 
\[
	\norm{ \phi_0(\omega, x)}_{\C^2} \leq A_\epsilon \exp(-\epsilon |x|)\, \quad \text{ for all } \quad  \omega \in [\epsilon, \sqrt{m^2-\epsilon^2}] \textrm{ and } x \in \R.
\]
The restriction that $\omega \geq \epsilon$ comes from~\eqref{eq:comparable-components-apendix}.
   Combined with the uniform continuity on bounded intervals, this shows that $\omega \mapsto \phi_0(\omega)$ is continuous with values in $\mathcal{C}^0(\R)$.

\medskip
\noindent \emph{Differentiability.}   
We fix $\epsilon > 0$ and $\omega \in [\epsilon, \sqrt{m^2-\epsilon^2}]$. Assuming $\alpha \in [\epsilon, \sqrt{m^2-\epsilon^2}]$, substracting the groundstate equations for $\omega$ and $\alpha$ gives
    \begin{multline*}
       0 = \left( D_m - \frac{\omega + \alpha}{2}\Id \right)(\phi_{0}(\omega) - \phi_{0}(\alpha) ) -(\omega-\alpha)\frac{\phi_{0}(\omega) + \phi_{0}(\alpha) }{2} \\
       - (f\circ d_\omega) \sigma_3 \phi_{0}(\omega) + (f\circ d_\alpha) \sigma_3 \phi_{0}(\alpha)\,.
    \end{multline*}   
        For fixed $x \in \R$, we apply the mean value theorem to $f$ in order to find $\tilde d(x)$, between $d_\alpha(x)$ and $d_\omega(x)$, such that
    \begin{multline*}
    (f\circ d_\omega) \sigma_3 \phi_{0}(\omega) - (f\circ d_\alpha) \sigma_3 \phi_{0}(\alpha)\\
        \begin{aligned}
            &= f'(\tilde d)(d_\omega - d_\alpha )\sigma_3 \frac{\phi_{0}(\omega) + \phi_{0}(\alpha) }{2} + \frac{f\circ d_\omega + f\circ d_\alpha}{2}\sigma_3(\phi_{0}(\omega) - \phi_{0}(\alpha)) \\
            &= 2 \widetilde{Q}_\alpha(\phi_{0}(\omega) - \phi_{0}(\alpha)) + \frac{f\circ d_\omega + f\circ d_\alpha}{2}\sigma_3(\phi_{0}(\omega) - \phi_{0}(\alpha))\,,
        \end{aligned}
    \end{multline*}
    where we have defined
    \[
        \widetilde{Q}_\alpha := f'(\tilde d) \left(\sigma_3 \frac{\phi_{0}(\omega) + \phi_{0}(\alpha) }{2} \right) \left(\sigma_3 \frac{\phi_{0}(\omega) + \phi_{0}(\alpha) }{2} \right)\transp.
    \]
    Inserting this in the previous identity, we obtain 
    \begin{align*}
       (\omega-\alpha)\frac{\phi_{0}(\omega) + \phi_{0}(\alpha) }{2} &= \left( D_m - \frac{\omega + \alpha}{2}\Id - \frac{f\circ d_\omega + f\circ d_\alpha}{2} \sigma_3 - 2 \widetilde{Q}_\alpha\right)(\phi_{0}(\omega) - \phi_{0}(\alpha))\\
        &:= \tilde{L}_{2, \alpha} (\phi_{0}(\omega) - \phi_{0}(\alpha))\,.
    \end{align*}
    We claim that $\tilde{L}_{2, \alpha}$ converges to $L_2(\omega)$, in $\mathcal{B}(H^1(\R), L^2(\R))$, as~$\alpha$ tends to~$\omega$. 
        For the first two terms, this is clear. For the third term, we use 
    \[
    	\norm{(f\circ d_\omega - f\circ d_\alpha) \Psi}_{L^2} \leq  \norm{f\circ d_\omega - f\circ d_\alpha}_{L^\infty} \norm{\Psi}_{L^2}
    \]
    and the uniform continuity from the previous proof.
        For the last term, we use the uniform convergence on bounded intervals (where $d_\omega$ and $d_\alpha$ are bounded away from zero).
    In order to treat large values of $x$, we factorize out $\tilde{d}$ in the definition of $\widetilde{Q}_\alpha$:
    \[
        \widetilde{Q}_\alpha = \tilde{d} f'(\tilde{d}) \left(\sigma_3 \frac{\phi_{0}(\omega) + \phi_{0}(\alpha) }{2 \sqrt{\tilde{d}}} \right) \left(\sigma_3 \frac{\phi_{0}(\omega) + \phi_{0}(\alpha) }{2 \sqrt{\tilde{d}}} \right)\transp,
    \]
   where the factors in parenthesis are bounded in view of~\eqref{eq:comparable-components-apendix} and their pre-factor $\tilde{d} f'(\tilde{d})$ vanishes at infinity since $\lim_{s\to0^+} s f'(s) = 0$ by Assumption~\ref{Assumption_general_nonlinearity}.

  The convergence in $\mathcal{B}(H^1(\R), L^2(\R))$ implies norm resolvent convergence and, in particular, convergence of the spectrum.
  Recall that for each fixed $\omega$, $L_2(\omega)$ has zero as an isolated simple eigenvalue with an odd eigenfunction.
    Therefore, for sufficiently small~$\abs{\alpha -\omega}$, 
    \[
        \tilde{L}_{2, \alpha} : H^{1,{\rm even}}(\R) \mapsto L^{2,{\rm even}}(\R)
    \]
    is invertible and we conclude that 
    \[
        \frac{\phi_0(\omega)- \phi_0(\alpha)}{\omega-\alpha} = (\tilde{L}_{2, \alpha})^{-1} \frac{\phi_{0}(\omega) + \phi_{0}(\alpha) }{2}\,.
    \]
    This shows the differentiability of $\omega \mapsto \phi_0(\omega)$ as a function with values in $H^{1}(\R,\C^2)$, with $\partial_\omega\phi_0(\omega) = \left(L_2( \omega)\right)^{-1} \phi_0(\omega)$. This function is continuous in $\omega$ because $\left(L_2( \omega)\right)^{-1}$ is bounded on $L^{2,{\rm even}}(\R)$.
     Taking the inner product with $\phi_0(\omega)$ concludes the proof.
\end{proof}

\section{Proof of Lemma~\ref{technical_lemma}}\label{Appendix_proof_technical_lemma}
We start from~\eqref{Lemma_ineq_Re_z2_inequality}, where we introduce the parameter $\theta:=\frac{\mu\normt{Q}}{4(t+\omega)}$ for shortness, and study the function
\[
    g_\theta(\alpha,\eta) := (1-\alpha)\eta^2 + (1+\alpha) - \frac{4\eta \theta}{1+\alpha}\,.
\]
Since we do not know the value of~$\eta>0$,
we need to study the question: Given $\xi\geq0$, for which $\theta>0$ the inequality
\[
     \inf_{\eta>0}\max_{\alpha\in[0,1]} g_\theta(\alpha,\eta) \geq \xi
\]
is verified?
Notice that 
\[
    \partial_\alpha g_\theta(\alpha,\eta) = -\eta^2 + \frac{4 \theta}{(1+\alpha)^2}\eta + 1
\]
is a decreasing function of~$\alpha>0$ for any $\eta>0$. Thus,
\[
    \partial_\alpha g_\theta(\alpha,\eta) \geq \partial_\alpha g_\theta(1,\eta) = -\eta^2 + \theta\eta + 1\,.
\]
Defining $\eta_\star$ as the positive number such that $-\eta_\star^2 +\theta\eta_\star +1=0$, i.e.,
\[
    \eta_\star = \frac{ \theta + \sqrt{\theta^2 + 4}}{2} > 1\,,
\]
we have for all $\eta\in(0,\eta_\star]$ that
\[
    \partial_\alpha g_\theta(\alpha,\eta) \geq  -\eta^2 + \theta\eta + 1 \geq  -\eta_\star^2 + \theta\eta_\star + 1 = 0\,.
\]
Thus, for $\eta\in(0,\eta_\star]$, $\alpha\mapsto g_\theta(\alpha,\eta)$ is an increasing function on~$[0,1]$ and its maximum is $g_\theta(1,\eta) = 2(1 - \eta \theta)$.

Similarly, for any $\alpha>0$,
\[
    \partial_\alpha g_\theta(\alpha,\eta) \leq \partial_\alpha g_\theta(0,\eta) = -\eta^2 + 4 \theta\eta + 1\,.
\]
Defining $\eta_\circ$ as the positive number such that $-\eta_\circ^2 +4\theta\eta_\circ +1=0$, i.e., $\eta_\circ:=2\theta + \sqrt{4\theta^2+1}$, we have for all $\eta\geq\eta_\circ$ that
\[
    \partial_\alpha g_\theta(\alpha,\eta) \leq  -\eta^2 + 4\theta\eta + 1 \leq  -\eta_\circ^2 + 4\theta\eta_\circ + 1 = 0\,.
\]
Thus, for $\eta\geq\eta_\circ$, $\alpha\mapsto g_\theta(\alpha,\eta)$ is a decreasing function on~$[0,1]$ and its maximum is $g_\theta(0,\eta) = \eta^2 - 4\theta\eta + 1$.

Finally, for $\eta\in(\eta_\star,\eta_\circ)$, $\alpha\mapsto g_\theta(\alpha,\eta)$ is increasing then decreasing on~$[0,1]$, and reaches its maximum  at $\alpha_+:=\sqrt{\frac{4\eta \theta}{\eta^2-1}} - 1$ with value
\[
    h(\eta) := g_\theta(\alpha_+,\eta) = 2\left(\eta^2-2\sqrt{\theta}\sqrt{\eta(\eta^2-1)}\right)\,.
\]

Now, since $\eta\mapsto g_\theta(0,\eta) = \eta^2 - 4\theta\eta + 1\geq2$ on~$(\eta_\circ,+\infty)$, $\eta\mapsto g_\theta(1,\eta) = 2(1 - \eta \theta)$ is decreasing ($\theta>0$), $h(\eta_\star) = 2(1 - \eta_\star \theta)$ by construction, and (using the identity $\eta_\star^2= \theta \eta_\star + 1$)
\[
    h'(\eta_\star) = 2\left(2\eta_\star - \sqrt{\theta}\frac{3\eta_\star^2-1}{\sqrt{\eta_\star(\eta_\star^2-1)}}\right) = -2\theta<0\,,
\]
we conclude that
\[
    \inf_{\eta>0}\max_{\alpha\in[0,1]} g_\theta(\alpha,\eta) = \inf_{\eta_\star< \eta < \eta_\circ } h(\eta) = h(\eta_\theta) = 2\eta_\theta^2\frac{3-\eta_\theta^2}{3\eta_\theta^2 -1},
\]
where $\eta_\theta$ is defined as the unique real number in~$(1,+\infty)$ such that
\begin{equation}\label{Def_eta_theta}
    4\frac{\eta_\theta^3(\eta_\theta^2-1)}{(3\eta_\theta^2-1)^2} = \theta.
\end{equation}
Note that the l.h.s.\ of~\eqref{Def_eta_theta} being a strictly increasing function on~$(1,+\infty)$ ---hence one-to-one from $(1,+\infty)$ to $(0,+\infty)\ni \theta$---, it gives that~\eqref{Def_eta_theta} has a unique solution $\eta_\theta$ in~$(1,+\infty)$ and, on another hand that $\eta_\star<\eta_\theta<\eta_\circ$. Indeed, recalling that $1<\eta_\star<\eta_\circ$ and the equations they respectively solve, we have
\begin{align*}
        \eta_\star<\eta_\theta &\Leftrightarrow \theta > 4\frac{\eta_\star^3(\eta_\star^2-1)}{(3\eta_\star^2-1)^2} = \frac{4\eta_\star^4}{(3\eta_\star^2-1)^2}\theta \Leftrightarrow \eta_\star^2>1
    \intertext{and}
    \eta_\theta<\eta_\circ &\Leftrightarrow \theta < 4\frac{\eta_\circ^3(\eta_\circ^2-1)}{(3\eta_\circ^2-1)^2} = \frac{16\eta_\circ^4}{(3\eta_\circ^2-1)^2}\theta \Leftrightarrow 3\eta_\circ^2-1 < 4\eta_\circ^2\,.
\end{align*}
Moreover, as a by-product, the problem has a solution only for $\xi<2$, since
\[
    \inf_{\eta>0}\max_{\alpha\in[0,1]} g_\theta(\alpha,\eta) = \inf_{\eta_\star< \eta < \eta_\circ } h(\eta) = h(\eta_\star) = 2(1 - \eta_\star \theta) < 2\,.
\]

Summarizing, we have obtained $\eta_\theta>1$ defined by~\eqref{Def_eta_theta} for which
\[
    \inf_{\eta>0} \max_{\alpha\in[0,1]} g_\theta(\alpha,\eta) \geq \xi \quad \Leftrightarrow \quad h(\eta_\theta) \geq \xi.
\]

Now, keeping in mind that $\xi\in[0,2)$, that $\eta\mapsto 2\eta^2\frac{3-\eta^2}{3\eta^2 -1}$ is strictly decreasing and that~$\eta\mapsto 4\frac{\eta_\theta^3(\eta_\theta^2-1)}{(3\eta_\theta^2-1)^2}$ is stritly increasing, we have
\begin{align*}
    \inf_{\eta>0} \max_{\alpha\in[0,1]} g_\theta(\alpha,\eta) \geq \xi \Leftrightarrow \quad h(\eta_\theta) \geq \xi \quad &\Leftrightarrow \quad \eta_\theta \leq \frac{\sqrt{6-3\xi + \sqrt{9(2-\xi)^2+8\xi}}}{2}\\
    &\Leftrightarrow \quad \frac{\mu\normt{Q}}{4(t+\omega)}=:\theta \leq  \theta_+(\xi)\,,
\end{align*}
where $\theta_+$ is defined in~\eqref{General_nonlinearity_lower_bound_Re_z2_Def_theta_plus}, that is, $\theta_+:[0,2)\to(0,3\sqrt{3}/8]$ with
\[
    \theta_+(\xi):= 2\frac{\left(2 - 3\xi + \sqrt{9(2-\xi)^2+8\xi}\right) \left(6-3\xi + \sqrt{9(2-\xi)^2+8\xi}\right)^{\frac{3}{2}}}{\left(14 - 9 \xi + 3 \sqrt{9(2-\xi)^2+8\xi}\right)^2}\,. \tag*{\qed} 
\]

\section{Details on some computations.}

\subsection{\texorpdfstring{$L^\infty$}{L-infinity}-norms of terms involving \texorpdfstring{$\tilde{u}$}{u} and \texorpdfstring{$\tilde{v}$}{v}.}\label{Appendix_nonrelativistic_expansions}
We give here the details on computing several $L^\infty$-norms that we need in Section~\ref{Section_minmax_principle_and_Lmu}.

\textbf{Expansion of~$\norm{(\tilde{v}^2-\tilde{u}^2)^p}_\infty$}. We have
{\allowdisplaybreaks
\begin{align}\label{Nonrelativistic_expansion_v2_minus_u2_to_p}
    0 < (\tilde{v}^2-\tilde{u}^2)^p &= (p+1)(m-\omega)\frac{1-\tanh^2}{1-\nu\tanh^2} = (p+1)\frac{m^2-\omega^2}{m-\omega+2\omega\cosh^2} \nonumber\\
        &\leq (p+1)(m-\omega)
        \begin{aligned}[t]
            &=\norm{(\tilde{v}^2-\tilde{u}^2)^p}_\infty = (\tilde{v}^2-\tilde{u}^2)^p(0) \\
            &= \frac{p+1}{2m}\kappa^2 + \frac{p+1}{8m^3}\kappa^4 + O\!\left(\kappa^6\right).
        \end{aligned}
\end{align}
}%

\textbf{Expansion of~$\norm{(\tilde{v}^2-\tilde{u}^2)^{p-1}\tilde{u}\tilde{v}}_\infty$}. Defining $h_3(y):=\frac{y\left(1-y^2\right)}{\left(1-\nu y^2\right)^2}$, we have
\[
    (\tilde{v}^2-\tilde{u}^2)^{p-1}\tilde{u}\tilde{v} = (p+1) \sqrt{\nu} (m-\omega) h_3(\tanh)\,.
\]
Since $h_3$, defined on~$(-1,1)$, attains at $\pm\sqrt{y_3}$, with $y_3:=\frac{ - 3(1-\nu) + \sqrt{9(1-\nu)^2+4\nu}}{2\nu}$, its extrema $\pm\frac{\sqrt{y_3}(1-y_3)}{(1-\nu y_3)^2}$, we have the non-relativistic expansion
\begin{align}\label{Nonrelativistic_expansion_kappa3_term}
    \norm{(\tilde{v}^2-\tilde{u}^2)^{p-1}\tilde{u}\tilde{v}}_\infty &= (p+1) \sqrt{\nu} (m-\omega) \norm{h_3}_\infty \nonumber\\
        &=(p+1)(m-\omega)\frac{\sqrt{y_3}(1-y_3)}{(1-\nu y_3)^2} = \frac{p+1}{6\sqrt{3}m^2}\kappa^3\!\left(1+O\!\left(\kappa^2\right)\right).
\end{align}

\textbf{Expansion of~$\norm{(\tilde{v}^2-\tilde{u}^2)^{p-1}\tilde{u}^2}_\infty$}. Since on~$[0,1)$, $h_1(y):=\frac{1-y}{1-\nu y}\frac{\nu y}{1-\nu y}$ is nonnegative with a maximum $\frac{\nu}{4(1-\nu)} = \frac{m-\omega}{8\omega}$ at $\frac{1}{2-\nu}$, we have
\begin{multline}\label{Nonrelativistic_expansion_v2_minus_u2_to_p_minus_1_u2}
	0 \leq (v^2-u^2)^{p-1}u^2 = (p+1)(m-\omega)h_1\left(\tanh^2\right) \\
	\leq (p+1)\frac{(m-\omega)^2}{8\omega} = \norm{(\tilde{v}^2-\tilde{u}^2)^{p-1}\tilde{u}^2}_\infty = \frac{p+1}{32 m^3}\kappa^4\!\left(1+O\!\left(\kappa^2\right)\right).
\end{multline}

\textbf{Expansion of~$\norm{(\tilde{v}^2-\tilde{u}^2)^p - \frac{(p+1)\kappa^2}{2m\cosh^2}}_\infty$}. Since $h_2(y):= \frac{1}{m-\omega+2\omega y} - \frac{1}{2m y}$ is positive on~$(0,+\infty)$ with a maximum $\frac{1}{m}\frac{\sqrt{m}-\sqrt{\omega}}{\sqrt{m}+\sqrt{\omega}}$ at $\frac{1}{2}\left(1+\sqrt{\frac{m}{\omega}}\right)$, we have
{\allowdisplaybreaks
\begin{multline}\label{Nonrelativistic_expansion_reminder_type_1}
    0<(\tilde{v}^2-\tilde{u}^2)^p - \frac{(p+1)\kappa^2}{2m\cosh^2} = (p+1)\kappa^2h_2\left(\cosh^2\right) \\
    \leq (p+1) \frac{\kappa^2}{m}\frac{\sqrt{m}-\sqrt{\omega}}{\sqrt{m}+\sqrt{\omega}} 
        \begin{aligned}[t]
            &= \norm{(\tilde{v}^2-\tilde{u}^2)^p - \frac{(p+1)\kappa^2}{2m\cosh^2}}_\infty \\
            &= (p+1)\frac{\kappa^4}{8 m^3}\!\left(1+O\!\left(\kappa^2\right)\right).
        \end{aligned}
\end{multline}
}%

\subsection{Bounds on derivatives of $h \in F_k$}\label{Appendix_Bound_derivatives_Legendre}
For fixed $\mathfrak{s}$ defined in~\eqref{Def_s}, we denote by
\[
    p_j(x) = P_{\mathfrak{s}}^{\mathfrak{s}+1-j}(\tanh(p\kappa x)), \quad j = 1, \ldots, \lceil \mathfrak{s} \rceil\,,
\]
the eigenfunctions of the Schr{\"o}\-dinger operator 
\[
    -\frac{1}{2}\partial_x^2 + p^2 \kappa^2 V_{\textrm{PT}}(\mathfrak{s}, p\kappa x) = -\frac{1}{2}\partial_x^2 - p^2 \kappa^2 \frac{\mathfrak{s}(\mathfrak{s}+1)}{2\cosh^2(p\kappa x)}\,.
\]
They satisfy the eigenvalue equation
\[
    - p_j'' (x) = p^2 \kappa^2 \left( \frac{\mathfrak{s}(\mathfrak{s}+1)}{\cosh^2(p\kappa x)} - (\mathfrak{s}+1-j)^2 \right) p_j(x)\,.
\]
We want to bound the first and second derivatives of $h \in \vect\{p_1, \ldots, p_k\}$.
First,
\[
    \sup_{h \in \vect\{p_1, \ldots, p_k\} \setminus\{0\}} \frac{\norm{h'}}{\norm{h}} =\max_{j \in \{1, \cdots, k\}}  \frac{\norm{p_j'}}{\norm{p_j}} \quad \textrm{ and }  \quad \sup_{h \in \vect\{p_1, \ldots, p_k\} \setminus\{0\}} \frac{\norm{h''}}{\norm{h}} =\max_{j \in \{1, \cdots, k\}}  \frac{\norm{p_j''}}{\norm{p_j}}\,.
\]

For the first bound, we multiply the eigenvalue equation by $p_j$, integrate by parts in the first term and bound $\cosh^{-2}\leq1$ in order to obtain
\[
    \norm{p_j'}^2 = p^2 \kappa^2 \left( \pscal{p_j, \frac{\mathfrak{s}(\mathfrak{s}+1)}{\cosh^2(p\kappa \cdot)}p_j} - (\mathfrak{s}+1-j)^2\norm{p_j}^2 \right) \leq p^2 \kappa^2 \mathfrak{s}(\mathfrak{s}+1)\norm{p_j}^2.
\]

For the second bound, we take the norm on both sides of the eigenvalue equation and, since $0<\mathfrak{s}+1-j\leq \mathfrak{s}$ and $0<(\mathfrak{s}+1-j)^2\leq \mathfrak{s}^2<\mathfrak{s}(\mathfrak{s}+1)$ for $j\in\llbracket 1, \lceil \mathfrak{s} \rceil \rrbracket$, we obtain for all $j\in\llbracket 1, \lceil \mathfrak{s} \rceil \rrbracket$ that
\begin{multline*}
	\frac{\norm{p_j''}}{\norm{p_j}} \leq p^2 \kappa^2 \norm{\frac{\mathfrak{s}(\mathfrak{s}+1)}{\cosh^2(p\kappa \cdot)} - (\mathfrak{s}+1-j)^2}_\infty \\
    	= p^2 \kappa^2 \max\left\{\mathfrak{s}(\mathfrak{s}+1) - (\mathfrak{s}+1-j)^2, (\mathfrak{s}+1-j)^2 \right\} \leq p^2 \kappa^2 \mathfrak{s}(\mathfrak{s}+1)\,.
\end{multline*}    

    \addtocontents{toc}{\protect\setcounter{tocdepth}{1}}

\end{document}